\documentclass[a4paper,11pt]{article}
% This is a template for producing LIPIcs articles.  See
% lipics-manual.pdf for further information.  for A4 paper format use
% option "a4paper", for US-letter use option "letterpaper" for british
% hyphenation rules use option "UKenglish", for american hyphenation
% rules use option "USenglish" for section-numbered lemmas etc., use
% "numberwithinsect"
 
\usepackage{microtype}%if unwanted, comment out or use option "draft"
\usepackage{mathtools,amsmath} \usepackage{amssymb,amsthm}
\usepackage[T1]{fontenc}
\usepackage{graphicx} \usepackage{xspace} \usepackage{xypic}
\usepackage[numbers]{natbib} \usepackage{hyperref,color}\usepackage{pdfpages}
\usepackage{fullpage}
%\usepackage[mathlines]{lineno} 

% \graphicspath{{./graphics/}}%helpful if your graphic files are in another directory
\graphicspath{{figures/}}
\bibliographystyle{plainurl}% the recommended bibstyle

% Author
% macros::begin %%%%%%%%%%%%%%%%%%%%%%%%%%%%%%%%%%%%%%%%%%%%%%%%
\title{Voronoi Diagrams for a Moderate-Sized Point-Set in a Simple
  Polygon\thanks{This research was supported by the MSIT(Ministry of Science and ICT), Korea, under the SW Starlab support program(IITP-2017-0-00905) supervised by the IITP(Institute for Information \& communications Technology Promotion)}}

\author{Eunjin Oh\thanks{Pohang University of Science and Technology,
Korea. Email: {\tt{\{jin9082, heekap\}@postech.ac.kr}}} 
\and Hee-Kap Ahn\footnotemark[2]}
              
%%%%%%%%%%%%%%%%%%%%%%%%%%%%%%
% Theorem/Proof Environments %
%%%%%%%%%%%%%%%%%%%%%%%%%%%%%%
\newtheorem{theorem}{Theorem} 
\newtheorem{lemma}[theorem]{Lemma}

\newtheorem{corollary}[theorem]{Corollary}

\def\polylog{\operatorname{polylog}}
\newcommand{\ch}{\ensuremath{\textsc{ch}}}
\newcommand{\bd}{\ensuremath{\partial}}
\newcommand{\vd}{\ensuremath{\textsf{VD}}}
\newcommand{\kvd}{\ensuremath{k\textnormal{-}\textsf{VD}}}
\newcommand{\fvd}{\ensuremath{\textsf{FVD}}}
\begin{document}
\date{}
%\linenumbers
\maketitle

\begin{abstract}
  Given a set of sites in a simple polygon, a geodesic Voronoi
  diagram of the sites partitions the polygon into regions based on distances to
  sites under the geodesic metric.
  % We consider the nearest-point Voronoi diagram, the farthest
  % Voronoi diagram and the higher-order Voronoi diagram inside a
  % simple polygon.
  We present algorithms for computing the geodesic nearest-point,
  higher-order and farthest-point Voronoi diagrams of $m$ point sites in a
  simple $n$-gon, which improve the best known ones for $m \leq
  n/\polylog n$.  Moreover, the algorithms for the geodesic
  nearest-point and farthest-point Voronoi diagrams
  are optimal for $m \leq n/\polylog n$.  This partially
  answers a question posed by Mitchell in the Handbook of
  Computational Geometry.
\end{abstract}
\section{Introduction}
The \emph{geodesic distance} between any two points $x$ and $y$ contained in
a simple polygon is the length of the shortest path  in the polygon
connecting $x$ and $y$.
A geodesic Voronoi diagram of a set $S$ of $m$ sites contained
in a simple polygon $P$ partitions $P$ into regions based on
distances to sites of $S$ under the geodesic metric.  The \emph{geodesic
nearest-point Voronoi diagram} of $S$ %contained in $P$ 
partitions
$P$ into cells, exactly one cell per site, such that every point
in a cell has the same nearest site of $S$ under the geodesic metric.
The higher-order Voronoi diagram, also known as the order-$k$
Voronoi diagram, is a generalization of the nearest-point Voronoi diagram.
For an integer $k$ with $1\leq k\leq m-1$,
the \emph{geodesic order-$k$ Voronoi diagram} of $S$ partitions $P$
into cells, at most one cell per $k$-tuple of sites, such that every
point in a cell has the same $k$ nearest sites under the geodesic
metric. Thus, the geodesic order-$1$ Voronoi diagram is the geodesic
nearest-point Voronoi diagram.
The geodesic order-$(m-1)$ Voronoi diagram of $m$ sites
is also called the geodesic farthest-point Voronoi diagram.
The \emph{geodesic farthest-point Voronoi diagram} of $S$ partitions
$P$ into cells, at most one cell per site, such that every
point in a cell has the same farthest site under the geodesic metric.

% We denote it by (or $\vd$ if it is understood in the context). We
% denote it by $\kvd[S]$ (or $\kvd$). We denote it by $\fvd[S]$ (or
% $\fvd$).

% In this paper, we consider three Voronoi diagrams of $S$; the
% geodesic nearest-point Voronoi diagram, the geodesic order-$k$
% Voronoi diagram, and the geodesic farthest-point Voronoi diagram.
In this paper, we study the problem of computing the geodesic nearest-point,
higher-order and farthest-point Voronoi diagrams of a set $S$ of $m$ point sites 
contained in a simple $n$-gon $P$.
% It is known that the edges and vertices of the Voronoi diagrams form
% \ccheck{a connected planar graph~\cite{Aronov-VD-1989,FVD,LL-KVD-2013}.}
% \complain{Is the nearest-point VD connected?} 
Each edge of a geodesic Voronoi diagram is either a hyperbolic arc
or a line segment consisting of points equidistant from two sites under the
geodesic metric~\cite{Aronov-VD-1989,FVD,LL-KVD-2013}.
%  Let $n$ denote the number of vertices of $P$ and $m$ denote the number of
%  point sites in $S$.
The boundary between any two neighboring cells of a geodesic Voronoi
diagram is a chain of $O(n)$ edges.
Each end vertex of the boundary is of degree 1 or 3 under
the assumption that no point in the plane is equidistant from four distinct sites
while every other vertex is of degree 2.
There are $\Theta(k(m-k))$ degree-3 vertices 
% in the geodesic 
% nearest-point Voronoi diagram and the geodesic farthest-point Voronoi
% diagram, and $O(k(m-k))$ degree-3 vertices and $O(k(m-k)+n)$ degree-2 vertices
in the geodesic order-$k$
Voronoi diagram of $S$~\cite{LL-KVD-2013}. 
Every degree-3 vertex is equidistant from three
sites and is a point where three Voronoi cells meet.
The number of degree-2 vertices is $\Theta(n)$ for both the geodesic 
nearest-point Voronoi diagram and the geodesic farthest-point Voronoi
diagram~\cite{Aronov-VD-1989,FVD}.
For the geodesic order-$k$ Voronoi diagram, the number of degree-2 vertices
is $O(kn)$~\cite{LL-KVD-2013}, but this bound is not tight.

% \paragraph{Previous Work.}
The first nontrivial algorithm for
computing the geodesic nearest-point Voronoi diagram was given by
Aronov~\cite{Aronov-VD-1989} in 1989, which takes $O((n+m)\log^2
(n+m))$ time.  Later, Papadopoulou and Lee~\cite{PL-VD-1998}
improved the running time to $O((n+m)\log(n+m))$. However,
there has been no  progress since then while the best known
lower bound of the running time remains to be $\Omega(n+m\log m)$.
In fact, Mitchell posed a question whether this gap can be resolved in
the Handbook of Computational Geometry~\cite[Chapter 27]{m-gspno-00}.

For the geodesic order-$k$ Voronoi diagram, the first nontrivial
algorithm was given by Liu and Lee~\cite{LL-KVD-2013} in 2013
for polygonal domains with holes.  Their algorithm works for $m$ point sites
in a polygonal domain with a total of $n$ vertices and takes
$O(k^2(n+m)\log(n+m))$ time.
Thus, this algorithm also works for a simple polygon.
They presented an asymptotically tight combinatorial
complexity of the geodesic order-$k$ Voronoi diagram of $m$ points
in a polygonal domain with
a total of $n$ vertices, which is $\Theta(k(m-k)+kn)$.  However, it is not tight for a
simple polygon: % whose complexity is $O(k(m-k)+n)$.
% \complain{Is this known for the geodesic order-$k$ VD? Which paper?}
the geodesic order-$(m-1)$ Voronoi diagram of $m$ points in a simple
$n$-gon has complexity $\Theta(n+m)$~\cite{FVD}.  There is no bound
better than the one by Liu and Lee known for the complexity of the
geodesic order-$k$ Voronoi diagram in a simple polygon.

For the geodesic farthest-point Voronoi diagram, the first nontrivial
algorithm was given by Aronov et al.~\cite{FVD} in 1993, which takes
$O((n+m)\log(n+m))$ time.  While the best known lower bound is
$\Omega(n+m\log m)$, there has been no progress until Oh et
al.~\cite{fvd_boundary} presented an $O((n+m)\log\log n)$-time
algorithm for the special case that all sites are on the boundary of
the polygon in 2016.  They also claimed that their algorithm
  can be extended to compute the geodesic farthest-point Voronoi
  diagram for any $m$ points contained in a simple $n$-gon in
  $O(n\log\log n+m\log(n+m))$ time. 
% \footnote{\ccheck{Details will be
%    found in the journal version of their paper.}}

\paragraph{Our results.}  Our main contributions are the algorithms
for computing the nearest-point, higher-order and farthest-point
Voronoi diagrams of $m$ point sites in a simple $n$-gon, which improve the
best known ones for $m \leq n/\polylog n$.
% for computing $\vd$, $\kvd$ and $\fvd$, which are faster than the
% previous ones in the case that $m \leq n/\polylog n$.
To be specific, we present
\begin{itemize}
\item an $O(n+m\log m\log^2 n )$-time algorithm for the geodesic
  nearest-point Voronoi diagram,
\item an $O(k^2m\log m\log^2 n + \min\{nk,n(m-k)\})$-time algorithm
  for the geodesic order-$k$ Voronoi diagram, and
\item an $O(n+m\log m+m\log^2 n)$-time algorithm for the
  geodesic farthest-point Voronoi diagram.
\end{itemize}
Moreover, our algorithms close the gaps of the running times towards
the lower bounds.  Our algorithm for the geodesic nearest-point
Voronoi diagram is optimal for $m\leq n/\log^3 n$.  Since the
algorithm by Papadopoulou and Lee is optimal for $m\geq n$, our
algorithm together with the one by Papadopoulou and Lee gives the
optimal running time for computing the diagram, except for the case
that $n/\log^3 n < m < n$.

Similarly, our algorithm for the geodesic farthest-point Voronoi
diagram is optimal for $m\leq n/\log^2 n$.  Since the algorithm by
Aronov et al.~\cite{FVD} is optimal for $m\geq n$, our algorithm
together with the one by Aronov et al. gives the optimal running time
for computing the diagram, except for the case that $n/\log^2 n<m<n$.
This answers the question posed by Mitchell on the geodesic
nearest-point and farthest-point Voronoi diagrams, except for the
short intervals of $n/\polylog n<m<n$ stated above.

For the geodesic order-$k$ Voronoi diagram, we analyze an
  asymptotically tight combinatorial complexity of the diagram of $m$
  points in a simple $n$-gon, which is
$\Theta(k(m-k)+\min\{nk, n(m-k)\})$.

Other contributions of this paper are the algorithms for computing the
topological structures of the geodesic nearest-point, order-$k$ and
farthest-point Voronoi diagrams which take $O(m\log m\log^2 n)$,
$O(k^2m\log m\log^2 n)$ and $O(m\log m\log^2 n)$ time, respectively.
These algorithms allow us to obtain a dynamic data structure for
answering nearest or farthest point queries efficiently.
In this problem, we are
given a static simple $n$-gon $P$ and a dynamic point set
$S\subseteq P$.  We are allowed to insert points to $S$ and delete
points from $S$.  After processing updates, we are to find the point
of $S$ nearest (or farthest) from a query point efficiently.  This
data structure supports each query in $O(\sqrt m\log (n+m))$ time and
each update in $O(\sqrt m \log m\log^2 n)$ time, where $m$ is the number of
points in $S$ at the moment.

% Other contributions of this paper are the algorithms for computing the topological
% structures of the geodesic nearest-point, order-$k$ and farthest-point Voronoi diagrams
% which take $O(m\log m\log^2 n)$, $O(k^2m\log m\log^2 n)$ and
% $O(m\log m\log^2 n)$ times, respectively.  These algorithms allow us to obtain
% a dynamic data structure for answering 
% % \ccheck{$k$-nearest point query
% %  for $k=\{1,2,\ldots,m-1\}$.}
% nearest or farthest point queries.
% This data structure requires $O(\sqrt m\log (n+m))$ query time and
% $O(\sqrt m \log m\log^2 n)$ update time, where $m$ is the number of the current points.

\subsection{Outline}
Our algorithms for computing the geodesic nearest-point, higher-order
and farthest-point Voronoi diagrams are based on a \emph{polygon-sweep
  paradigm}.  For the geodesic nearest-point and higher-order Voronoi
diagrams, we fix a point $o$ on the boundary of the polygon and move
another point $x$ from $o$ in clockwise order along the boundary of
the polygon.  While $x$ moves along the boundary, we compute the
Voronoi diagram of sites contained in the subpolygon bounded by the
shortest path between $o$ and $x$ and the part of the boundary of $P$
from $o$ to $x$ in clockwise order.
For the geodesic farthest-point
Voronoi diagram, we sweep the polygon with a curve consisting of
points equidistant from the geodesic center of the sites.  The curve
moves from the boundary towards the geodesic center.  During the
sweep, we gradually compute the diagram restricted to the region the curve
has swept.
% For the geodesic farthest-point Voronoi diagram, we sweep the polygon
% with a curve from the boundary towards the geodesic center of the sites.
% During the sweep, we compute the diagram restricted to the region
% we have swept.

To achieve algorithms running faster than the best known ones for $m\leq n/\polylog n$, 
we first compute
the topological structure of a diagram instead of computing the
diagram itself directly.
The topological structure, which will be defined later,
represents the adjacency of the Voronoi cells and has complexity smaller than
the complexity of the Voronoi diagram.
Once we have the topological structure of a Voronoi diagram, we can 
compute the Voronoi diagram in $O(T_1+T_2\log n)$ time, where
$T_1$ denotes the complexity of the Voronoi diagram and $T_2$ denotes the
complexity of the topological structure of the diagram.

We define four types of events 
where the topological structure of the diagram changes.
To handle each event, we compute a point equidistant from three
points under the geodesic metric. There is no algorithm known
for computing a point equidistant from three points efficiently,
except an $O(n)$-time trivial algorithm.  We present an $O(\log^2 n)$-time
algorithm assuming that the data structure by Guibas and
Hershberger~\cite{shortest-path} is constructed for $P$.
To obtain
this algorithm, we apply two-level binary search on the regions of a
subdivision of the polygon. This algorithm allows us to handle each
event in $O(\polylog \{n,m\})$ time.

One application of the topological structure of a diagram
is a data structure for nearest (or farthest) point queries for a dynamic point set.
To obtain this data structure, we apply the framework given by Bentley and
Saxe~\cite{decomposable} using the algorithm for computing the
topological structure of the geodesic nearest-point (or farthest-point) Voronoi diagram.
We subdivide the dynamic point set into $\sqrt{m}$ almost equal-sized subsets, where $m$
	is the number of the input point.
Then compute the topological structure of the diagram for each subset.
We observe
that we can find the Voronoi cell of each diagram containing a query
point in $O(\log (n+m))$ time once we have the topological structure
of the diagram, which leads to the query time of $O(\sqrt{m}\log (n+m))$.

\section{Preliminaries}
Let $P$ be a simple polygon with $n$ vertices and $S$ be a set of $m$ points
contained in $P$. 
For ease of description, we use $\vd[S]$, $\kvd[S]$ and $\fvd[S]$ (or
simply $\vd$, $\kvd$ and $\fvd$ if they are understood in the context) to
denote the geodesic nearest-point, order-$k$ and farthest-point
Voronoi diagrams of $S$ in $P$, respectively.  
We assume the \emph{general position condition} that no vertex of $P$
is equidistant from two distinct sites of $S$ and no point of $P$ is
equidistant from four distinct sites of $S$. This was also assumed by
in previous work~\cite{Aronov-VD-1989,FVD,LL-KVD-2013,PL-VD-1998} on
geodesic Voronoi diagrams. 
%/ccheck{This condition can be obtained
%by applying a slight perturbation of the positions of sites in
%$S$~\cite{perturb}.}

% \ccheck{Consider the set $B$
% of points equidistant from two points $p_1$ and $p_2$ in $P$ which are 
% not necessarily in $S$.  
% The set $B$ may contain a two-dimensional region if there is a vertex of $P$
% equidistant from $p_1$ and $p_2$~\cite{FVD}. 
% There are at most two two-dimensional regions in $B$.
% Moreover, each of them is a subpolygon of $P$ bounded by a chord.
% Thus, the boundary of $B$ consists of a simple curve, at most two chords,
% and at most two polygonal chains on the boundary of $P$.
% We define the \emph{bisecting curve}
% of $p_1$ and $p_2$ to be the boundary of $B$ excluding the interiors 
% of the maximal polygonal chains on the boundary of $P$,
% which is a simple connected curve.
% We denote it by $b(p_1,p_2)$.
% Note that the bisecting curve of two sites coincides with $B$
% due to the general position condition.} \complain{Something seems wrong here.}

Consider any three points $x,y$ and $z$ in $P$.  We use $\pi(x,y)$ to
denote the shortest path (geodesic path) between $x$ and $y$ contained
in $P$, and $d(x,y)$ to denote the geodesic distance between $x$ and
$y$.  Two geodesic paths $\pi(x,y)$ and $\pi(x,z)$ do not cross each
other, but may overlap with each other.  We call a point $x'$ the
\emph{junction} of $\pi(x,y)$ and $\pi(x,z)$ if $\pi(x,x')$ is the
maximal common path of $\pi(x,y)$ and $\pi(x,z)$.  Refer to
Figure~\ref{fig:center-pseudo}(a).

Consider the set $B$ of points $q$ of $P$ satisfying $d(x,q)=d(y,q)$
  for any two points $x$ and $y$ in $P$. Since $x$ and $y$ are not necessarily contained
  in $S$, the set $B$ may contain a two-dimensional region if there is
a vertex $v$ of $P$ satisfying $d(x,v)=d(y,v)$~\cite{FVD}. However, there are at most
two such two-dimensional regions (including their boundaries) in $B$ and
the other points of $B$ form a simple
curve that is incident to the regions by the general position assumption.
 We call the curve the \emph{bisecting curve}
of $x$ and $y$ and denote it by $b(x,y)$.

Given a point $p\in P$ and a closed set $A\subseteq P$,
we slightly abuse the notation $\pi(p,A)$ to denote the shortest path contained in $P$
connecting $p$ and a point in $A$.
Similarly, we abuse the notation $d(p,A)$ to denote the 
length of $\pi(p,A)$.
It holds that $d(p,A)\leq d(p,q)+d(q,A)$
for any two points $p, q\in P$ and any closed set $A\subseteq P$.

We say a set $A\subseteq P$ is \emph{geodesically
  convex} if $\pi(x,y)\subseteq A$ for any two points $x$ and $y$ in
$A$.  The \emph{geodesic convex hull}
of $S$ is the intersection of all geodesic convex sets in $P$ that
contain $S$.
The geodesic convex hull of a set of $m$ points in $P$ can be computed
in $O(n+m\log(n+m))$ time~\cite{shortest-path}.
The \emph{geodesic center} of a simple polygon $P$ is the point $c\in P$ that
minimizes $\max_{p\in P}d(c,p)$.  The center is
unique~\cite{pollackComputingCenter} and can be computed in $O(n)$
time~\cite{1-center-jnl}.  Similarly, the geodesic center of $S$
can be defined as the point $c\in P$ that minimizes
$\max_{s\in S} d(c,s)$.  It is known that the geodesic center of a set $S$
of $m$ points in $P$
coincides
with the geodesic center of the geodesic convex hull of $S$~\cite{FVD}.
Therefore, we can compute the geodesic center of $S$ by computing the
geodesic convex hull of $S$ and its geodesic center. 
This takes $O(n+m\log(n+m))$ time in total.

\section{Computing the Geodesic Center of Points in a Simple Polygon}
We first present an $O(\log n)$-time algorithm for computing the
geodesic center of three points contained in $P$, assuming that we
have the data structure of Guibas and
Hershberger~\cite{shortest-path,Hersh-shortest-1991}.  
Using ideas from this algorithm, we
present an $O(m\log m\log^2 n)$-time algorithm for computing the geodesic
center of $m$ points in Section~\ref{sec:center-many}.  These
algorithms will be used as subprocedures % of the algorithms
for computing the Voronoi diagrams of points in $P$.

\subsection{Computing the Geodesic Center of Three Points}
\label{sec:center-three}
Let $p_1,p_2$ and $p_3$ be three points in $P$, and let $c$ be the
geodesic center of them.  The geodesic convex hull of $p_1,p_2,p_3$ is
bounded by $\pi(p_1,p_2),\pi(p_2,p_3)$, and $\pi(p_3,p_1)$.  The geodesic convex hull
may have complexity $\Omega(n)$, but its interior is bounded by at most
three concave chains.
This allows us to compute the geodesic center of it efficiently.

We first construct the data structure of Guibas and Hershberger~\cite{shortest-path,Hersh-shortest-1991} 
for $P$ that supports the geodesic distance query
between any two points in $O(\log n)$ time.
To compute $c$, we compute the shortest paths
$\pi(p_1,p_2),\pi(p_2,p_3)$, and $\pi(p_3,p_1)$.  Each shortest path
has a linear size, but we can compute them in $O(\log n)$ time using
the data structure of Guibas and Hershberger. 
Then we find a convex $t$-gon with $t\leq 6$ containing $c$ such
that the geodesic path $\pi(x,p_i)$ has the same combinatorial
structure for any point $x$ in the $t$-gon for each $i=1,2,3$.  To
find such a convex $t$-gon, we apply two-level binary search.  
Then we can compute $c$ directly in constant time inside the $t$-gon.

\paragraph{The data structure given by Guibas and Hershberger.}
% Guibas and Hershberger~\cite{shortest-path,Hersh-shortest-1991} gave a data structure of
% linear size that supports an $O(\log n)$-time shortest path query
% between a source and a destination. We call this structure the
% \emph{shortest path data structure}. They showed that this data structure
% can be constructed in linear time.
Guibas and Hershberger~\cite{shortest-path,Hersh-shortest-1991} gave a data structure of
linear size that enables us to compute 
the geodesic distance between any two query points lying
inside $P$ in $O(\log n)$ time. We call this structure the
\emph{shortest path data structure}. This data structure
can be constructed in $O(n)$ time.

In the preprocessing, they compute a number of shortest paths such
that for any two points $p$ and $q$ in $P$, the shortest path $\pi(p,q)$
consists of $O(\log n)$ subchains of precomputed shortest paths and
$O(\log n)$ additional edges that connect the subchains into one.
In the query algorithm,
they find such subchains and edges connecting them in $O(\log n)$
time.  Then the query algorithm returns the shortest path between two query
points represented as a binary tree of height $O(\log n)$~\cite{Hersh-shortest-1991}.  Therefore,
we can apply binary search on the vertices of the shortest path between any two
points.

\paragraph{Computing the geodesic center of three points: two-level
  binary search.}  Let $\triangle$ be the geodesic convex hull of
$p_1,p_2$ and $p_3$.  The geodesic center $c$ of the three points is
the geodesic center of $\triangle$~\cite{FVD}, thus is contained in
$\triangle$.  If the center lies on the boundary of $\triangle$, we
can compute it in $O(\log n)$ time since it is the midpoint of two
points from $p_1,p_2$ and $p_3$.
So, we assume that the center lies in the interior of
$\triangle$.  Let $p_i'$ be the junction of $\pi(p_i,p_j)$ and
$\pi(p_i,p_k)$ for three distinct indices $i,j$ and $k$ in
$\{1,2,3\}$.  See Figure~\ref{fig:center-pseudo}(a).

\begin{figure}
  \begin{center}
    \includegraphics[width=0.9\textwidth]{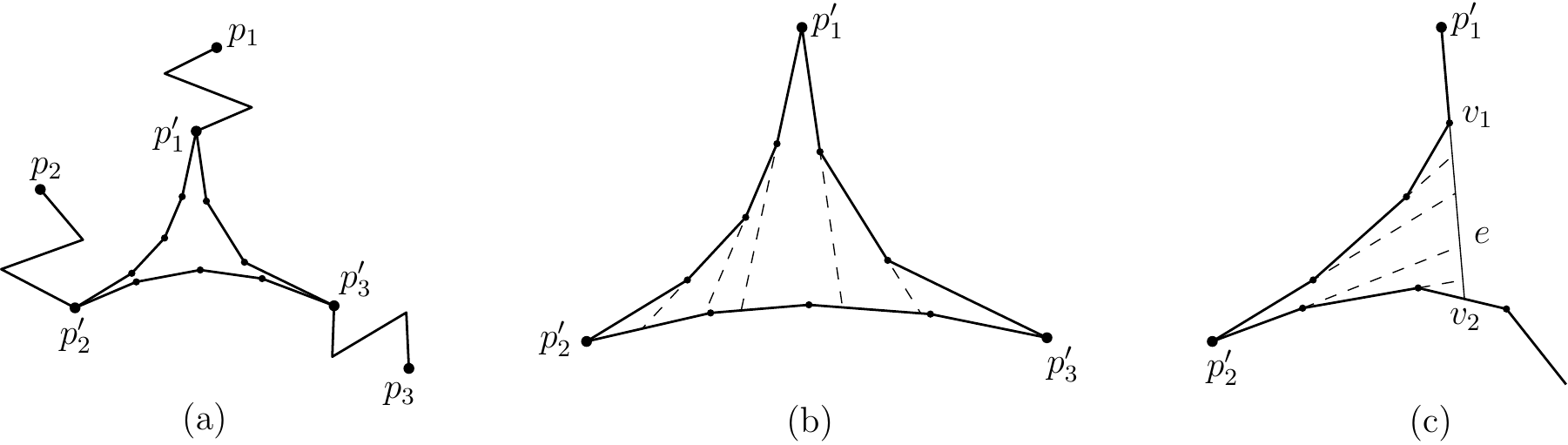}
    \caption {
      (a) $p_i'$ is the junction of $\pi(p_i,p_j)$ and $\pi(p_i,p_k)$
      for three distinct indices $i,j$ and $k$ in $\{1,2,3\}$.
      (b) The subdivision of $\triangle$ with respect to $p_1'$.
      (c) The subdivision of $e$ with respect to $p_2'$.
      \label{fig:center-pseudo}}
  \end{center}
\end{figure}

We use the following lemmas to apply two-level binary search.
Recall that we have already constructed the shortest path data structure for $P$.
\begin{lemma}[\cite{shortest-path}]
  \label{lem:last-common-vertex}
  We can compute the junctions $p_1',p_2'$ and $p_3'$ in $O(\log n)$ time.
\end{lemma}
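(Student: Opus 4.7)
The plan is to reduce the computation of each junction $p_i'$ to a handful of queries on the Guibas--Hershberger shortest-path data structure that has already been built for $P$. First, I would use the data structure to obtain, for each pair $i\neq j$, an implicit representation of $\pi(p_i,p_j)$ as a balanced binary tree of height $O(\log n)$ whose internal nodes record information about the precomputed canonical subchains that compose the path. From this representation the prefix length $d(p_i,v)$ for any vertex $v$ of $\pi(p_i,p_j)$, and the vertex at a prescribed prefix length, are both obtainable in $O(\log n)$ time. Obtaining the three tree representations, together with the three pairwise distances $d(p_i,p_j)$, already takes only $O(\log n)$ time in total.

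Next, to compute $p_1'$ (the other two junctions are symmetric) I would exploit that both $\pi(p_1,p_2)$ and $\pi(p_1,p_3)$ emanate from $p_1$ and coincide up to $p_1'$, after which they diverge. A point $q$ on $\pi(p_1,p_2)$ lies on $\pi(p_1,p_3)$ precisely when $d(p_1,q)+d(q,p_3)=d(p_1,p_3)$, so $p_1'$ is the deepest vertex of $\pi(p_1,p_2)$ satisfying this equality. Thus the junction can in principle be located by descending the binary tree representing $\pi(p_1,p_2)$, testing the middle vertex at each internal node with a single point-to-point distance query on the data structure.

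The main obstacle is that this naive descent costs $O(\log n)$ per probe and $O(\log n)$ probes, yielding $O(\log^2 n)$ per junction rather than the $O(\log n)$ that the lemma (following Guibas and Hershberger) claims. To shave off the extra logarithmic factor I would rely on the canonical decomposition: both $\pi(p_1,p_2)$ and $\pi(p_1,p_3)$ are assembled from $O(\log n)$ precomputed canonical subchains, and because they share the endpoint $p_1$ their decompositions share a common prefix. Walking the two decompositions in parallel from $p_1$ locates, in $O(\log n)$ total time, the first canonical piece where they differ; the junction must lie inside that piece. Within a single canonical subchain the two paths follow different tangent edges to a common reflex vertex, so one further constant-time or $O(\log n)$-time probe pinpoints $p_1'$. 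Repeating this for $i=2,3$ gives the three junctions within the advertised bound.
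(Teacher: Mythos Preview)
The paper does not supply its own proof of this lemma; it is stated with a citation to Guibas and Hershberger and treated as a black box. So there is no argument in the paper to compare against beyond the implicit claim that the operation is already supported by the shortest-path data structure.

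Your sketch is largely sound. The $O(\log^2 n)$ route via binary search on the tree representation of $\pi(p_1,p_2)$, using one distance query $d(q,p_3)$ per probe to test whether $q$ lies on $\pi(p_1,p_3)$, is correct and in fact already suffices for the paper's purposes, since every other subroutine in Section~3 costs $\Theta(\log^2 n)$ anyway. The sharpening to $O(\log n)$ is in the right spirit, but the specific claim that the two canonical decompositions ``share a common prefix because the paths share the endpoint $p_1$'' is not quite right as stated: the $O(\log n)$ canonical pieces of a path come from a \emph{fixed} balanced decomposition of the dual tree, independent of the query point, so the pieces of $\pi(p_1,p_2)$ and $\pi(p_1,p_3)$ need not align segment-for-segment. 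What does share a prefix is the pair of \emph{dual-tree paths} from the triangle containing $p_1$ to the triangles containing $p_2$ and $p_3$; the junction $p_1'$ is then the apex of the funnel from $p_1$ to the diagonal at which those dual paths split, and the Guibas--Hershberger structure returns that apex in $O(\log n)$ time. With that correction your plan goes through and matches what the cited reference actually does.
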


\begin{lemma}[\cite{ray_shooting}]
  \label{lem:extension}
  Given a point $p \in \triangle$ and a direction, we can find the
  first intersection point of the boundary of $\triangle$ with the ray
  from $p$ in the direction in $O(\log n)$ time.
\end{lemma}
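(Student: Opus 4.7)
The plan is to decompose the boundary of $\triangle$ into three concave chains and reduce the ray-shooting query to a binary search on the appropriate chain, using the balanced-tree representation provided by the Guibas--Hershberger shortest path data structure.

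First, I would compute the three junctions $p_1', p_2', p_3'$ in $O(\log n)$ time via Lemma \ref{lem:last-common-vertex}. These three points split $\bd\triangle$ into three subchains $\pi(p_1',p_2')$, $\pi(p_2',p_3')$, and $\pi(p_3',p_1')$, each being a portion of one of the pairwise shortest paths among the $p_i$. Each subchain is a shortest path in $P$ and hence a polygonal curve whose interior vertices are reflex vertices of $P$; viewed from any interior point of $\triangle$, and in particular from $p$, such a chain is \emph{concave}, so its vertex sequence is angularly monotone about $p$.

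Next, I would determine which of the three chains the ray hits. The three segments $\overline{pp_1'}, \overline{pp_2'}, \overline{pp_3'}$ lie inside $\triangle$ by geodesic convexity, and therefore partition the $2\pi$ of directions at $p$ into three angular intervals, one per chain. A constant-time angular comparison of the query direction against the three directions $p\to p_i'$ identifies the correct chain.

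Finally, I would perform a binary search along the identified chain to locate the edge crossed by the ray. The Guibas--Hershberger structure represents the enclosing shortest path as a balanced binary tree of height $O(\log n)$ in which the root of any subtree stores a representative ``split'' vertex of the corresponding subchain. I would descend this tree top-down: at each internal node, I compare the query direction with the direction from $p$ to the node's split vertex and, using angular monotonicity, decide which child subtree contains the crossing. Because the chain is concave relative to $p$, it admits exactly one crossing with the ray, so exactly one child need be explored at each level, giving $O(\log n)$ total descent cost and reducing the problem to an $O(1)$ line/segment intersection in the leaf.

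The main obstacle is the last step: achieving $O(\log n)$ total rather than $O(\log^2 n)$. A textbook binary search over the $\Theta(n)$ vertices of the chain, using $O(\log n)$ access to each queried vertex, would cost $O(\log^2 n)$. To avoid this, the angular test must be pushed directly into the traversal of the balanced tree so that each descent step does $O(1)$ work, and the number of ``pieces'' of precomputed shortest paths into which the chain decomposes (each descended separately) must be pruned to $O(1)$ using the uniqueness of the crossing. Making this pruning argument precise---especially near the junctions, where the chain switches from one precomputed shortest path to another---is the delicate part of the proof.
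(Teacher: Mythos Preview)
Your argument is essentially correct, but it takes a different route from the paper. The paper's proof is a two-line citation: Chazelle et al.\ already showed that ray shooting inside a region bounded by a constant number of concave chains takes $O(\log n)$ time provided each chain is given as a balanced binary search tree, and the Guibas--Hershberger structure hands you exactly such trees for $\pi(p_i',p_j')$ in $O(\log n)$ time. That is the entire proof.

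What you do instead is re-derive the relevant special case of Chazelle et al.\ from scratch: decompose $\partial\triangle$ into the three concave chains, pick the chain by an angular test at $p$, then descend the balanced-tree representation of that chain. This is a perfectly legitimate elementary argument, and it makes the lemma self-contained rather than relying on a black box. Two small points are worth tightening. First, the justification ``$\overline{pp_i'}\subset\triangle$ by geodesic convexity'' is not quite right---geodesic convexity gives you $\pi(p,p_i')\subset\triangle$, not the straight segment. The correct reason is the pseudo-triangle structure: each concave side is angularly monotone as seen from any interior point, and since the boundary winds once around $p$, the three sides partition the directions at $p$ and the three corners are directly visible. Second, your phrase ``pruned to $O(1)$ pieces'' overstates what is needed; the Guibas--Hershberger path is $O(\log n)$ precomputed pieces, you locate the one piece whose angular interval contains the query direction in $O(\log n)$ total (constant work per piece, checking its two endpoint directions), and then descend that single balanced tree in $O(\log n)$. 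No pruning to $O(1)$ is required, and the ``delicate part'' you flag near the junctions disappears once you argue this way.
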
	
\begin{proof}
Chazelle et al.~\cite{ray_shooting} showed that the first
intersection point can be found $O(\log n)$ time
if we have balanced binary search trees representing the maximal
concave curves lying on the boundary of $\triangle$.
We can obtain such balanced binary search trees from the shortest path
data structure in $O(\log n)$ time.
\end{proof}
%  Then consider $\pi(p_2',p_3')$. 
%  \ccheck{If $q$ lies on $\pi(p_2',p_3')$, the ray from $p$ goes
%  between the edges of $\pi(p_1',p_2')$ and $\pi(p_1',p_3')$ incident to $p_1'$.}
%  Thus we can check if $\pi(p_2',p_3')$ contains $q$ or not in
%  constant time. 
%  \ccheck{If it contains $q$, we apply binary search on $\pi(p_2',p_3')$.}
%  For a vertex $v$ of
%  $\pi(p_2',p_3')$, we can decide whether or not $\pi(p_2',v)$ 
%  contains $q$ in constant time because $\pi(p_2',p_3')$ is a concave curve.
%  %$\pi(p_2',p_3')$ is monotone with
%  %respect to $p_1'$, that is, any line through $p_1'$ intersects $\pi(x,y)$
%  %at most once.
%  Therefore, we can find the edge containing $q$ by
%  applying binary search and then compute $q$ in the edge in $O(\log n)$ time
%  if $q$ lies on $\pi(p_2',p_3')$.
%	
%  We apply this procedure also for $\pi(p_1',p_2')$ and
%  $\pi(p_1',p_3')$, and find the intersection point $q$ in $O(\log n)$
%  time.

\paragraph{The first level.}  Imagine that we subdivide $\triangle$
into $O(n)$ regions with respect to $p_1'$ by extending the edges of
$\pi(p_1',p_2')\cup\pi(p_1',p_3')$ towards $\pi(p_2',p_3')$.  See
Figure~\ref{fig:center-pseudo}(b).  The extensions of the edges can be
sorted in the order of their endpoints appearing along $\pi(p_2',p_3')$.
% along $\pi(p_2',p_3')$ because they are pairwise non-crossing
% and have their endpoints on $\pi(p_2',p_3')$.
Consider the
subdivision of $\triangle$ by the extensions, and assume that we can
determine which side of a given extension in $\triangle$ contains $c$ in $T(n)$ time.
Then we can compute the region of the subdivision containing $c$ in
$O(T(n)\log n)$ time by applying binary search on the extensions.  Note that
any point $x$ in the same region has the same combinatorial structure of
$\pi(x,p_1)$ (and $\pi(x,p_1')$).

We also do this for $p_2'$ and $p_3'$. Then we have three
regions whose intersection contains $c$.  Let $D$ be the
intersection of these three regions.  We can find $D$ in
constant time by the following lemma.
\begin{lemma}
The intersection $D$ is a convex polygon with at most six
edges from extensions of the regions.
\end{lemma}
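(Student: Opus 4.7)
The plan is to show (a) that each region $R_i$ obtained in the subdivision with respect to $p_i'$ contributes at most two extension-segments to $\partial R_i$, and (b) that each $R_i$ is geometrically convex. Combining these, $D = R_1 \cap R_2 \cap R_3$ is convex and its boundary contains at most $2 \cdot 3 = 6$ extension-edges.

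First I would unpack the construction of the subdivision with respect to $p_i'$ in order to count extension-edges of $R_i$. The extensions are line segments each emanating from a reflex vertex of $P$ on $\pi(p_i',p_j') \cup \pi(p_i',p_k')$ and terminating on $\pi(p_j',p_k')$, for $\{i,j,k\}=\{1,2,3\}$. As noted in the preceding paragraph, these extensions can be linearly ordered by the position of their endpoints along $\pi(p_j',p_k')$, and no two of them cross inside $\triangle$. Consequently, the cell $R_i$ containing $c$ is sandwiched between a consecutive pair $\ell_i^-, \ell_i^+$ in this ordering; the remaining portion of $\partial R_i$ lies on the three chains $\pi(p_i',p_j'), \pi(p_i',p_k'), \pi(p_j',p_k')$ bounding $\triangle$. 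In particular, $\partial R_i$ contains exactly two extension-edges.

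Second, I would establish convexity of $R_i$. The two bounding extensions $\ell_i^-, \ell_i^+$ are straight segments. The remaining pieces of $\partial R_i$ lie on shortest paths that bend only at reflex vertices of $P$; by the way the extensions were constructed (each starting at a reflex vertex along the tangent direction of its incoming edge), these reflex vertices point outward from $R_i$, so the three chain-pieces are concave with respect to $R_i$. Thus $R_i$ is convex. As an intersection of three convex regions, $D$ is then convex, and its boundary can contain at most the two extension-edges coming from each $R_i$, for a total of at most six extension-edges.

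The step I expect to be the main obstacle is the convexity argument in the second step, since it requires verifying that the reflex vertices on \emph{all three} chain-pieces bounding $R_i$ bend outward from $R_i$. This will likely require a brief case analysis depending on which of the three chains host the reflex vertices that are the apexes of $\ell_i^-$ and $\ell_i^+$ (same chain versus one on each of $\pi(p_i',p_j')$ and $\pi(p_i',p_k')$); once convexity is in hand, the edge count of $D$ and its convexity follow immediately.
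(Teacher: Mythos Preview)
Your counting of extension-edges (two per region, six total) is correct and matches the paper. The gap is in step~(b): the individual regions $R_i$ are \emph{not} convex.

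Each $R_i$ (the paper writes $C_i$) is bounded by two straight extension segments meeting at an apex vertex on $\pi(p_i',p_j')\cup\pi(p_i',p_k')$, together with a portion of the \emph{opposite} chain $\pi(p_j',p_k')$. Since $\triangle$ is a geodesic triangle, each bounding geodesic is concave toward the interior of $\triangle$: its reflex vertices of $P$ bulge \emph{into} $\triangle$, hence into $R_i$. So the $\pi(p_j',p_k')$-piece of $\partial R_i$ is concave from $R_i$'s side, and $R_i$ is typically a non-convex wedge with a dented base. Your ``reflex vertices point outward'' reasoning is valid at the apex---that is exactly why $\partial R_i$ carries no subarc of $\pi(p_i',p_j')$ or $\pi(p_i',p_k')$---but it does not apply to the opposite chain, where the extensions play no role.

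The paper's argument is different and sidesteps this. It observes that the \emph{only} concave portion of $\partial C_i$ lies on $\pi(p_j',p_k')$, and that this particular chain does not appear on $\partial C_j$ or $\partial C_k$; those two regions are wedges whose closures meet $\pi(p_j',p_k')$ only at their apex vertices on $\partial\triangle$. Hence in the intersection $D=C_1\cap C_2\cap C_3$ no positive-length arc of any of the three geodesic chains can survive, so $\partial D$ is contained in the union of the six straight extension segments. Convexity of $D$ and the edge bound then follow. To repair your proof you need this cross-covering observation rather than convexity of the individual $R_i$.
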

\begin{proof}
	Let $C_i$ be the region of the subdivision with respect to
	$p_i'$ containing $c$ for each $i=1,2,3$.  The boundary of $C_i$ consists
	of two extensions and a part of $\pi(p_j', p_k')$, where
	$j$ and $k$ are distinct indices in
	$\{1,2,3\}\setminus\{i\}$.  This means that $C_i$ does not
	contain any concave curve which comes from $\pi(p_i',p_j')$ or
	$\pi(p_i',p_k')$.  Therefore, the intersection of the
	three regions is a convex polygon with at most six edges.
\end{proof}

We do not subdivide $\triangle$ % into $O(n)$ cells
explicitly.  Because we have $\pi(p_1',p_2')$ and
$\pi(p_1',p_3')$ in binary trees of height $O(\log n)$, we
can apply binary search on the extensions of the edges of the
geodesic paths 
without subdividing $\triangle$ explicitly.
In this case, during the binary search, 
we compute the extension of a given edge of $\pi(p_1',p_2')\cup\pi(p_1',p_3')$
using Lemma~\ref{lem:extension}, which takes $O(\log n)$ time.

There is a vertex $p$ on the boundary of $\triangle$
such that for any point $x$ contained in $D$ we have
$d(p_1,x)=d(p_1,p)+\|p-x\|$, where $\|p-x\|$ is the
Euclidean distance between $p$ and $x$.
% For any point $x$ inside $D$, we have
% $d(p_1,x)=d(p_1,p)+\|p-x\|$ for some (common) vertex $p$ on
% the boundary of $\triangle$, where $\|p-x\|$ is the
% Euclidean distance between $p$ and $x$.
Moreover, we
already have $p$ from the computation of the region
containing $c$ in the subdivision with respect to $p_1'$.
The same holds for $p_2$ and $p_3$.
Therefore, we can compute the point $c$ that minimizes the
maximum of $d(c,p_1), d(c,p_2)$ and $d(c,p_3)$ in constant
time inside $D$.

Therefore, we have the following lemma.
\begin{lemma}
  Assuming that we can determine which side of
  an extension in $\triangle$ contains $c$ in $T(n)$
  time, we can compute the geodesic center $c$ in $O((T(n)+\log n)\log
  n)$ time.
\end{lemma}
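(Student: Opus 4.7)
The plan is to carry out the two-level binary search that the preceding paragraphs set up, bounding the cost at each level and then showing that once the region $D$ is identified the center can be pinned down in constant time. The preprocessing (shortest path data structure on $P$, and computing $\pi(p_1,p_2),\pi(p_2,p_3),\pi(p_3,p_1)$ and the junctions $p_1',p_2',p_3'$) fits in $O(\log n)$ by Lemma~\ref{lem:last-common-vertex}, so it does not affect the stated bound.

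For each $i\in\{1,2,3\}$ I would perform one outer binary search to locate the region of the subdivision of $\triangle$ with respect to $p_i'$ that contains $c$. Since $\pi(p_i',p_j')$ and $\pi(p_i',p_k')$ are available as binary trees of height $O(\log n)$ from the shortest path data structure, the set of extensions is implicitly sorted along $\pi(p_j',p_k')$, and a binary search on these trees has $O(\log n)$ steps. At each step the candidate edge of $\pi(p_i',p_j')\cup\pi(p_i',p_k')$ is extended to the opposite chain in $O(\log n)$ time using Lemma~\ref{lem:extension}, and then the assumed oracle decides in $T(n)$ time which side of that extension contains $c$. Thus the outer search for a single $p_i'$ runs in $O((T(n)+\log n)\log n)$ time, and doing this for $i=1,2,3$ keeps the bound at $O((T(n)+\log n)\log n)$.

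This yields three regions $C_1,C_2,C_3$, each containing $c$, whose intersection $D$ is, by the previous lemma, a convex polygon with at most six edges; hence $D$ can be formed in constant time. Inside $D$ the combinatorial structure of $\pi(x,p_i)$ is fixed for every $x\in D$, and the outer search has already identified, for each $i$, the last vertex $p$ of $\pi(p_i,x)$ before $x$, so that $d(p_i,x)=d(p_i,p)+\|p-x\|$ is an explicit (piecewise) function of $x$ on $D$. Minimizing the maximum of three such functions over the convex polygon $D$ is a constant-size problem and can be solved in $O(1)$ time, producing $c$.

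I do not anticipate a genuine obstacle, since all ingredients are already stated; the only point that needs a little care is confirming that the implicit binary search on the extensions actually matches the sorted order of their endpoints along $\pi(p_j',p_k')$, which follows because the edges of a concave chain meet the opposite chain in the same order as they appear along the chain itself. Adding the $O(\log n)$ preprocessing, the overall running time is $O((T(n)+\log n)\log n)$, as claimed.
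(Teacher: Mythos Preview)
Your proposal is correct and follows essentially the same approach as the paper: for each $p_i'$ you perform a binary search of $O(\log n)$ steps over the extensions (held implicitly in the height-$O(\log n)$ trees), paying $O(\log n)$ per step to compute the extension via Lemma~\ref{lem:extension} and $T(n)$ for the side oracle, then intersect the three resulting regions into the at-most-six-gon $D$ and finish in constant time using the fixed combinatorial structure of the three geodesic paths. The only minor remark is that inside $D$ each $d(p_i,\cdot)$ is a single additively-weighted Euclidean distance (not piecewise), which is exactly what the paper uses to solve the constant-size min--max problem directly.
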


\paragraph{The second level.}  In the second level binary
search, we determine which side of an extension $e$ 
in $\triangle$ contains $c$.  Without loss of
generality, we assume that $e$ comes from the subdivision
with respect to $p_1'$.  Then $\pi(p_1,x)$ has
the same combinatorial structure for any point $x \in e$.

This subproblem was also considered in a few previous works
on computing the geodesic center of a
simple polygon~\cite{1-center-jnl,pollackComputingCenter}.
% The previous work for computing the geodesic center of a
% simple polygon~\cite{1-center,pollackComputingCenter} also
% considers this subproblem.
They first compute the point
$c_e$ in $e$ that minimizes $\max_{p\in P}d(p,c_e)$, that
is, the geodesic center of the polygon restricted to
$e$.  By using $c_e$ and its farthest point, 
Pollack et al.~\cite{pollackComputingCenter}
presented a way to decide which side of $e$ contains the
geodesic center of the polygon in constant time.  However,
to compute $c_e$, they spend $O(n)$
time.

In our problem, we can do this in $O(\log n)$ time using the
fact that the interior of $\triangle$ is bounded by at most three
concave chains.  By
this fact, there are two possible cases: $c_e$ is an
endpoint of $e$, or $c_e$ is equidistant from $p_1$ and
$p_i$ for $i=2$ or $3$. 
We compute the point on $e$ equidistant from $p_1$ and $p_2$, and 
the point on $e$ equidistant from $p_1$ and $p_3$.
Then we find the point $c_e$ among the two points and the two endpoints of $e$.
In the following, we show how to
compute the point on $e$ equidistant from $p_1$ and $p_2$ if
it exists.  The point on $e$ equidistant from
$p_1$ and $p_3$ can be computed analogously.
 
Observe that $e$ can be subdivided into $O(n)$ disjoint line
segments by the extensions of the edges of 
$\pi(p_2',v_1)\cup\pi(p_2',v_2)$ towards $e$,
where $v_1$ and $v_2$ are endpoints of $e$.  See
Figure~\ref{fig:center-pseudo}(c).
For any point $x$ in the same line segment,
$\pi(p_2,x)$ 
has the same combinatorial structure.

We first claim that there is at most one point on $e$
equidistant from $p_1$ and $p_2$.  Assume to the contrary
that there are two such points $x$ and $y$.  Without loss of
generality, we assume that $d(x,p_1)<d(y,p_1)$.  By
definition, $d(x,p_1)=d(x,p_2)$ and $d(y,p_1)=d(y,p_2)$.  By
the construction of $e$, $d(y,p_1)=d(x,p_1)+d(x,y)=d(x,p_2)+d(x,y)$, but
$d(y,p_2)<d(x,p_2)+d(x,y)$ by triangle inequality and the construction
of $e$. This contradicts that $d(y,p_1)=d(y,p_2)$.

Thus we can apply binary search on the line segments in the subdivision
of $e$.  As we
did before, we do not
% compute the line segments in the subdivision of
subdivide $e$ explicitly. Instead, we use the binary
trees of height $O(\log n)$ representing $\pi(p_2',v_1)$ and $\pi(p_2',v_2)$.
% the geodesic paths
% bounding $\triangle$.
For a point $x$ in $e$, by comparing
$d(p_1,x)$ and $d(p_2,x)$, we can determine which part of
$x$ on $e$ contains the point equidistant from
$p_1$ and $p_2$ in constant time.  
In this case, we can compute the extension from an edge towards $e$
in constant time since $e$ is a line segment.
Thus, we complete the
binary search in $O(\log n)$ time.

Therefore, we can compute $c_e$ in $O(\log n)$ time and
determine which side of $e$ in $\triangle$ contains $c$ in the same time
using the method of Pollack et al~\cite{pollackComputingCenter}.
The following lemma summarizes this section.
\begin{lemma}
  Given any three points $p_1,p_2$ and $p_3$ contained in a simple $n$-gon $P$,
  the geodesic center of $p_1,p_2$ and $p_3$ can be
  computed in $O(\log^2 n)$ time after the shortest
  path data structure for $P$ is constructed in 
  linear time.
\end{lemma}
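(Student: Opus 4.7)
The plan is to run a two-level binary search on the boundary of the geodesic triangle $\triangle$ of $p_1,p_2,p_3$, exploiting the fact that its interior is bounded by at most three concave chains (the portions of the shortest paths between the junctions $p_1',p_2',p_3'$). First I would dispose of the easy case in which $c$ lies on $\partial\triangle$: there $c$ is the midpoint of $\pi(p_i,p_j)$ for some $i\neq j$, which is locatable in $O(\log n)$ time by binary search in the Guibas--Hershberger tree representing the shortest path. So assume $c$ lies in the interior of $\triangle$. I would compute $p_1',p_2',p_3'$ in $O(\log n)$ time via Lemma~\ref{lem:last-common-vertex}, and obtain the balanced binary trees representing $\pi(p_i',p_j')$ from the shortest path data structure.

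For the outer level, I would consider, for each $i\in\{1,2,3\}$, the subdivision of $\triangle$ obtained by extending the edges of $\pi(p_i',p_j')\cup\pi(p_i',p_k')$ toward $\pi(p_j',p_k')$; inside one region of this subdivision the combinatorial structure of $\pi(x,p_i)$ is fixed. Since the extensions appear in sorted order along $\pi(p_j',p_k')$, and since $\pi(p_i',p_j')\cup\pi(p_i',p_k')$ already sits in a balanced binary tree of height $O(\log n)$, I can locate the region containing $c$ by binary descent on this tree without materializing the subdivision, calling Lemma~\ref{lem:extension} to obtain each candidate extension in $O(\log n)$ time. Provided a side-test oracle taking $T(n)$ time per extension, the outer search costs $O((T(n)+\log n)\log n)$. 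Intersecting the three regions obtained in this way yields, by the lemma in the excerpt, a convex $t$-gon $D$ with $t\leq 6$ inside which each $d(p_i,\cdot)$ has the known functional form $d(p_i,v_i)+\|v_i-\cdot\|$ for some vertex $v_i\in\partial\triangle$ computed during the outer search, so $c$ can be solved for in constant time inside $D$.

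The main obstacle is implementing the side-test oracle for a fixed extension $e$ in $O(\log n)$ time: this is the inner level of the binary search. Following Pollack et al., it suffices to compute $c_e:=\arg\min_{x\in e}\max_{i}d(p_i,x)$, after which a constant-time comparison decides which side of $e$ contains $c$. Because the interior of $\triangle$ is bounded by at most three concave chains, $c_e$ is either an endpoint of $e$ or a point on $e$ equidistant from $p_1$ and some $p_i$ with $i\in\{2,3\}$. I would therefore compute, for each such pair, the (unique, by a triangle inequality argument) point on $e$ equidistant from the two sites, by binary searching on the subdivision of $e$ induced by extensions from $\pi(p_2',v_1)\cup\pi(p_2',v_2)$ (where $v_1,v_2$ are the endpoints of $e$); the extensions are again accessible via a binary tree of height $O(\log n)$, each extension against the line segment $e$ is obtained in constant time, and the direction of the search is given by comparing $d(p_1,x)$ and $d(p_i,x)$ using the shortest path data structure. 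Uniqueness of the equidistant point, which I would verify by the short contradiction argument in the excerpt, is what validates the binary search.

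Putting the pieces together, each inner search runs in $O(\log n)$ time, so $T(n)=O(\log n)$, and the outer search runs in $O(\log^2 n)$ time. Combined with the $O(\log n)$ preliminaries (junctions, boundary case, evaluation inside $D$), the overall cost of computing the geodesic center of $p_1,p_2,p_3$ is $O(\log^2 n)$, as claimed.
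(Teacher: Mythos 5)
Your proposal reproduces the paper's two-level binary search essentially verbatim: the same boundary-case dismissal, the same outer search over extensions of $\pi(p_i',p_j')\cup\pi(p_i',p_k')$ with a $T(n)$-time side-test oracle yielding $O((T(n)+\log n)\log n)$ total cost, the same convex intersection $D$ with at most six edges, and the same inner search computing $c_e$ by reducing it to an endpoint of $e$ or a point equidistant from $p_1$ and one $p_i$, located by binary search on the subdivision of $e$ induced by the $\pi(p_2',v_j)$ extensions. This is the paper's argument and it is correct.
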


\paragraph{Remark.}
The observations in this section together with the tentative prune and search technique~\cite{tentative-1993} yield an $O(\log n)$-time algorithm for computing the
geodesic center of any three points. Refer to~\cite[Section 3.4]{tentative-1993}.
But the running time for computing the geodesic center of three points is subsumed by
the overall running times for computing the Voronoi diagrams since 
the algorithms in Sections~\ref{sec:threepoints-equidistant} and~\ref{sec:point-line} 
take $O(\log^2 n)$ time.
It seems unclear whether these algorithms can be improved
to $O(\log n)$ time by applying this technique.
Thus we do not provide
details of the $O(\log n)$-time algorithm for this problem here.

\subsubsection{The Point Equidistant from Three Points}
\label{sec:threepoints-equidistant}
The geodesic center of three points in a simple polygon may not be 
equidistant from all of them.
% For example, consider three corners of an
% obtuse triangle in a sufficiently large rectangle.
Moreover, three points in a simple polygon may have no point equidistant
from them in the polygon.
% point equidistant from three points may not exist in the simple polygon.
For example, any three points whose geodesic convex hull is an
obtuse triangle contained in a simple polygon have their geodesic center
at the midpoint of the longest side of the obtuse triangle,
but it is not equidistant from the three points.
If the three points are almost aligned along a line, they may have no equidistant
point in the polygon.

Under the general position condition on the sites, 
there is at most one point equidistant from three sites. 
However, for three points which are not necessarily in $S$, 
there may be an infinite number of points equidistant from the three points.
In this case, we compute the one closest to the three points.

We can compute the closest equidistant point from any three points 
efficiently using the algorithm for computing the center of them
if any equidistant point exists.
\begin{lemma}
  \label{lem:equidistant}
  Given any three points in a simple polygon with $n$ vertices, 
  we can compute the closest
  equidistant point from them in $O(\log^2 n)$ time if it exists.
\end{lemma}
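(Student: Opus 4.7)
The plan is to reduce to the $O(\log^2 n)$ geodesic-center computation of the preceding lemma, followed by at most one more two-level binary search of the same cost.

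First I would invoke the preceding lemma to compute the geodesic center $c$ of $p_1,p_2,p_3$, and then query the three distances $d(c,p_i)$ via the shortest-path data structure in $O(\log n)$ each. If all three are equal then $c$ is itself equidistant, and moreover, because the center minimizes $\max_i d(\cdot,p_i)$, any other equidistant point has common distance no smaller than $d(c,p_1)$, so $c$ is the closest equidistant point and we are done. Otherwise, under the general position assumption, $c$ must lie in the relative interior of one side of the geodesic triangle of the three points, say $\pi(p_1,p_2)$, with $d(c,p_1)=d(c,p_2)>d(c,p_3)$. In particular $c$ already lies on the bisecting curve $b(p_1,p_2)$, and every equidistant point has to lie on this curve at a location where $d(\cdot,p_1)$ has grown to match $d(\cdot,p_3)$.

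I would then locate the first point along $b(p_1,p_2)$, starting at $c$ and moving in the direction in which $d(\cdot,p_1)$ increases, where the sign of $d(\cdot,p_3)-d(\cdot,p_1)$, currently negative at $c$, flips to zero; if the curve leaves the geodesic triangle before this happens, no equidistant point exists. The search is a two-level binary search mirroring the one in Section~\ref{sec:center-three}. The outer level binary-searches among the extensions, towards $b(p_1,p_2)$, of edges of the shortest-path trees rooted at $p_1,p_2,p_3$: each probed extension is computed in $O(\log n)$ time via Lemma~\ref{lem:extension}, and its side relative to the target is decided by three shortest-path distance queries in $O(\log n)$ time. This localizes the candidate point to a single cell of the overlay of the three shortest-path maps; inside such a cell, $b(p_1,p_2)$ is a single hyperbolic arc or line segment and each $d(\cdot,p_i)$ equals the Euclidean distance to a fixed apex plus a constant, so the inner level solves the resulting constant-degree equation in constant time. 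The total running time is $O(\log^2 n)$.

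The main obstacle is justifying that the outer binary search is well defined: I must argue that $d(\cdot,p_1)$ grows monotonically along $b(p_1,p_2)$ in the chosen direction away from $c$, so that the closest equidistant point is exactly the first zero of $d(\cdot,p_3)-d(\cdot,p_1)$ encountered, and that this first zero lies in a uniquely determined cell of the overlay. Monotonicity follows from the optimality of $c$ as a minimizer of $\max_i d(\cdot,p_i)$: the opposite direction would exhibit a smaller maximum, contradicting the choice of $c$. Uniqueness of the sign change inside each overlay cell will rely on the explicit algebraic form of the distance functions, paralleling the uniqueness argument used in the second-level search of Section~\ref{sec:center-three}.
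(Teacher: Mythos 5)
Your proposal starts as the paper does, by computing the geodesic center $c$ and returning it if it is already equidistant; your observation that $c$ is then the \emph{closest} equidistant point is correct and worth stating explicitly. After that, your one-dimensional walk along a single bisecting curve is a genuinely different route from the paper's two-dimensional search inside a cut-off subpolygon, but it has gaps. The direction ``in which $d(\cdot,p_1)$ increases'' is ambiguous: with your labeling, $c$ is the midpoint of $\pi(p_1,p_2)$ and hence the global minimum of $d(\cdot,p_1)$ restricted to $b(p_1,p_2)$, so $d(\cdot,p_1)$ increases in \emph{both} directions away from $c$. Your justification for monotonicity also does not go through: optimality of $c$ constrains $\max_i d(\cdot,p_i)$, not $d(\cdot,p_1)$ alone, and only in a neighborhood of $c$, so it neither selects a direction nor gives the global monotonicity you invoke to make the outer search well defined.

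More fundamentally, you never specify the finite, ordered sequence that the outer binary search ranges over. ``Extensions of edges of the shortest-path trees rooted at $p_1,p_2,p_3$'' names an $O(n)$-size collection with no canonical order and no $O(\log n)$-time median access, so it is not an object you can binary-search. The paper manufactures exactly such a structure by a chain of steps you skip: it first tests whether $c$ lies on $\pi(p_2,p_2')$ or $\pi(p_3,p_3')$ (in the paper's labeling), in which case no equidistant point exists; otherwise $c$ lies on an edge $v_2v_3$ of the middle concave chain $\pi(p_2',p_3')$, and the subpolygon $P'$ cut off by $v_2v_3$ on the side away from $p_1$ must contain $c^*$. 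The crucial lemma is that $\pi(p_1,c^*)\cap P'$ is a single line segment, so the subdivision of $P'$ by extensions of edges of $\pi(p_1',v_2)\cup\pi(p_1',v_3)$---two specific paths that the Guibas--Hershberger structure exposes as balanced trees of depth $O(\log n)$---supports the two-level binary search, followed by a second search for the junctions $u_2,u_3$ before solving the local algebraic system. Without analogues of the $v_2v_3$ cut, the line-segment lemma, and the explicit nonexistence test, your sketch does not yet yield an $O(\log^2 n)$-time algorithm.
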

\begin{proof}
  Let $p_1,p_2$ and $p_3$ be three input points and $c^*$ be the closest
  equidistant point from them.  We first compute the geodesic center $c$ of
  the three points in $O(\log^2 n)$ time. If $c$ is equidistant from the three
  points, we are done.
%  \ccheck{If $c$ is not equidistant
%  from the three points, it lies on the boundary of the geodesic
%  convex hull $\triangle$ of $p_1',p_2'$ and $p_3'$.} \complain{This is not
%  true as $c$ may lie in $p_1p_1'$. I will rewrite as follows.}
Otherwise, it is equidistant from
  only two of them, say $p_2$ and $p_3$, and it lies on $\pi(p_2,p_3)$.
%  the boundary
%  of the geodesic convex hull of $p_1,p_2$ and $p_3$.}
% Without loss of
%   generality, assume that $c$ is equidistant from $p_2$ and $p_3$.
%   Then $c$ lies on $\pi(p_2',p_3')$.
  Then we have $d(p_1,c)<d(p_2,c)=d(p_3,c)$.
  Recall that $p_i'$ denote the junction of $\pi(p_i,p_j)$ and $\pi(p_i,p_k)$
  for three distinct indices $i, j$ and $k$ in $\{1,2,3\}$.
  If $c$ lies on $\pi(p_2,p_2')$, % or $\pi(p_3,p_3')$,
  $P$ has no point equidistant from the three points because $d(p_1,p_2')<d(p_3,p_2')$,
  and therefore $d(p_1,x)<d(p_2,x)=d(p_3,x)$ for
  any point $x$ on the bisecting curve of $p_2$ and $p_3$.
  Similarly, if $c$ lies on $\pi(p_3,p_3')$,
  $P$ has no point equidistant from the three points.
  So we assume that $c$ lies in an edge
  $v_2v_3$ of $\pi(p_2',p_3')$ %containing $c$ by $v_2v_3$
  with $d(p_2',v_2)<d(p_2',v_3)$. Then $v_2v_3$ subdivides $P$ into two subpolygons.
  See Figure~\ref{fig:equidistant}(a).

  \begin{figure}
    \begin{center}
       \includegraphics[width=0.8\textwidth]{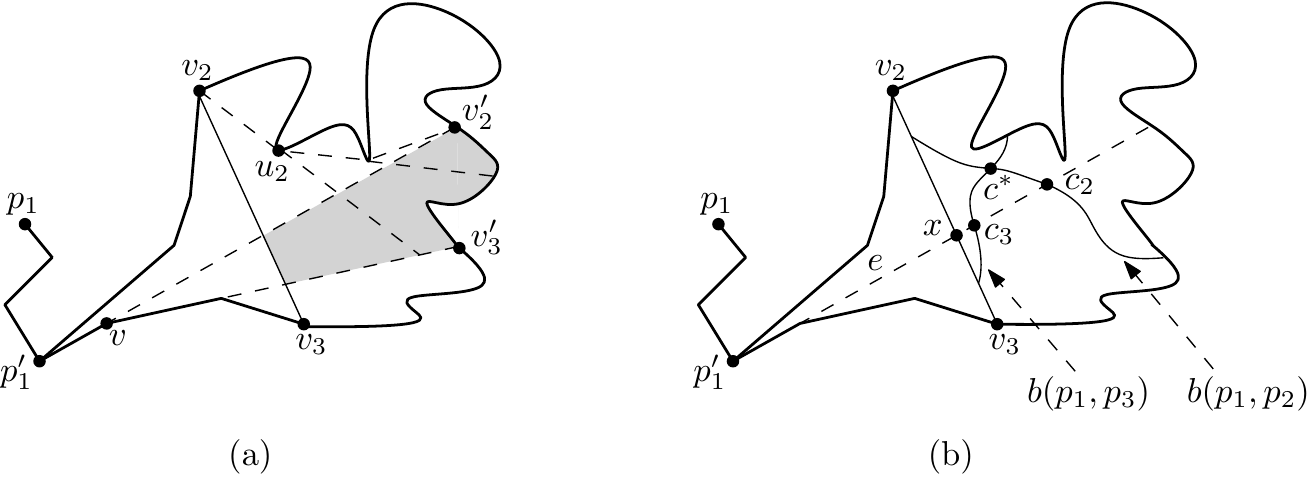}
      \caption{\label{fig:equidistant} (a) The region (the gray region) 
        in the subdivision
        with respect to $p_1'$ containing $c^*$ is subdivided with respect to $p_2$.
        (b) $c^*$ is the intersection point between $b(p_1,p_2)$ and 
        $b(p_1,p_3)$.}
    \end{center}
  \end{figure}

  Assume that $c^*$ exists in $P$. Then it lies in the subpolygon $P'$ of $P$
  bounded by $v_2v_3$ and not containing $p_1$.
  We claim that the part of $\pi(p_1,c^*)$ contained in $P'$ is just a line
  segment.
  Assume to the contrary that $\pi(p_1,c^*)\cap P'$ is a chain of at least two
  line segments.
  Then each point $q$ where the chain makes a turn is a vertex of $P'$ other than
  $v_2$ and $v_3$. 
  %because $\pi(p_1,c^*)$ is the shortest path between $p_1$ and $c^*$.
  Then $\pi(p_2,c^*)$ or $\pi(p_3,c^*)$ also passes through $q$ because $q$ is a
  vertex of $P$ and $\pi(p_i,c^*)$ do not cross each other for $i=1,2,3$. 
  This is a contradiction. To see this, observe that
  $c^*$ is an intersection point of $b(p_1,p_2)$, $b(p_2,p_3)$ and $b(p_3,p_1)$.
  Thus, 
  for any point $x$ in $b(p_1,p_2)$ (or $b(p_3,p_1)$), there is no vertex of $P$
  where both $\pi(p_1,x)$ and  $\pi(p_2,x)$ (or $\pi(p_3,x)$) make a turn.
  Therefore, the part of $\pi(p_1,c^*)$ contained in $P'$ is a line segment.

% %    Moreover, $q$ is equidistant from all $p_i$ for $i=1,2,3$.}
%   % also a vertex of $\pi(p_i,c^*) for $i=2$ or $3$, which means that
%   % $v$ is equidistant from $p_1$ and $p_i$.
%     Consider the extension of the edge of $\pi(p_1,q)$ incident to $q$ starting from $q$.
%   Since $c$ and $c^*$ lie in the opposite sides of the extension in $P'$,
% %  the bisecting curve of $p_2$ and $p_3$ intersects the extension at $q'$ before
% %  it reaches $c^*$. Note that
%   the closest equidistant point $q'$ from $p_2$ ane $p_3$ on the extension is unique,
%   and we have  $d(p_i,q') < d(p_i,c^*)$ and $\pi(p_1,q')=\pi(p_j,q')$ for $j=2,3$.
%   This contradicts that $c^*$ is the point closest to $p_1,p_2$ and $p_3$ 
%   among the points equidistant
%   from the three points.}
  
  % Now consider the part of the bisecting curve of $p_2$ and $p_3$ contained in $P'$.
  % As a point $x$ moves from $c$ to the other endpoint along the bisecting curve,
  % $d(p_2,x)=d(p_3,x)$ increases.
  % Since the bisecting curve is connected, there is an intersection point $q'$
  % of the bisecting curve with the extension of the edge of $\pi(p_1,q)$ incident to $q$.
  % Note that $d(p_i,v') < d(p_i,c^*)$ and $\pi(p_1,v')=\pi(p_i,v')$.
  % This contradicts that $c^*$ is the point closest to $p_1,p_2$ and $p_3$ 
  % among the points equidistant
  % from the three points.
  Consider the subdivision of $P'$ by the extensions
  of the edges of $\pi(p_1',v_2)\cup\pi(p_1',v_3)$ with respect to $p_1'$.
  Figure~\ref{fig:equidistant}(a) shows a region (gray)
  in such a subdivision. 
%  Then $c^*$ lies in a cell of the subdivision as $\pi(p_1,c^*)\cap P'$
%  is a line segment.
  We find the region $C$ containing $c^*$ by applying binary search on the extensions.
%  We can obtain the combinatorial structure of $\pi(c_1,c^*)$ by finding
%  the cell no vertex of $\pi(c_1,c^*)$ is contained in $P'$.
  For an extension $e$, % \ccheck{other than the ones incident to $v_2$ or $v_3$,}
  we determine which side of $e$ in $P'$ contains $c^*$
  as follows. 
  There is at most one point on $e$ equidistant from $p_1$ and $p_i$ for $i=2$ or $3$.
%  \complain{For the extensions incident to $v_2$ or $v_3$, there can be
%  more than one such point.}
  Let $c_i$ be the intersection point of $b(p_1,p_i)$ with $e$.
  Figure~\ref{fig:equidistant}(b) shows $b(p_1,p_2)$ 
  and $b(p_1,p_3)$ contained in $P'$. Each bisecting curve is a simple
  connected curve and it intersects $e$ at most once by the construction
  of $e$. 
  Moreover, $b(p_1,p_i)$ intersects $\pi(p_i,x)$ exactly once
  for $i=2,3$, where $x$ is the 
  intersection point of $v_2v_3$ with $e$.
  And $c^*$ is an intersection point of $b(p_1,p_2)$ and $b(p_1,p_3)$.
  Therefore, if $b(p_1,p_i)$ does not intersect $e$, the point 
  $c^*$ lies in the side of $e$
  containing $v_i$ for $i=2,3$.
  If $c_2$ comes after $c_3$ along $e$ from $p_1'$, the point
  $c^*$ lies in the side of $e$ containing $v_2$ .
  Otherwise, $c^*$ lies in the side of $e$ containing $v_3$.
  Thus we can determine which part of $e$ inside $P'$ contains
  $c^*$ in constant time after computing $c_2$ and $c_3$ in $O(\log n)$.
  To compute the extension from an edge, we can use the ray-shooting 
  algorithm which takes $O(\log n)$
  time~\cite{ray_shooting} since the endpoints of the extensions
  lie on the boundary of $P$.
  Therefore, we can find the region $C$ containing $c^*$ in the subdivision
  of $P'$ with respect to $p_1'$ in $O(\log^2 n)$ time.
 
  We let $vv_2'$ and $vv_3'$ be two extensions bounding $C$.
  See the gray region in Figure~\ref{fig:equidistant}(a).
%  Observe that $\pi(p_2,u)$ is a subpath of $\pi(p_2,c^*)$ for some
%  vertex $u$ of $P'$ lying on $\pi(v_2,v_2')$ for $i=2,3$.
  By applying binary search on $\pi(v_2,v_2')$,
  we find the junction $u_2$ of $\pi(p_2',v_2')$ and $\pi(p_2',c^*)$
%  such that $\pi(p_2,u_2)$ is
%  the \ccheck{maximal common path} of $\pi(p_2,c^*)$ and $\pi(p_2,v_2')$
  in $O(\log^2 n)$ time as follows.  
%  For $u \in \pi(v_2,v_2')$,
%  we check whether $u_2$ lies on $\pi(v_2,u)$ or not in $O(\log n)$
%  time. To do this,
  We find the point $c_2'$ on the extension of an edge on $\pi(v_2,v_2')$ that
  is equidistant from $p_1$ and $p_2$
  in $O(\log n)$ time.  Then we compare $d(c_2',p_1)$ and $d(c_2',p_3)$, which
  determines whether the junction $u_2$ lies before the edge from $p_2'$
  along $\pi(p_2',v_2')$
  %$\pi(p_2,u_2')$ is a subpath of $\pi(p_2,c^*)$
  or not.
  We also do this for $p_3$ and find the junction $u_3$
  of $\pi(p_3',v_3')$ and $\pi(p_3',c^*)$ in $O(\log^2 n)$ time.
  % \ccheck{where $u_i'$ is the vertex of $\pi(p_i',u')$ adjacent to $u'$.} %The binary search takes $O(\log^2 n)$ time.
%  Similarly, we find the combinatorial structure of $\pi(p_3,c^*)$.
  
  Now we have %the combinatorial structures of $\pi(p_i,c^*)$
  %the \ccheck{maximal common path} $\pi(p_i,u_i)$ of $\pi(p_i,c^*)$ and $\pi(p_i,v_i')$ 
  the junction $u_i$ of $\pi(p_i,v_i')$ and $\pi(p_i,c^*)$ 
  for all $i=2, 3$. 
%  For any point $x$ in the region $D\subset C$
%  bounded by the extensions of edges of $\pi(p_i,u_i)$ incident to $u_i$,
% such that for any point $x$ in $D$
  We observe that the geodesic path $\pi(p_i,c^*)$ is the concatenation of
  $\pi(p_i,u_i)$ and the line segment $u_ic^*$.
% be the region in the cell of
% such that for any point $x\in P'$,
% the  combinatorial structure of $\pi(p_i,x)$ is the same for each $i=1,2,3$.
  Therefore $d(p_j,\cdot)$ for $j=1,2,3$ is a hyperbolic function with 
  some domain $D$ containing $c^*$.
  We compute the three hyperbolic functions and find points
  where the three hyperbolic functions have the same value without considering $D$.
  Since we do not consider $D$, some point that we compute may not be equidistant
  from $p_1,p_2$ and $p_3$. We check additionally if each such point is equidistant
  from the three points. In this way, we can compute $c^*$ in $O(\log^2 n)$ time in total.
\end{proof}

\subsubsection{The Point Equidistant from Two Points and a Line Segment}
\label{sec:point-line}
Using a way similar to the one in Section~\ref{sec:threepoints-equidistant}, 
we can compute the point equidistant from two
points in $P$ and a line segment contained in $P$ under the geodesic
metric.  
Given two points and a line segment contained in $P$,
  if more than one point of $P$ are equidistant from them,
we choose the one closest to them.
This will be also used as a
subprocedure for computing Voronoi diagrams.

\begin{lemma}
  \label{lem:equidistant-line}
  Given any two points and any line segment contained in a simple $n$-gon $P$,
  we can compute the closest equidistant point from them
  under the geodesic metric in $O(\log^2 n)$ time if it exists.
%  The closest equidistant point from two points and a line segment contained in a
%  simple polygon with $n$ vertices 
%  under the
%  geodesic metric can be computed in $O(\log^2 n)$ time if it exists.
\end{lemma}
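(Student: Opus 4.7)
The plan is to follow the two-level binary search framework of Lemma~\ref{lem:equidistant}, with the third point $p_3$ replaced by the line segment $\ell$ throughout. The key structural observation is that $d(\cdot,\ell)$ retains the combinatorial character of a distance-to-a-point: for any $x$ in $P$ the shortest path $\pi(x,\ell)$ ends at a uniquely defined point $q_x\in\ell$ that depends only on which cell of a vertex-extension subdivision contains $x$, so $d(\cdot,\ell)$ is piecewise hyperbolic. Consequently the bisecting curve $b(p_i,\ell)$ is a simple curve crossing each such extension at most once, by the same triangle-inequality argument used to bound crossings of $b(p_i,p_j)$.

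With this parallel in place, first I would compute a point $c$ minimizing $\max\{d(\cdot,p_1),\,d(\cdot,p_2),\,d(\cdot,\ell)\}$ in $O(\log^2 n)$ time, adapting Section~\ref{sec:center-three} so that the ``triangle'' in which $c$ lives is bounded by $\pi(p_1,p_2)$, $\pi(p_1,\ell)$, $\pi(p_2,\ell)$, and a sub-segment of $\ell$ between their endpoints. If $c$ is equidistant from all three of $p_1,p_2,\ell$, then $c^*=c$. Otherwise $c$ equates exactly two of the three distances, and I proceed exactly as in Lemma~\ref{lem:equidistant}: identify the subpolygon $P'$ on the far side of the edge $v_2v_3$ bounding the region of the subdivision containing $c$, argue by the vertex-turning contradiction that the portion of the relevant geodesic path inside $P'$ is a single line segment, and binary-search the extensions of the two geodesic chains anchored at the appropriate junction.

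For each extension $e$ tested in the outer binary search, I decide which side contains $c^*$ by intersecting $e$ with the two bisecting curves relevant for the case (for example $b(p_1,p_2)$ and $b(p_1,\ell)$), each in $O(\log n)$ time using ray-shooting and the shortest path data structure. Once the constant-complexity region $C$ is isolated, I compute the junctions $u_i$ of $\pi(\cdot,v_i')$ and $\pi(\cdot,c^*)$ for the relevant objects in $O(\log^2 n)$ time by the same inner binary search as in Lemma~\ref{lem:equidistant}; inside $C$ the three distance functions are explicit hyperbolic expressions, and $c^*$ is obtained as their common equidistant point in $O(1)$ time, followed by a check that the candidate lies in $C$ and realizes the assumed combinatorial structure.

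The main obstacle, and the only substantive new ingredient beyond Lemma~\ref{lem:equidistant}, is verifying the two structural properties that $\ell$ must share with a point: piecewise-hyperbolic behavior of $d(\cdot,\ell)$, and at-most-one crossing of each extension by $b(\cdot,\ell)$. Both follow because $q_x$ is locally constant inside each extension cell, reducing the analysis in each cell to the point-point case. After that, every substep runs in the same time as in the proof of Lemma~\ref{lem:equidistant}, yielding the claimed $O(\log^2 n)$ bound.
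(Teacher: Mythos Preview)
Your proposal is correct and follows essentially the same approach as the paper: form the geodesic convex region bounded by $\pi(p_1,p_2)$, the two shortest paths from $p_1,p_2$ to $\ell$, and a sub-segment of $\ell$; compute the geodesic center $c$ by a two-level binary search adapted from Section~\ref{sec:center-three}; and if $c$ is not equidistant, cross into the subpolygon $P'$ on the far side of the edge containing $c$ and binary-search the extensions there exactly as in Lemma~\ref{lem:equidistant}. The paper makes this concrete by fixing the four corners $t_1',t_2',p_1',p_2'$ (where $t_i'$ is the point of $\ell$ closest to $p_i$), and in the side-of-extension test it intersects $e$ with $b(p_1,p_2)$ only and then compares $d(p_1,q)$ with $d(t_1t_2,q)$, rather than computing both $b(p_1,p_2)\cap e$ and $b(p_1,\ell)\cap e$ as you suggest---but this is a minor implementation variant, not a different idea.
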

\begin{proof}
  Let $p_1$ and $p_2$ be any two points in $P$, and $t_1t_2$ be any line
  segment contained in $P$.  
  We compute two points $t_1'$ and $t_2'$ on $t_1t_2$
  that are closest to $p_1$ and $p_2$ in $O(\log n)$ time, respectively. 
  We compute the junction $p_1'$ of $\pi(p_1,p_2)$ and $\pi(p_1,t_1')$,
  and the junction $p_2'$ of $\pi(p_2,p_1)$ and $\pi(p_2,t_2')$
  in $O(\log n)$ time.
  Without loss of generality, we assume
  that $t_1', t_2', p_2'$ and $p_1'$ appear in order along the boundary of the
  geodesic convex hull $\ch$ of them as shown in 
  Figure~\ref{fig:center-line-point}. 
  Note that the interior of $\ch$ is connected.

  \begin{figure}
    \begin{center}
      \includegraphics[width=0.9\textwidth]{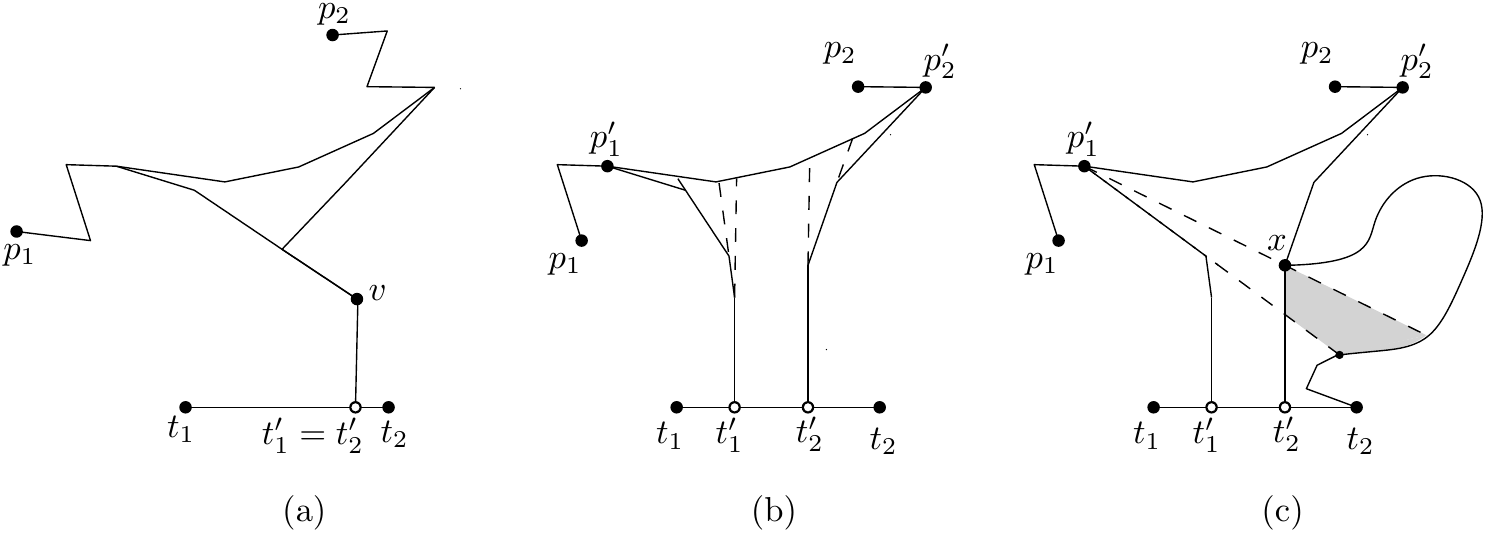}
	  \caption{
		(a) $t_1'$ may coincide with $t_2'$.
		(b) The subdivision of $\ch$ with respect to $t_1't_2'$.
		(c) If $c$ lies on $\pi(p_2,t_2')$, we first consider the
		subdivision $P'$ with respect to $p_1'$ and find the region
		containing $c^*$ (the gray region).
		Then consider the subdivisions of the region with respect to
		$t_1t_2$ and $p_2'$, find the regions containing $c$,
		and compute $c$ directly inside their intersection.
		\label{fig:center-line-point}}
     \end{center}
   \end{figure}

   Let $c^*$ be the closest equidistant point % closest to them among the points in $\ch$ equidistant 
   from $p_1$, $p_2$ and $t_1t_2$ in $P$. 
   We first compute the geodesic center $c$ of them in $O(\log^2n)$ time
   as follows. 
   If $c$ lies on the boundary of $\ch$, we can compute it in $O(\log n)$
   time since it is the midpoint of two points from $p_1, p_2, t_1'$ and $t_2'$.
   Thus we assume that $c$ lies in the interior of $\ch$.

   Consider the subdivision of $\ch$ with respect to $p_1'$
   by the extensions of edges in $\pi(p_1',t_1')\cup\pi(p_1',p_2')$
   in direction opposite to $p_1'$. We can determine
   which side of a given extension contains $c$ in $O(\log n)$ time. Therefore, we
   can compute the region of the subdivision containing $c$ in $O(\log^2n)$ time without
   constructing the subdivision as we do for the first level binary search in 
   Section~\ref{sec:center-three}. Similarly, we can compute the region of the subdivision
   of $\ch$ containing $c$ with respect to $p_2'$ by the  extensions of edges in 
   $\pi(p_2',t_2')\cup\pi(p_2',p_1')$ in direction opposite to $p_2'$ 
   in the same time.
   Now consider the subdivision of $\ch$ by the extensions of edges
   in $\pi(t_1',p_1')\cup\pi(t_2',p_2')$ in direction opposite to $t_1'$ and $t_2'$.
   See Figure~\ref{fig:center-line-point}(b).
   We can compute the region
   of the subdivision containing $c$ in the same time. Then the intersection of the three 
   regions is a convex polygon with at most six vertices and we can find the geodesic center $c$ in the intersection in constant time.

   If $c$ is equidistant from $p_1$, $p_2$ and $t_1t_2$, we are done.
   Otherwise, it is equidistant from only two of them and it lies in the shortest path 
   connecting the two in $P$. Let $xy$ be the edge of the shortest path that contains $c$.
   If $xy$ is not on the boundary of the interior of $\ch$, then 
   $P$ has no point equidistant from $p_1$, $p_2$ and $t_1t_2$ by an argument similar
   to the one in the first paragraph of the proof of Lemma~\ref{lem:equidistant}. 
 
   Assume that $xy$ is on the boundary of the interior of $\ch$. Then $xy$ subdivides 
   $P$ into two subpolygons and $c^*$, if it exists in $P$, lies in the subpolygon $P'$ 
   not containing the interior of $\ch$.
   In the case that $xy$ lies on $\pi(p_1,p_2)$,
   consider the subdivision of $P'$ with respect to $t_1't_2'$ by the extensions of
   edges in $\pi(t_1',x)\cup\pi(t_2',y)$ towards $P'$.
   For the other case, that is, $xy$ lies on $\pi(p_i,t_i')$ for $i=1$ or $2$,
   consider the subdivision of $P'$ with respect to $p_j'$ by
   the extension of edges in $\pi(p_j',x)\cup\pi(p_j',y)$ for
   $j\in\{1,2\}\setminus\{i\}$. For an extension $e$, we determine which
   side of $e$ in $P'$ contains $c^*$ as follows. We compute the intersection point
   $q$ of $b(p_1,p_2)$ with $e$ by performing binary search
   on the intersections of $e$ by the extensions of edges on the shortest path
   from $p_2$ to an endpoint of $e$. Then we determine which side of $e$
   contains $c$ by comparing the geodesic distances $d(p_1,q)$ and $d(t_1t_2,q)$.
   This can be done in $O(\log n)$ time as the intersection point $q$ and
   the distances  can be computed in $O(\log n)$ time.
% } \complain{More on how to proceed?}
 % \ccheck{We have two subcases, (a) $xy$ is incident to $t_i'$, or
 %   (b) $xy$ is not incident to $t_i'$. The subcase (b) can be handled as we do in the proof of Lemma~\ref{lem:equidistant}.
 %   For the subcase (a), 
 %   Imagine that $y=t_2'$ in Figure~\ref{fig:center-line-point}(c)
 %   How do we test on an extension $e$ from an edge on $\pi(p_1,t_1')$,
 %   with respect to $t_1t_2$? In this case, $t_2'$
 %   is not the anchor point from which we have a shortest path extensions on $e$ (The two
 %   red segments in the figure were there, but they are not the case. Can we still do the
 %   test in $O(\log n)$ time?
 %   A fundamental question is whether the bisecting curve (excluding 2-dim. regions) between a point and a line segment is a single simple curve?}
 See Figure~\ref{fig:center-line-point}(c).
   In any case, as we do
   in the proof of Lemma~\ref{lem:equidistant}, we can find $c^*$ in $P'$ in 
   $O(\log^2 n)$ time if it exists.
\end{proof}

\subsection{The Geodesic Center of Points in a Simple Polygon}
\label{sec:center-many}
Combining the result in the previous subsection with the
algorithms for computing the center of points in the
  plane~\cite{m-ltalprp-83} and for computing the geodesic center of a
  simple polygon~\cite{pollackComputingCenter},
% frameworks in~\cite{m-ltalprp-83,pollackComputingCenter},
we can compute the geodesic center of a set of $m$ points contained in
a simple polygon $P$ with $n$ vertices in 
$O(m\log m\log^2 n)$ time after computing the shortest path data
structure for $P$.  This algorithm will be used as a subprocedure for
computing the topological structure of the geodesic farthest-point
Voronoi diagram of points in $P$.

To compute the center of a simple polygon, Pollack et
al.~\cite{pollackComputingCenter} first triangulate the input polygon
and construct a balanced binary search tree on the chords of the
triangulation using the algorithm by Guibas et
al~\cite{shortest-path-tree}.  Then, they find the triangle $t$ of
the triangulation containing the geodesic center by applying binary
search with a chord-oracle to the balanced binary search tree.  A
chord-oracle is the procedure for determining which side of a given
chord contains the geodesic center.  They subdivide $t$ further and
locate a smaller triangle $t'$ such that the geodesic path
from a vertex of $P$ to any point in $t'$ has the same
  combinatorial structure.  Finally, they find the center inside $t'$
which is the lowest point of the upper envelope of some distance
functions within domain $t'$.

To remove the linear dependency of $n$ in the time complexity of our
algorithm, we again use the shortest path data structure.  We may
assume that we already have the triangulation of $P$ and the balanced
binary search tree on the chords because the algorithm for
constructing the shortest path data structure constructs them.

\paragraph{Computing the triangle of the triangulation containing
  the geodesic center.}  Let $c$ be the geodesic center of $S$.  We
apply a chord-oracle described in Lemma~\ref{lem:chord-oracle} to
$O(\log n)$ chords of the triangulation and find the triangle $t$
containing $c$ in $O(m\log^2 n)$ time.

\begin{lemma}
  \label{lem:chord-oracle}
  Given a chord of $P$, we can determine which side of the chord
  contains the geodesic center in $O(m\log n)$ time.
\end{lemma}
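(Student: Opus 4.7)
The plan is to adapt the chord-oracle strategy of Pollack et al.~\cite{pollackComputingCenter}. I would first compute the restricted center $c_e \in e$ that minimizes $F(t) := \max_{s_i \in S} d(s_i, q(t))$, where $q(t)$ parameterizes $e$ by arc length, and then use $c_e$ together with the sites farthest from $c_e$ to decide in constant time on which side of $e$ the true geodesic center $c$ lies. Each $f_i(t) := d(s_i, q(t))$ is convex on $e$ (being the geodesic distance from a fixed point, restricted to a line segment), so $F = \max_i f_i$ is convex; at $c_e$ the two-dimensional subgradient of $F$ is the convex hull of the gradients of the active $f_i$'s, whose projection onto $e$ contains $0$ while its projection perpendicular to $e$ indicates the side of $e$ containing $c$.

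For each site $s_i$, I would query the shortest path data structure in $O(\log n)$ time to obtain the funnel from $s_i$ to $e$ (represented implicitly by the two shortest-path trees to the endpoints of $e$), the closest point $e_i^* \in e$ to $s_i$, and the ability to evaluate $f_i(t)$ and its slope at any given $t \in e$ in $O(\log n)$ time by reading the length and angle of the last segment of the returned shortest path. This preprocessing takes $O(m \log n)$ time in total.

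Next, I would compute $c_e$ by a Megiddo-style prune-and-search~\cite{m-ltalprp-83} on the $m$ convex functions $f_i$. In each round, I pair up the surviving functions, compute a canonical crossing of each pair in $O(\log n)$ time per pair via a two-level binary search in the spirit of Section~\ref{sec:threepoints-equidistant} (locating the hyperbolic pieces of $f_i$ and $f_j$ that meet there), take the median crossing $t^\ast$, determine the sign of the subgradient of the upper envelope of surviving functions at $t^\ast$, and prune a constant fraction of the functions on the wrong side. After $O(\log m)$ rounds only $O(1)$ functions remain and $c_e$ is computed directly. Since the number of surviving functions halves each round while each primitive costs $O(\log n)$, the overall cost telescopes to $O(m \log n)$.

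The main obstacle is ensuring every primitive takes only $O(\log n)$ time, even though each $f_i$ is piecewise hyperbolic with $\Omega(n)$ breakpoints along $e$. I would handle this by never materializing $f_i$ globally; every evaluation of $f_i(t)$, every slope query, and every intersection $f_i(t) = f_j(t)$ is answered on demand through the shortest path data structure combined with the two-level binary search of Lemma~\ref{lem:equidistant}, which locates the relevant hyperbolic pieces of both functions in logarithmic time and solves the resulting constant-size hyperbolic equation in $O(1)$. Once $c_e$ is known, identifying the farthest sites from $c_e$ and applying Pollack et al.'s constant-time side test yields the side of $e$ containing $c$ within the claimed $O(m \log n)$ bound.
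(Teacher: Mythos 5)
Your overall framework is the paper's: compute the restricted center $c_e$ on the chord $e$ (the minimizer of $F(t)=\max_i d(s_i,q(t))$), invoke Pollack et al.'s constant-time side test, and obtain $c_e$ by a prune-and-search over the $m$ convex distance functions, with each primitive answered through the shortest-path data structure. The time analysis (geometric decay of survivors across rounds, so the work sums to $O(m\log n)$) is also the paper's.

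There is a genuine gap in the pairing step. You pair arbitrary surviving sites and assume that each pair $(f_i,f_j)$ admits a single (``canonical'') crossing on $e$, and that knowing which side of the median crossing contains $c_e$ lets you discard one member of each pair whose crossing lies on the other side. But $f_i-f_j$ is a difference of convex functions and need not be monotone: for two sites on \emph{opposite} sides of the chord, the bisecting curve $b(s_i,s_j)$ can cross $e$ several times. If the crossings of a pair straddle $c_e$, neither function dominates the other near $c_e$ and no discard is possible, so the ``discard a constant fraction each round'' guarantee fails. The paper's crucial observation, which you omit, is to split $S$ into $S_1$ and $S_2$ (the two subpolygons cut off by $e$) and to pair \emph{only within the same side}: for such a pair the bisecting curve crosses $e$ at most once, which is exactly what makes both the discard argument and the per-pair crossing computation valid. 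Relatedly, that single-crossing structure (one of $d(s,\cdot),d(s',\cdot)$ increasing, the other decreasing, between the closest points of $s,s'$ on $e$) is what permits a \emph{synchronized} $O(\log n)$ binary search over the two subdivisions of $e$; the two-level search of Section~\ref{sec:threepoints-equidistant} that you cite costs $O(\log^2 n)$ per pair and would inflate the total to $O(m\log^2 n)$. Adding the same-side pairing and the synchronized search closes the gap and reduces your argument to the paper's proof.
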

\begin{proof}
  Let $e$ be a given chord.  We will find the point $c_e$ on $e$ that
  minimizes the geodesic distance from its farthest site in $S$.
  Based on $c_e$ and its farthest sites in $S$, we can determine which
  side of $e$ contains $c$ in constant time using the method by
  Pollack et al~\cite{pollackComputingCenter}.

  To compute $c_e$, we do the followings.  The chord subdivides $P$
  into two regions.  Let $S_1$ be the set of sites in $S$ contained in
  one region of $P$ and $S_2$ be the set of sites in $S$ contained in
  the other region of $P$.  We pair the sites in the same set $S_i$
  for $i=1,2$.  Then we have $\lfloor m/2 \rfloor +1$ pairs.  For any
  pair $(s,s')$ we have, the bisecting curve of $s$ and $s'$
  intersects $e$ at most once because both $s$ and $s'$ lie in the
  same side of $e$.  We can prune and search on $S$ with this
  property.

  We compute the intersection $p$ of the bisecting curve of $s$ and
  $s'$ with $e$ for each pair $(s,s')$ %.  For each
  % pair, we can compute the intersection $p$
  in $O(\log n )$ time as follows. Observe that $e$ is subdivided into 
  $O(n)$ disjoint line segments by the extensions  with respect to $s$ and by
  the extensions with respect to $s'$.
  We observe that $p$ lies between $x$ and $x'$, where $x$ and $x'$
  are the points on $e$ that are closest to $s$ and $s'$, respectively.
  We can compute $x$ and $x'$ in $O(\log n)$ time.
  We apply binary search on the line segments in the
  subdivision lying between $x$ and $x'$ to find $p$ as follows.
  Initially, the search space for $s$ is the set of the extensions
  of edges of $\pi(s,x)\cup\pi(s,x')$ towards $xx'$,
  and the search space for $s'$ is the set of the extensions of 
  edges of $\pi(s',x)\cup\pi(s',x')$ towards $xx'$.
  For each iteration of the binary search, we choose the median of each search space,
  and let $q$ and $q'$ be the intersection of $xx'$ with the two medians.
  We can choose one line segment which does not
  contain $p$ among $xq$, $xq'$, $x'q$ and $x'q'$ in constant time
  based on distances $d(s,q)$ and $d(s',q')$. This is because one of $d(s,\cdot)$ and $d(s',\cdot)$ is
  increasing and the other is decreasing in the domain of $xx'$. 
  Since at least one of the two search spaces is reduced by a constant fraction in every
  iteration, we can find $p$ in $O(\log n)$ time.

We find the median of the $\lfloor m/2 \rfloor +1$ intersections by the  bisecting curves on
$e$ and determine which
  side of the median contains $c_e$ in $m\cdot O(\log n)$ time using
  the convexity of $d(s,\cdot)$ on $e$ for each site $s$ in $S$.  Let
  $e'$ be the side (line segment) of the median in $e$ that does not
  contain $c_e$.  There are at least $\lfloor m/4\rfloor$ pairs of sites whose
  intersection points lie on $e'$.  For such a pair $(s,s')$, we have
  either $d(s,z) > d(s',z)$ or $d(s,z)< d(s',z)$ for every point
  $z \in e\setminus e'$. Thus we can discard at least one site for each such pair:
  discard $s$ if $d(s,z)<d(s',z)$, and discard $s'$ otherwise.
  Therefore, we can discard at least $\lfloor m/4\rfloor$ sites in each iteration.

  After discarding at least $\lfloor m/4\rfloor$ sites, we update $S_1$ and
  $S_2$ accordingly.  Then we pair the remaining sites in the same set and
  discard some sites repeatedly until there remain only a constant
  number of candidates of the sites in $S$ farthest from $c_e$. This
  takes $O(m\log n)$ time because the number of sites in $S$ we
  consider decreases by a constant factor.

  For a constant number of candidates, we compute $c_e$ directly in
  $O(\log n)$ time as we do in Section~\ref{sec:center-three}.
\end{proof}

\paragraph{Finding a small triangle containing the geodesic
  center.}  Now we have the triangle $t$ of the triangulation
containing $c$.  For each site $s$, we consider the geodesic convex
hull $\ch_s$ of $s\cup t$.  Imagine the subdivision of $t$ by the
extensions of the edges of $\ch_s$, except the edges of $t$, towards $t$.
% , except for the edges of $t$.
See Figure~\ref{fig:subdivide-ch}(a).  Every region in this
subdivision is a convex polygon with at most five vertices.  These regions
can be sorted along the boundary of $t$.

Applying binary search, we can find the region in the subdivision of $t$
with respect to $s$ containing $c$ in $O(m\log^2 n)$ time.  However,
in this case, we have $m$ sites. If we do this for every site, it
takes $O(m^2\log^2 n)$ time.  To do this more efficiently, we apply
additional prune and search.

\begin{figure}
  \begin{center}
    \includegraphics[width=0.7\textwidth]{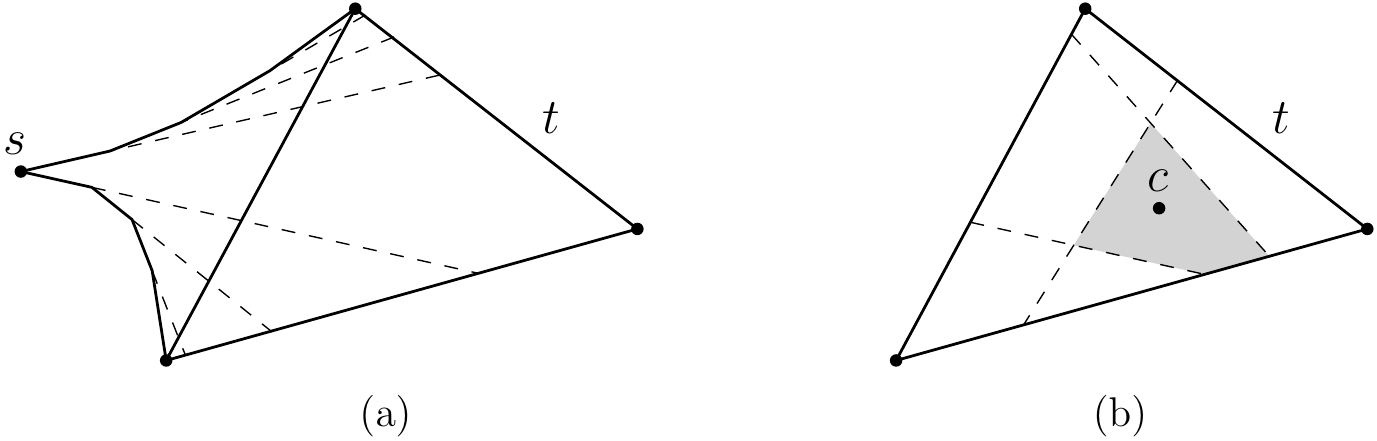}
    \caption {
      (a) The subdivision of $t$ with respect to a site $s$.
      (b) The arrangement of extensions from the subdivisions of $t$
      with respect to the sites in $S$.
      \label{fig:subdivide-ch}}
  \end{center}
\end{figure}

For each site $s$, we choose the median extension in the
subdivision of $t$ with respect to $s$.  Let $L$ be the set
of all median extensions from all sites. See
Figure~\ref{fig:subdivide-ch}(b). 
For a site $s$, the search space (regions in the subdivision of $t$
with respect to $s$) can be halved by determining
which side of the median extension for $s$ in $L$ contains $c$.
Once we have the region containing $c$ in the arrangement of $L$,
we can determine which side of the median extension for $s$ for every site $s$.
We find the region containing $c$ in the arrangement of $L$ using a $(1/r)$-\emph{cutting}
for $L$ as follows.

Consider the range space
$$X=(L, \{\{\ell \in L : \ell \cap \tau \neq \phi\} : 
\tau\subseteq t \textnormal{ is a triangle.}\}).$$
Then $X$ has finite VC-dimension, which can be shown by using a way
similar to the one for lines in the plane~\cite{nets}.  Let $r$ be a
sufficiently large constant.  We compute an $(1/r)$-net $N\subseteq L$
for $X$ of size $O(r\log r)$.  Then any triangle in the triangulation
of $N$ intersects $O(|L|/r)$ line segments by the property of
$(1/r)$-nets, where $|L|$ denotes the cardinality of $L$.  
Note that every line segment in $N$ is a chord of $P$.
By applying the chord-oracle for every line segment in
$N$, we can compute the triangle $t'$ in the triangulation of $N$
containing $c$ in $O(m\log n)$ time.  We search further with the line
segments in $L$ intersecting $t'$.  In $O(\log m)$ iterations, we can
find a triangle $\tau\subseteq t'$ containing $c$ which is intersected
by a constant number of line segments in $L$.  Then we directly
compute the region of the arrangement of $L$ containing $c$ in
$O(\log n)$ time. This takes $O(m\log m\log n)$ time in total.

Therefore, in $O(m\log m\log n)$ time, the search space is halved
(regions in the subdivision of $t$) for every site $s$.  Thus, in
$O(m\log m\log^2 n)$ time, we can find the region containing $c$ in the
subdivision of $t$ with respect to $s$ for every site $s$.

We can compute the intersection of all the regions containing $c$
in $O(m\log m)$ time because each region is a convex polygon with at most five vertices.
After computing the intersection, we triangulate the intersection and
find the triangle containing $c$ in $O(m\log m\log n)$ time.  The
resulting triangle $\tau'$ satisfies the condition we want to obtain.

\paragraph{Finding the geodesic center inside the smaller
  triangle.}  For each site $s$, the geodesic path $\pi(s,x)$ between
$s$ and any point $x$ in $\tau'$ has the same combinatorial structure,
which means that $\pi(s,x)$ is a hyperbolic function.  Thus, our
problem reduces to computing the lowest point of the upper envelope of
$O(m)$ functions.  Pollack et al.~\cite{pollackComputingCenter}
present a procedure for this problem that takes $O(m)$ time.

Therefore, we have the following theorem.
\begin{theorem}
  \label{thm:center-point}
  The geodesic center of $m$ points contained in a simple polygon with
  $n$ vertices can be computed in $O(m\log m\log^2 n)$ time after the
  shortest path data structure for the simple polygon is constructed.
\end{theorem}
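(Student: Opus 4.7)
The plan is to adapt the three-stage approach of Pollack et al.~\cite{pollackComputingCenter}---first locate a triangle of the polygon triangulation containing the center $c$, then refine it to a smaller triangle on which every geodesic distance function has a fixed combinatorial type, then solve the remaining lowest-point-of-upper-envelope problem directly---while replacing every $\Theta(n)$ primitive by an $O(\log n)$-time one using the shortest path data structure. The triangulation and the balanced binary search tree of Guibas et al.~\cite{shortest-path-tree} on its chords come for free from the preprocessing, so the first stage reduces to running the chord-oracle of Lemma~\ref{lem:chord-oracle} at $O(\log n)$ chords, contributing $O(m\log^2 n)$ overall.

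Inside the triangle $t$ returned by stage one, I would consider, for each site $s\in S$, the $O(n)$-region subdivision of $t$ obtained by extending the edges of the geodesic convex hull $\ch_s$ of $s\cup t$ through $t$; on each region, $\pi(s,\cdot)$ has a fixed combinatorial type. A single site can be handled by binary search over its extensions in $O(\log^2 n)$ time using the chord-oracle, but doing this independently for all $m$ sites would cost $O(m^2\log^2 n)$. The key idea is to halve all $m$ search spaces simultaneously: pick the median extension for each site, collect them into a set $L$, and locate the cell of the arrangement of $L$ that contains $c$; knowing that cell tells us which half of every per-site search space to keep. To locate that cell efficiently, I would construct a $(1/r)$-net $N\subseteq L$ of size $O(r\log r)$ for the range space whose ranges are ``extensions hit by a sub-triangle of $t$'' (which has finite VC-dimension by a standard argument), triangulate $N$, and run the chord-oracle on each of its $O(r\log r)$ edges to find the triangle of the triangulation of $N$ that contains $c$. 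Recursing $O(\log m)$ times inside that triangle, which by the net property meets only $O(|L|/r)$ extensions, drops $|L|$ to $O(1)$ and yields the sought cell in $O(m\log m\log n)$ time per halving round. After $O(\log n)$ halving rounds every per-site search space has shrunk to a single region; intersecting these $m$ constant-complexity convex regions and retriangulating gives the desired small triangle $\tau'$ within another $O(m\log m\log n)$ time, so stage two costs $O(m\log m\log^2 n)$.

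Inside $\tau'$, every $d(s,\cdot)$ is a hyperbolic function of the point in $\tau'$, so computing $c$ reduces to finding the lowest point of the upper envelope of $m$ such functions over a convex domain; the linear-time procedure of Pollack et al.~\cite{pollackComputingCenter} handles this in $O(m)$ time, which is absorbed by the previous bound. Summing the three stages gives the claimed $O(m\log m\log^2 n)$ bound, as advertised.

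The main obstacle is stage two: the naive per-site binary search costs $O(m^2\log^2 n)$, and shaving this down to $O(m\log m\log^2 n)$ requires the arrangement-of-medians trick together with a cutting-based point-location step. The delicate points are verifying that the relevant range space has bounded VC-dimension (so $(1/r)$-nets of size $O(r\log r)$ exist), and checking that each edge of the net is indeed a chord of $P$ so that the chord-oracle of Lemma~\ref{lem:chord-oracle} applies inside the recursive point-location. Everything else---the chord-oracle's own prune-and-search, the $O(\log n)$ computation of bisector--chord intersections, and the final upper-envelope step---is either already proved in the excerpt or directly quoted from prior work.
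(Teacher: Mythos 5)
Your proposal reproduces the paper's three-stage argument essentially verbatim: (i) locate the triangulation triangle containing $c$ via $O(\log n)$ invocations of the chord-oracle of Lemma~\ref{lem:chord-oracle}; (ii) halve each per-site search space of $\ch_s$-extensions simultaneously by collecting the median extensions into $L$, computing a $(1/r)$-net $N$ for the same bounded-VC range space, running the chord-oracle on the edges of the triangulation of $N$, and recursing $O(\log m)$ times, for $O(m\log m\log n)$ per halving round and $O(\log n)$ rounds; and (iii) finishing inside the small triangle $\tau'$ with Pollack et al.'s linear-time upper-envelope procedure. The structure, cost accounting, and the two delicate points you flag (finite VC-dimension of the range space, and each net segment being a $P$-chord so the chord-oracle applies) all match the paper's proof, so this is the same approach.
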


\section{Topological Structures of Voronoi Diagrams}
In this section, we define the topological structure of a Voronoi
diagram and show how to compute the Voronoi diagram from its
topological structure.  The topological structure of a Voronoi
diagram represents the adjacency of the Voronoi cells.

The common boundary of any two adjacent Voronoi cells is connected for
the nearest-point and farthest-point Voronoi diagrams of point sites
in a simple polygon~\cite{Aronov-VD-1989,FVD}.  Similarly, 
this also holds for the higher-order Voronoi diagram of
point sites in a simple polygon.
% It is known that the common boundary of any two adjacent Voronoi
% cells is connected for the nearest-point and farthest-point Voronoi
% diagrams~\cite{Aronov-VD-1989,FVD}.  This also holds for the
% higher-order Voronoi diagram inside a simple polygon.  while it is
% not true for a polygonal domain with holes~\cite{LL-KVD-2013}.}

\begin{figure}
  \begin{center}
    \includegraphics[width=0.7\textwidth]{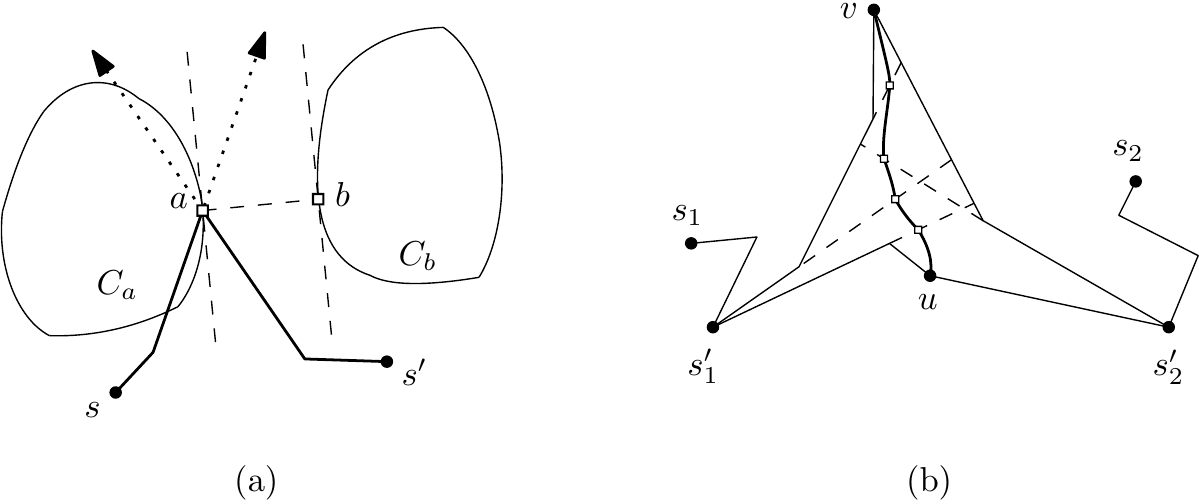}
    \caption {
      (a) For any point $x$ in $\pi(a,b)$,
      we have $d(x,s)\geq d(x,s')$. This contradicts that $C_a$ and $C_b$ are in the same
      Voronoi cell of $\kvd$.
      (b) The number of edges in the common boundary of two adjacent Voronoi 
      cells is bounded by the total complexity of
      $\pi(s_1',v)$, $\pi(s_1',u)$, $\pi(s_2',v)$ and $\pi(s_2',u)$,
      where $u$ and $v$ are endpoints of the common boundary.
      \label{fig:bisect}}
  \end{center}
\end{figure}

\begin{lemma}
  A Voronoi cell of $\kvd$ is connected for any $k$ with
  $1\leq k\leq m-1$.  Moreover, the common boundary of any two
  adjacent Voronoi cells in $\kvd$ is connected.
\end{lemma}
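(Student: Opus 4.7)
The plan is to prove both parts by a single contradiction argument of the kind indicated by Figure~\ref{fig:bisect}(a). For the first part, suppose that a cell $C_T$ of $\kvd$ is disconnected, and pick interior points $a, b$ of two distinct components $C_a$ and $C_b$; by ``interior'' I mean that the strict inequalities $d(a, s) < d(a, s')$ and $d(b, s) < d(b, s')$ hold for every $s \in T$ and every $s' \in S \setminus T$. I would then consider the geodesic $\pi(a, b)$: since $a$ and $b$ lie in different components of $C_T$, the path must exit $C_T$. The target is to identify a single pair $(s, s') \in T \times (S \setminus T)$ such that $d(x, s) \geq d(x, s')$ holds for \emph{every} point $x \in \pi(a, b)$, which directly contradicts the strict inequalities at the endpoints $x = a$ and $x = b$.

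To extract such a pair, let $p$ be the first point at which $\pi(a, b)$ meets $\partial C_T$, and let $b(s, s')$ be the bisecting curve containing $p$; just past $p$ the path enters the region where $d(\cdot, s) > d(\cdot, s')$. The technical claim is that in fact $d(\cdot, s) \geq d(\cdot, s')$ along all of $\pi(a, b)$. I would establish this by showing that the sign-change pattern of $f(x) := d(x, s) - d(x, s')$ along $\pi(a, b)$ cannot be ``in and out,'' using the convexity of $d(\cdot, s)$ and $d(\cdot, s')$ along a geodesic in a simple polygon (which follows by unfolding $\pi(a, b)$ around its reflex-vertex bends into a straight line and noting that unfolded distance functions are ordinary Euclidean distances) together with the fact that two geodesics in a simple polygon cannot cross transversally. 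The main obstacle is that different exits of $\pi(a, b)$ from $C_T$ might cross different bisecting curves, so the pair $(s, s')$ must be chosen \emph{globally}; I plan to handle this by ordering the bisector crossings along $\pi(a, b)$, taking the ``outermost'' pair (first exit from $a$'s side and last re-entry on $b$'s side) and arguing, via the convexity of $f$ restricted to each sub-segment of $\pi(a, b)$ on which the shortest-path-map regions of $s$ and $s'$ are fixed, that this pair is shared.

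For the second part, the common boundary of two adjacent cells $C_T$ and $C_{T'}$ with $T \triangle T' = \{s_1, s_2\}$ lies in the bisecting curve $b(s_1, s_2)$. If it were disconnected into two sub-arcs, I would concatenate a short path connecting the two sub-arcs through the interior of $P$ with paths inside $C_T$ and inside $C_{T'}$ (which are connected by part one) to obtain a simple closed curve in $P$ that separates one of $C_T$ or $C_{T'}$ into two pieces, contradicting the cell-connectivity just proved. This reduces the second part of the lemma to the first, and uses only the Jordan-curve separation property of $b(s_1, s_2)$ in the simply connected domain $P$ plus the cell connectivity established above.
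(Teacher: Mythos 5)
Your proposal has genuine gaps in both halves.

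\textbf{Part 1.} You pick arbitrary interior points $a\in C_a$, $b\in C_b$, whereas the paper's proof crucially chooses $a,b$ to \emph{minimize} $d(a,b)$ over $C_a\times C_b$. Minimality forces $a$ onto the boundary of $C_a$ (hence onto some bisecting curve $b(s,s')$) and, more importantly, forces the first edge of $\pi(a,b)$ to be orthogonal to the tangent of $b(s,s')$ at $a$. That local orthogonality is exactly what lets the paper argue the sign of $d(\cdot,s)-d(\cdot,s')$ along \emph{all} of $\pi(a,b)$: the curves $\pi(a,s)$, $\pi(a,s')$, and the two shafts emanating from $a$ opposite to the first edges of $\pi(a,s)$ and $\pi(a,s')$ have known sign behavior (a short triangle-inequality calculation) and sandwich $\pi(a,b)$ on the ``wrong'' side. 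Your substitute — ``the sign-change pattern of $f(x)=d(x,s)-d(x,s')$ along $\pi(a,b)$ cannot be in and out'' by convexity — does not hold: $d(\cdot,s)$ and $d(\cdot,s')$ are each convex along a geodesic (this part is a true CAT(0) fact, though your unfolding justification misdescribes it, since unfolding $\pi(a,b)$ does not turn $d(\cdot,s)$ into a Euclidean distance), but their \emph{difference} is a difference of convex functions and need not be monotone, unimodal, or have a single zero. So the central step ``$f$ changes sign at most once along $\pi(a,b)$'' is unjustified. You also flag yourself the obstacle that different exits of $\pi(a,b)$ from $C_T$ may involve different site pairs and propose to ``take the outermost pair,'' but nothing is given to show a single pair governs the whole path; without the minimality/orthogonality trick that step remains open.

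\textbf{Part 2.} Your Jordan-curve reduction is a different idea from the paper's: the paper takes an endpoint $p$ of a component $\Gamma$ of the common boundary, notes $p$ is equidistant from $s,s',s''$, and invokes connectedness of bisectors in the three-site nearest-point Voronoi diagram (Aronov) to conclude that just past $p$, $b(s,s')$ lies in the $s''$-closer region and hence off the common boundary. Your proposed closed curve (a path in $C_T$ concatenated with a path in $C_{T'}$ between the two pieces $\Gamma_1,\Gamma_2$) does exist by part 1, but you do not establish why it separates one of the cells, nor where the contradiction actually lands; the arc of $b(s,s')$ between $\Gamma_1$ and $\Gamma_2$ lies in third cells and is not obviously in conflict with the closed curve you build. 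The idea is plausible but the separation step needs an explicit argument that isn't there.

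In short: same ``connect two components by a geodesic and get a sign contradiction'' skeleton as the paper, but the load-bearing minimality choice is missing, its replacement (convexity of the difference) is false, and the second half is sketched with a different topological idea whose key separation step is not carried out.
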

\begin{proof}
  Assume to the contrary that the Voronoi cell of a $k$-tuple is not
  connected.  Consider two connected components $C_a$ and $C_b$ of the
  Voronoi cell, and let $a$ and $b$ be the two points with $a\in C_a$
  and $b\in C_b$ that minimize $d(a,b)$.  Note that the line tangent
  to $C_a$ at $a$ is orthogonal to the edge of $\pi(a,b)$ incident to
  $a$.  See Figure~\ref{fig:bisect}(a).  Since $a$ is on the boundary
  of the Voronoi cell, it lies on the bisecting curve of two
  sites, say $s$ and $s'$.  We claim that $C_a$ and $C_b$ lie in
  different sides of $b(s,s')$, which contradicts that $C_a$ and $C_b$
  are contained in the same Voronoi cell of $\kvd$.  For any point $x$
  in $\pi(a,s)$, we have $d(x,s)\leq d(x,s')$. Similarly, for any
  point $x$ in $\pi(a,s')$, we have $d(x,s)\geq d(x,s')$. Also,
  consider the extension of the edge of $\pi(a,s)$ incident to $a$
  towards $a$ until it reaches the boundary of $P$. For any point $x$
  in this extension, we have $d(x,s)\geq d(x,s')$.
  Similarly, we have $d(x,s)\leq d(x,s')$ for any point $x$
  on the extension of the edge of $\pi(a,s')$ incident to $a$
  towards $a$ until it reaches the boundary of $P$.
    % The same
  % holds for $s'$.
  This implies that for any point $x$ in $\pi(a,b)$,
  we have $d(x,s)\geq d(x,s')$.  Therefore the claim holds.
	
  For the second part of the lemma, observe that the common boundary
  of the Voronoi cells of any two distinct $k$-tuples $S_k$ and $S_k'$
  of $S$ is a part of the bisecting curve of two sites $s$ and $s'$
  with $s\in S_k\setminus S_k'$ and $s'\in S_k'\setminus S_k$.  Let
  $p$ be an endpoint of a connected component $\Gamma$ of the common
  boundary.  We show that $b(s,s')\setminus \Gamma$ is not contained
  in the common boundary.  Note that $p$ is a degree-3 vertex of
  $\kvd$, which is equidistant from $s, s'$ and another site, say
  $s''$.  Consider the nearest-point Voronoi diagram of $s, s'$ and
  $s''$. Since the common boundary of any two Voronoi cells is
  connected for this diagram~\cite{Aronov-VD-1989}, we have
  $d(s'',x)<d(s,x)=d(s',x)$ for any point $x$ lying on the connected
  component of $b(s,s')\setminus \Gamma$ with endpoint
  $p$. Therefore, the connected component of
  $b(s,s')\setminus \Gamma$ with endpoint $p$ is not
  contained in the common boundary. Similarly, we can prove that the
  other component of $b(s,s')\setminus \Gamma$ is not
  contained in the common boundary, and thus the lemma holds.
\end{proof}

The topological structure of $\kvd$ is defined as follows for
$1\leq k\leq m-1 $.
% Note that the topological structures of $\vd$ and $\fvd$ are defined
% accordingly since $\vd$ is $1$-$\vd$ and $\fvd$ is $(m-1)$-$\vd$.
Imagine that we apply vertex suppression for every degree-2 vertex of
the Voronoi diagram while preserving the topology of the Voronoi
diagram.  Vertex suppression of a vertex $v$ of degree 2 is the
operation of removing $v$ (and the edges incident to $v$) and adding an edge connecting
the two neighboring vertices of $v$.
Then the resulting graph consists of vertices of degree-1 and degree-3
and edges connecting the vertices.
We call the dual of the resulting graph the \emph{adjacency
  graph} of the Voronoi diagram.
% \ccheck{See Figure~\ref{fig:bisect}(a).} \complain{This
% figure shows only a special case with three Voronoi sites. I do not see if this figure
% helps anything.}
The adjacency graph is a planar graph with complexity $O(k(m-k))$, because
the number of degree-1 and degree-3 vertices of $\kvd$ is
$O(k(m-k))$~\cite{LL-KVD-2013}.
The adjacency graph of a Voronoi diagram represents the topological structure
of the Voronoi diagram.

Assume that we have the adjacency graph of the Voronoi diagram
together with the exact positions of the degree-1 and degree-3
vertices of the Voronoi diagram. Each Voronoi cell is defined by $k$ sites,
but any two adjacent Voronoi cells share $k-1$ sites.
Consider two adjacent Voronoi cells $V_1$ and $V_2$.
Let $s_1$ and $s_2$ be the two sites
defining $V_1$ and $V_2$, respectively, which are not shared by them.
Then the common boundary of $V_1$ and $V_2$ is a simple curve connecting
two Voronoi vertices $v$ and $u$ of degree-1 or degree-3 such that
each Voronoi edge in the common boundary is
a part of $b(s_1,s_2)$ lying between $v$ and $u$.
See Figure~\ref{fig:bisect}(b).

To compute the Voronoi edges in the common boundary of $V_1$
  and $V_2$,
% chain between $v$ and $u$,
we consider the geodesic paths $\pi(s_i',v)$ and $\pi(s_i',u)$ for
$i=1,2$, where $s_i'$ is the junction of $\pi(s_i,v)$ and
$\pi(s_i,u)$.  Then for any vertex $x$ in
$\pi(s_1',v) \cup \pi(s_1',u)$, there exists a point $q$ in $b(s_1,s_2)$ 
lying between $v$ and $u$ such that
$\pi(s_1,q)$ and $\pi(s_1,x)$ have the same combinatorial structure.
The same holds for $s_2$.  This implies that the number of edges in
the common boundary %chain
is bounded by the total complexity of $\pi(s_1',v) \cup \pi(s_1',u)$
and $\pi(s_2',v) \cup \pi(s_2',u)$. Thus, we compute the geodesic
paths explicitly and consider every edge of the geodesic paths.

Therefore, we can compute %the Voronoi edges in
the common boundary of two adjacent Voronoi cells in time linear to
its complexity plus $O(\log n)$.  This leads to $O(T_1+T_2\log n)$
time for computing the Voronoi diagram from the topological structure
of the diagram, where $T_1$ is the combinatorial complexity of the Voronoi diagram
and $T_2$ is the combinatorial complexity of the adjacency graph.

\begin{lemma}
  \label{lem:topological}
  We can compute the Voronoi diagram of $m$ points in a simple polygon
  with $n$ vertices in $O(T_1+T_2\log n)$ time once the adjacency
  graph and the exact positions of the degree-1 and degree-3
  vertices of the Voronoi diagram are given,
  % , and the degree-1 and degree-3 vertices
  % at their places are given,
  where $T_1$ is the combinatorial complexity of the
  Voronoi diagram and $T_2$ is the combinatorial complexity of the adjacency graph.
\end{lemma}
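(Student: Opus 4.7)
The plan is to process the edges of the adjacency graph one at a time, reconstructing the chain of hyperbolic arcs that forms the common boundary of the corresponding pair of adjacent Voronoi cells, and then gluing the chains together according to the graph's incidence structure.

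Fix an edge of the adjacency graph joining two adjacent cells $V_1,V_2$. Since each cell is defined by a $k$-tuple of sites and any two adjacent cells share $k{-}1$ of them, I can identify in constant time the two sites $s_1\in V_1$ and $s_2\in V_2$ that are unique to each cell. The corresponding chain of Voronoi edges lies on $b(s_1,s_2)$ and connects two given endpoints $u$ and $v$ whose positions are provided as part of the input. Using the shortest path data structure, I would compute the junctions $s_1'$ of $\pi(s_1,u)$ and $\pi(s_1,v)$, and $s_2'$ of $\pi(s_2,u)$ and $\pi(s_2,v)$, in $O(\log n)$ time each, and then extract the four geodesic paths $\pi(s_1',u),\pi(s_1',v),\pi(s_2',u),\pi(s_2',v)$ explicitly in time $O(\log n)$ plus linear in their size by Guibas--Hershberger.

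As noted in the discussion preceding the lemma statement, every breakpoint between two consecutive hyperbolic arcs of the chain corresponds to an edge of one of these four paths where the combinatorial structure of the shortest path from $s_1$ or $s_2$ changes. Consequently, the number of arcs in the chain is bounded by the total size of the four paths, and I can reconstruct each arc in $O(1)$ time by sweeping the paths in parallel and computing, for each incident edge of $\pi(s_i,\cdot)$, the hyperbolic equation from the defining triangles.

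The crux is the cost analysis. Per edge of the adjacency graph the query overhead is $O(\log n)$, so over all $O(T_2)$ edges this sums to $O(T_2\log n)$. The per-arc cost is $O(1)$ per arc produced, and since every hyperbolic arc of the Voronoi diagram belongs to exactly one common boundary between two cells and so is produced exactly once, this contribution totals $O(T_1)$. Stitching the chains together using the adjacency graph contributes only $O(T_2)$ additional work. Combining gives the claimed $O(T_1+T_2\log n)$ bound. The main thing to get right is that double-counting does not occur in the second summation: this uses the fact (established in the preceding lemma) that the common boundary of two adjacent cells is a single connected chain, so each arc is unambiguously charged to one adjacency-graph edge.
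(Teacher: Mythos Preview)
Your proposal is correct and follows essentially the same approach as the paper: process each adjacency-graph edge, identify the distinguishing sites $s_1,s_2$, compute the junctions $s_1',s_2'$ and the four geodesic paths $\pi(s_i',u),\pi(s_i',v)$ via Guibas--Hershberger, and read off the hyperbolic arcs of $b(s_1,s_2)$ between $u$ and $v$ from the edges of those paths. Your cost accounting---$O(\log n)$ overhead per adjacency edge and $O(1)$ per output arc, summing to $O(T_1+T_2\log n)$---matches the paper's, and your remark that connectedness of the common boundary prevents double-charging is exactly what the paper relies on implicitly.
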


Therefore, in the following, we focus on computing the adjacency
  graphs and the exact positions of the degree-1 and degree-3
  vertices % adjacency graphs
  of $\vd$, $\kvd$ and $\fvd$.
% that is, the adjacency graphs of them
% with degree-3 vertices at their places.}

\section{The Geodesic Nearest-Point Voronoi Diagram}
\label{sec:NVD}
Fortune~\cite{F-VD-1987} presented an $O(m\log m)$-time algorithm to
compute the nearest-point Voronoi diagram of $m$ points in the plane
by sweeping the plane with a horizontal line from top to bottom.
During the sweep, the algorithm computes a part of the Voronoi diagram of sites
lying above the horizontal line, which finally becomes the complete
Voronoi diagram in the end of the sweep. Fortune defined
two types of events and showed how the algorithm processes $O(m)$
events in the order of their $y$-coordinates to compute the Voronoi diagram.
Each event can be handled in $O(\log m)$ time, which leads to $O(m\log m)$
total running time.

In our case, we sweep the polygon with a geodesic path $\pi(o,x)$ for
a fixed point $o$ on the boundary of $P$ and a point $x$ moving along
the boundary of $P$ from $o$ in clockwise order. The point $x$ is
called the \emph{sweep point}.  If we compute all degree-1,
degree-2 and degree-3 vertices of the Voronoi diagram during the
sweep, we may not achieve the running time better than
$O((n+m)\log(n+m))$ as there are $O(n+m)$ such vertices.
The key to improve the running time is to
compute the topological structure of the Voronoi diagram first which
consists of the degree-1 and degree-3 vertices of the Voronoi diagram
and the adjacency of the Voronoi cells.  Then we construct the complete
Voronoi diagram, including degree-2 vertices, from its topological
structure using Lemma~\ref{lem:topological}.

Let $o$ be an arbitrary point on $\bd P$, where $\bd P$ denotes the
boundary of $P$.  Consider the sweep point $x$ that moves from $o$
along $\bd P$ in clockwise order.  We use $P(x)$ to denote the
subpolygon of $P$ bounded by $\pi(o,x)$ and the part of $\bd P$ from
$o$ to $x$ in clockwise order. In other words, $P(x)$ is the region
swept by $\pi(o,x)$. Note that $P(x)$ is weakly simple. See
Figure~\ref{fig:events}. % As $x$ moves along $\bd P$, $P(x)$ does not
% decrease, that is,
Clearly, $P(x_1)\subseteq P(x_2)$ for any two points $x_1$
and $x_2$ on $\bd P$ such that $x_1$ comes before $x_2$ from $o$ in
clockwise order along $\bd P$.

For a site $s \in P(x)$, let $R_s(x)$ be the set
$\{p\in P(x)\mid d(p,s)\leq d(p,\pi(o,x))\}$.  By definition,
$R_s(x_1)\subseteq R_s(x_2)$ for any two points $x_1$
and $x_2$ on $\bd P$ such that $x_1$ comes before $x_2$ from $o$ in
clockwise order along $\bd P$.

\begin{lemma}
  \label{lem:site-on-path}
If $s$ lies on an edge of $\pi(o,x)$, $R_s(x)$ is a line segment
that is incident to $s$ and orthogonal to the edge.
\end{lemma}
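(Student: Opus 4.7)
The plan is to turn the defining inequality into an equality. Because $s\in\pi(o,x)$ we automatically have $d(p,\pi(o,x))\le d(p,s)$ for every $p\in P(x)$, so the condition in the definition of $R_s(x)$ is actually an equality:
$$R_s(x)=\{p\in P(x):s\text{ is a geodesic-nearest point of }\pi(o,x)\text{ to }p\}.$$
So the task reduces to identifying the locus of points of $P(x)$ whose nearest point on the polyline $\pi(o,x)$ is the interior point $s$.

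Let $e$ be the edge of $\pi(o,x)$ containing $s$ and let $\rho$ be the ray from $s$ perpendicular to $e$ heading into $P(x)$. For one direction, I would first verify that for $p$ on $\rho$ close enough to $s$, the segment $sp$ lies inside $P(x)$, giving $d(p,s)=\|sp\|$; since $s$ is the foot of perpendicular from $p$ to $e$ and $\pi(o,x)$ is a taut shortest path, no other point on $\pi(o,x)$ can beat $s$ on distance to $p$ as long as $p$ stays near $s$. Hence an initial subsegment of $\rho$ lies in $R_s(x)$.

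For the converse, suppose $p\in R_s(x)\setminus\{s\}$ and consider $f(t)=d(p,t)$ for $t$ ranging over a small neighborhood of $s$ on $e$. In that neighborhood the combinatorial type of the geodesic $\pi(p,t)$ is fixed, so $f(t)=d(p,v_k)+\|v_kt\|$, where $v_k$ is the last vertex of $\pi(p,t)$ (with $v_k=p$ if the geodesic is a single straight segment). Minimality of $f$ at $t=s$ forces $v_ks\perp e$. I would then eliminate the bent alternatives: if $v_k$ were a vertex of $\pi(o,x)$ itself, then $v_k\in\pi(o,x)$ would be strictly closer to $p$ than $s$ (since $d(p,v_k)=d(p,s)-\|v_ks\|$), contradicting that $s$ is a nearest point of $\pi(o,x)$ to $p$; in the remaining case where $v_k$ is a reflex vertex of $P$ on $\partial P(x)\setminus\pi(o,x)$, general position lets us exclude the codimension-one coincidence that it happens to lie exactly on the perpendicular to $e$ at $s$. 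Hence $v_k=p$, so $p$ lies on $\rho$.

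Putting the two directions together, $R_s(x)$ coincides with the initial subsegment of $\rho\cap P(x)$ starting at $s$ along which $s$ remains globally closest on $\pi(o,x)$, and this is a line segment incident to $s$ and orthogonal to $e$. The delicate step is the elimination of the bent-geodesic cases in the variational argument; everything else is a standard nearest-point-to-a-polyline calculation.
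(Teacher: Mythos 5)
Your opening move is the same as the paper's: because $s\in\pi(o,x)$, the defining inequality $d(p,s)\le d(p,\pi(o,x))$ is automatically an equality, so $R_s(x)$ is exactly the set of points whose geodesic-nearest point on $\pi(o,x)$ is $s$, and the last edge of $\pi(p,s)$ must be orthogonal to the edge $e$ of $\pi(o,x)$ through $s$. That part matches the paper. The divergence is in how you eliminate the possibility that $\pi(p,s)$ bends. Your first sub-case (the bend occurs at a vertex $v_k$ of $\pi(o,x)$) is handled cleanly and correctly: $v_k$ would then be a strictly closer point of $\pi(o,x)$, contradicting that $s$ is nearest. But your second sub-case (the bend is at a reflex vertex of $P$ off $\pi(o,x)$) is dismissed by "general position lets us exclude the codimension-one coincidence." That is a genuine gap: the paper's general position hypotheses are that no vertex of $P$ is equidistant from two sites and no point is equidistant from four sites; neither says anything about a reflex vertex lying on the perpendicular to an edge of $\pi(o,x)$ at a site. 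You are invoking a non-degeneracy assumption the paper does not make.

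The paper's own proof avoids needing that assumption. Instead of perturbing the coincidence away, it observes that if $\pi(p,s)$ bent at some reflex vertex $v$ with $vs\perp e$, then for every $p'$ in a small two-dimensional neighborhood of $p$ the path $\pi(p',s)$ would also end with the segment $vs$, so $s$ would remain the nearest point of $\pi(o,x)$ to $p'$. That puts an entire disk inside $R_s(x)$, contradicting that $p$ is a boundary point of $R_s(x)$ (which, being an equality set, is its own boundary). In other words, the bad configuration is shown to self-destruct rather than being assumed away, so no extra genericity is needed. Your variational framing is attractive and the $\pi(o,x)$-vertex sub-case is a nice touch, but to close the argument you should replace the general-position appeal with a neighborhood argument of this kind (or else explicitly add and justify the additional non-degeneracy hypothesis, which the paper does not). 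The forward direction you sketch ("an initial subsegment of $\rho$ lies in $R_s(x)$") is also left as an assertion, but it is routine and not the real issue.
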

\begin{proof}
Let $\gamma_s$ be the boundary of $R_s(x)$.  By
  definition, $d(p,s)=d(p,\pi(o,x))$ for any point $p$ in $\gamma_s$.
  Note that $d(p,s)=d(p,\pi(o,x))$ if and only if $s$ is the 
  point of $\pi(o,x)$ closest to $p$ under the geodesic metric.
  This is because $s$ lies on $\pi(o,x)$.
  Therefore, for any point $p$ in $\gamma_s$, the 
  edge of $\pi(s,p)$ incident to $s$ is orthogonal to the edge of $\pi(o,x)$
  containing $s$.
	
Moreover, we claim that $\gamma_s$ consists of a single line
  segment.  If $\pi(p,s)$ consists of more than one line segment for
  some point $p$ on $\gamma_s$, we can choose a sufficiently small
  neighborhood $N_p$ of $p$ contained in $P'$ 
  such that the point on $\pi(o,x)$ closest to $p'$ 
  is $s$ for any point $p' \in N_p$.  This implies that
  $N_p$ is contained in $R(x)$, which contradicts that $\gamma_s$ is a part
  of the boundary of $R(x)$.
\end{proof}
% The
% region $R_s(x)$ has the following property.

\begin{lemma}
  \label{lem:R-connect}
  For a site $s\in P(x)$, $R_s(x)$ is connected. 
\end{lemma}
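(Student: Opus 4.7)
My plan is to prove that $R_s(x)$ is \emph{geodesically star-shaped} with respect to $s$, meaning $\pi(s,p)\subseteq R_s(x)$ for every $p\in R_s(x)$. Since $s\in R_s(x)$ (as $d(s,s)=0\leq d(s,\pi(o,x))$) and geodesic paths are connected, this immediately yields connectedness of $R_s(x)$. The argument has two steps.

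\emph{Step 1: The geodesic $\pi(s,p)$ in $P$ lies inside $P(x)$.} The boundary of $P(x)$ consists of a piece of $\partial P$ and the geodesic $\pi(o,x)$. Since $\pi(s,p)$ is a geodesic in $P$, it cannot cross $\partial P$, so the only way it could leave $P(x)$ is by crossing $\pi(o,x)$. Suppose it exits at a point $r_1\in\pi(o,x)$ and re-enters at $r_2\in\pi(o,x)$. The sub-arc of $\pi(o,x)$ between $r_1$ and $r_2$ realizes $d(r_1,r_2)$ because $\pi(o,x)$ is itself a geodesic, so substituting it for the excursion of $\pi(s,p)$ outside $P(x)$ does not lengthen the path. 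By the general position assumption and uniqueness of geodesics, $\pi(s,p)$ must therefore be contained in $P(x)$.

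\emph{Step 2: Every $q\in\pi(s,p)$ lies in $R_s(x)$.} Because $q$ lies on the geodesic from $s$ to $p$, we have $d(s,p)=d(s,q)+d(q,p)$. Since $p\in R_s(x)$, we have $d(s,p)\leq d(p,\pi(o,x))$, and by the triangle inequality for the distance to a closed set (stated in the Preliminaries), $d(p,\pi(o,x))\leq d(p,q)+d(q,\pi(o,x))$. Combining,
\[
d(s,q)=d(s,p)-d(q,p)\leq d(p,\pi(o,x))-d(q,p)\leq d(q,\pi(o,x)),
\]
which together with Step~1 gives $q\in R_s(x)$. (The special case $s\in\pi(o,x)$ is already handled by Lemma~\ref{lem:site-on-path}, which shows $R_s(x)$ is a single line segment, hence connected; otherwise the above star-shapedness argument applies directly.)

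The main obstacle is Step~1: one must rule out the possibility that the geodesic from $s$ to $p$ in the ambient polygon $P$ briefly leaves the swept region $P(x)$ and returns. The key observation that unlocks this is that the ``free'' part of $\partial P(x)$ is itself a geodesic $\pi(o,x)$, so any detour outside $P(x)$ can be replaced by a sub-arc of $\pi(o,x)$ of no greater length, contradicting minimality of $\pi(s,p)$.
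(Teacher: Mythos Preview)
Your proof is correct and follows essentially the same star-shapedness argument as the paper: both show $\pi(s,p)\subseteq R_s(x)$ for every $p\in R_s(x)$ via the chain $d(s,q)=d(s,p)-d(p,q)\le d(p,\pi(o,x))-d(p,q)\le d(q,\pi(o,x))$, and both dispose of the case $s\in\pi(o,x)$ by invoking Lemma~\ref{lem:site-on-path}. Your Step~1 (that $\pi(s,p)\subseteq P(x)$ because the only free boundary of $P(x)$ is itself a geodesic) is in fact more explicit than the paper, which silently assumes this containment.
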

\begin{proof}
  If $s\in\pi(o,x)$, $R_s(x)$ is a line segment by Lemma~\ref{lem:site-on-path},
  and therefore it is connected.
  In the following, we assume that $s$ is not on $\pi(o,x)$ and 
  show that $\pi(p,s)\subseteq R_s(x)$ for any
  point $p\in R_s(x)$.  This implies that $R_s(x)$ is connected
  because $s$ is contained in $R_s(x)$ if $s\in P(x)$ by definition.
  Let $p$ be a point in $R_s(x)$.  Consider a point $r \in \pi(p,s)$.
  We have $d(r,s)=d(p,s)-d(p,r)$.  Moreover, we have
  $d(p,\pi(o,x)) - d(p,r)\leq d(r,\pi(o,x))$.  Since $p \in R_s(x)$,
  it holds that $d(p,s)\leq d(p,\pi(o,x))$.  Thus,
  $d(r,s) \leq d(r,\pi(o,x))$, and $r$ is also in $R_s(x)$ by definition.
  Therefore $R_s(x)$ is connected.
\end{proof}

We say that a subset $A$ of $P$ is \emph{weakly monotone}
with respect to a geodesic path $\gamma$ if the intersection of
$\pi(p, \gamma)$ with $A$ is connected for any point $p \in A$.
We define a \emph{shaft} 
from a point $x\in P$ towards a direction
to be the line segment connecting $x$ and $y$,
where $y$ is the first intersection point of the boundary of $P$
with the ray from $x$ towards the direction.

\begin{lemma}
  \label{lem:R-monotone}
  For a site $s\in P(x)\setminus \pi(o,x)$, the boundary of $R_s(x)$
  consists of one polygonal chain of $\bd P(x)$ and a simple curve whose
    both endpoints lie on $\bd P(x)$.
  Moreover, the simple curve is weakly monotone with
  respect to $\pi(o,x)$.
\end{lemma}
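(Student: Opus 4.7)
The plan is to first establish the topological shape of $R_s(x)$ via star-shapedness arguments, then extract the two-piece decomposition of its boundary, and finally obtain weak monotonicity by a direct triangle-inequality calculation.

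First I would strengthen the connectedness in Lemma~\ref{lem:R-connect}. The proof there already shows $\pi(p,s)\subseteq R_s(x)$ for every $p\in R_s(x)$, so $R_s(x)$ is geodesically star-shaped from $s$ and deformation retracts to $\{s\}$ along these geodesics; in particular it is simply connected. A symmetric argument handles the complement: if $q\in P(x)\setminus R_s(x)$ and $r\in\pi(q,\pi(o,x))$, then $d(q,\pi(o,x))=d(q,r)+d(r,\pi(o,x))$ (shortest-path) and $d(q,s)\le d(q,r)+d(r,s)$ (triangle inequality), so $d(r,s)\ge d(q,s)-d(q,r)>d(q,\pi(o,x))-d(q,r)=d(r,\pi(o,x))$, placing $r$ in $P(x)\setminus R_s(x)$. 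Thus $P(x)\setminus R_s(x)$ deformation retracts onto $\pi(o,x)$ and is connected.

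Next I would read off the structure of the boundary. Treating the weakly simple $P(x)$ as a topological disk, the closed, simply connected set $R_s(x)$ with connected complement in $P(x)$ has, as its relative boundary in $P(x)$, a single Jordan curve that alternates between arcs on $\bd P(x)$ and arcs in the interior of $P(x)$; on the latter arcs the definition forces $d(p,s)=d(p,\pi(o,x))$. Since any point $q\in\pi(o,x)$ satisfies $d(q,\pi(o,x))=0<d(q,s)$ (as $s\notin\pi(o,x)$), we have $R_s(x)\cap\pi(o,x)=\emptyset$, so every $\bd P(x)$-portion of the boundary lies on the $\bd P$-subarc of $\bd P(x)$. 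If the Jordan curve contained two or more interior arcs, each joining two points of $\bd P(x)$, then $R_s(x)$ would separate $P(x)$ into at least two complementary components, contradicting the connectedness of $P(x)\setminus R_s(x)$. Hence there is exactly one interior arc---a simple curve whose endpoints lie on $\bd P(x)$---and the rest of $\bd R_s(x)$ is a single polygonal subchain of $\bd P(x)$, exactly as the lemma claims.

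Finally, to prove weak monotonicity, let $A$ be the interior arc and fix $p\in A$. Take any two points $p_1,p_2\in\pi(p,\pi(o,x))\cap A$ with $p_2$ lying between $p_1$ and the endpoint on $\pi(o,x)$. Then $d(p_1,\pi(o,x))=d(p_1,p_2)+d(p_2,\pi(o,x))$ by the shortest-path property, while $d(p_1,s)\le d(p_1,p_2)+d(p_2,s)$ by the triangle inequality. Combining with $d(p_i,s)=d(p_i,\pi(o,x))$ for $i=1,2$ forces equality throughout, so $p_2\in\pi(p_1,s)$. Consequently, for every $r$ on the sub-arc of $\pi(p,\pi(o,x))$ between $p_1$ and $p_2$ we have $d(r,s)=d(p_1,s)-d(p_1,r)=d(p_1,\pi(o,x))-d(p_1,r)=d(r,\pi(o,x))$, so $r\in A$. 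Thus the entire sub-arc between $p_1$ and $p_2$ lies in $\pi(p,\pi(o,x))\cap A$, making this intersection connected.

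The main obstacle I anticipate is the topological step in the second paragraph: $P(x)$ is only weakly simple, and I must rule out degenerate contacts between $\pi(o,x)$ and $\bd P$ that could compromise the ``single Jordan curve'' description of $\bd R_s(x)$. The general-position assumption handles most such pathologies, but I would want to verify that the separation-count argument survives the limiting cases; if not, a small perturbation of $x$ (and a continuity argument on $R_s(x)$) should suffice.
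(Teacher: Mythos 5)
Your proof is correct and, while it rests on the same underlying geometric fact (the triangle inequality applied along geodesics towards and away from $\pi(o,x)$), it packages that fact differently from the paper.

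For the first claim, the paper introduces the ``shaft'' from a point $p\in R_s(x)$ --- the segment from $p$ towards $\bd P$ in the direction opposite to the first edge of $\pi(p,\pi(o,x))$ --- and shows, via the same inequality chain you use for the complement, that the shaft lies in $R_s(x)$. From this the paper concludes $R_s(x)$ meets $\bd P(x)$, which together with connectedness is taken to imply the stated boundary structure. You instead prove \emph{both} pieces of connectivity: $R_s(x)$ is geodesically star-shaped from $s$, and the complement retracts onto $\pi(o,x)$. This gives you the connectedness of the complement, which the paper leaves implicit, and lets you invoke a Jordan-separation argument to rule out more than one interior arc. This is a cleaner route to the ``one chain $+$ one curve'' decomposition, and your two star-shapedness directions are dual versions of what the shaft argument encodes.

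For weak monotonicity, the paper argues by contradiction: if some $p'\in\pi(p,\pi(o,x))\setminus\{p\}$ were in $R_s(x)$, the shaft from $p'$ would be in $R_s(x)$ and contain $p$, supposedly contradicting $p\in\bd R_s(x)$. That shaft argument requires the shaft from $p'$ to actually reach $p$, which is automatic only when $\pi(p,p')$ is a single segment; otherwise the shaft terminates at a reflex vertex before $p$. Your argument sidesteps this: you show directly, by forcing equality in the triangle inequality, that if $p_1,p_2$ both lie on $\pi(p,\pi(o,x))\cap A$ then the whole sub-arc between them lies on the bisector, so the intersection is an interval and hence connected. This avoids the case analysis on the combinatorial structure of $\pi(p,\pi(o,x))$ and yields a slightly more robust proof of connectivity of the intersection. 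Your caveat about degeneracies in $\bd R_s(x)$ is fair, but the paper implicitly assumes the same regularity of bisectors, so you are on equal footing there.
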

\begin{proof} 
  By Lemma~\ref{lem:R-connect}, $R_s(x)$ is connected. Thus to prove the first part
  of the lemma, it suffices to show that the boundary of $R_s(x)$ intersects $\bd P(x)$. 
  To do this, consider the shaft from a point $p\in R_s(x)$ in direction opposite to the
  edge of $\pi(p,\pi(o,x))$ incident to $p$. 
  We claim that the shaft is contained in $R_s(x)$.
   This is because
  $d(r,\pi(o,x))=d(p,\pi(o,x))+d(p,r) \geq d(p,s)+d(p,r)\geq d(r,s)$
  for any point $r$ in the shaft   by triangle inequality and the fact that
  $R_s(x)$ contains no point of $\pi(o,x)$. Note that an endpoint of the shaft is on $\bd P(x)$.
  Therefore, $R_s(x)$ intersects $\bd P(x)$, and the first part of the lemma holds.
  
  For the second part of the lemma, we claim that $\pi(p,\pi(o,x))\setminus \{p\}$
  does not intersect $R_s(x)$ for any point $p$ on $\bd R_s(x)\setminus \bd P(x)$. 
  Assume to the contrary that
  a point $p'$ in the path is in $R_s(x)$. We already showed that the
  shaft from $p'$ in direction opposite to the
  edge of $\pi(p',\pi(o,x))$ incident to $p'$ is contained in $R_s(x)$. Note that the shaft
  contains $p$. This contradicts that $p$ lies on the boundary of $R_s(x)$.
  Therefore, the second part of the lemma also holds.
\end{proof}

Now we consider the union of $R_s(x)$ for every site $s$ in $P(x)$
and denote it by $R(x)$. Figure~\ref{fig:events} shows three sites contained
  in $P(x)$ for a simple polygon $P$. The dashed region in Figure~\ref{fig:events}(a) is $R(x)$. 
Note that for any point $p \in P(x)$ and any site $s$ lying outside of $P(x)$,
it holds that $d(p,\pi(o,x))< d(p,s)$.
This implies that for any point $p\in R(x)$, the nearest site of $p$
is in $P(x)$. Therefore, for computing $\vd$ restricted to $R(x)$, we do not need
to consider any sites lying outside of $P(x)$. 

The following corollaries follow from the properties of $R_s(x)$.
\begin{corollary}
  \label{lem:beach-line-weakly}
  The closure of $P(x)\setminus R(x)$ is weakly simple.
\end{corollary}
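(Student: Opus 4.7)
The plan is to show that $\overline{P(x)\setminus R(x)}$ is bounded by a single closed curve that is injective up to isolated self-touches, which is precisely the sense in which $P(x)$ itself is weakly simple. The whole argument would be built on a ``downward projection'' property that I would establish first: for any $p\in P(x)\setminus R(x)$, the geodesic path $\pi(p,\pi(o,x))$ lies entirely in $P(x)\setminus R(x)$. This follows from the same triangle-inequality calculation used in Lemma~\ref{lem:R-connect}: for any $q\in\pi(p,\pi(o,x))$ and any site $s\in P(x)$, $d(q,s)\geq d(p,s)-d(p,q)\geq d(p,\pi(o,x))-d(p,q)=d(q,\pi(o,x))$, so $q\notin R_s(x)$. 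This simultaneously gives $P(x)\setminus R(x)$ a ``radial'' structure rooted at $\pi(o,x)$ and forces path-connectedness.

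Next I would decompose the boundary of $\overline{P(x)\setminus R(x)}$ into three families: (i) the portions of $\pi(o,x)$ that are not in the closure of $\bigcup_s R_s(x)$ (each site $s$ lying on $\pi(o,x)$ meets $\pi(o,x)$ only at $s$ itself and contributes a perpendicular line segment by Lemma~\ref{lem:site-on-path}); (ii) the portions of $\bd P(x)\cap\bd P$ not covered by any $R_s(x)$; and (iii) the ``beach-line'' arcs, namely the simple curves supplied by Lemma~\ref{lem:R-monotone} for each site $s\notin\pi(o,x)$ together with the perpendicular segments from Lemma~\ref{lem:site-on-path} for sites on $\pi(o,x)$. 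By those two lemmas every beach-line arc has both endpoints on $\bd P(x)$ and is weakly monotone with respect to $\pi(o,x)$.

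To finish, I would concatenate these three families into a single closed traversal in the cyclic order induced by $\bd P(x)$: starting from $o$, follow $\pi(o,x)$, and whenever the base of some $R_s(x)$ is encountered, detour along the corresponding beach-line arc to its other endpoint on $\bd P(x)$ and then resume along either $\bd P(x)\cap \bd P$ or the next beach-line arc. The downward projection property rules out a transversal crossing of two beach-line arcs, because such a crossing would force a point of $R(x)$ to lie on $\pi(p,\pi(o,x))$ for some $p\in P(x)\setminus R(x)$, contradicting the projection property. The main obstacle I anticipate is the careful bookkeeping at degenerate configurations: several $R_s(x)$ might share a point on $\pi(o,x)$ or on $\bd P$, or two arcs might touch tangentially. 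I would dispose of these by appealing to the general-position assumption on $S$ to limit the multiplicity of bisector incidences, and by observing that the residual isolated self-contacts of the resulting closed curve are exactly what weak simplicity of $\overline{P(x)\setminus R(x)}$ permits.
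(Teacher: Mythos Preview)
The paper states this result as a corollary with no proof at all, simply remarking that it (together with Corollary~\ref{lem:beach-line-order}) ``follows from the properties of $R_s(x)$'' established in Lemmas~\ref{lem:site-on-path}--\ref{lem:R-monotone}. Your proposal supplies precisely the details the authors leave implicit: the downward-projection property you prove is the contrapositive of the upward-shaft argument used inside the proof of Lemma~\ref{lem:R-monotone}, and the weak-monotonicity consequences you draw from it are the intended mechanism. So your approach is the one the paper has in mind, just spelled out.

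One small point to tighten: in your concatenation step you write ``follow $\pi(o,x)$, and whenever the base of some $R_s(x)$ is encountered, detour along the corresponding beach-line arc.'' For a site $s\notin\pi(o,x)$ the endpoints of the curve in Lemma~\ref{lem:R-monotone} lie on $\partial P$, not on $\pi(o,x)$ (an endpoint on $\pi(o,x)$ would have geodesic distance $0$ to $\pi(o,x)$ and hence coincide with $s$). So the traversal should be described as following the full boundary $\partial P(x)$---first $\pi(o,x)$ and then the clockwise arc of $\partial P$ from $x$ back to $o$---with the detours along the beach line occurring on the $\partial P$ portion (and the degenerate perpendicular segments of Lemma~\ref{lem:site-on-path} occurring on the $\pi(o,x)$ portion). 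With that adjustment your argument goes through.
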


\begin{corollary}
  \label{lem:beach-line-order}
  Each connected component of $R(x)$ consists of a polygonal chain of
  $\bd P$ and a simple curve with endpoints on $\bd P$ unless
    $\pi(o,x)$ contains a site.
%  \complain{What if the interior of a
%  connected component consists of more than one CC?}
  Moreover, the
  union of such curves is weakly monotone with respect to $\pi(o,x)$
  at any time.
\end{corollary}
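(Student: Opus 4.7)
The plan is to derive both statements from Lemma~\ref{lem:R-monotone} (which describes each individual $R_s(x)$) together with the fact that $R(x) = \bigcup_{s\in P(x)} R_s(x)$, carefully tracking how boundaries combine under the union.

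For the structural claim, I would fix a connected component $\Gamma$ of $R(x)$ under the assumption that no site lies on $\pi(o,x)$. By Lemma~\ref{lem:R-monotone}, every $R_s(x)\subseteq \Gamma$ has a boundary consisting of a polygonal chain on $\bd P(x)$ and a simple curve whose two endpoints lie on $\bd P(x)$. The boundary $\bd\Gamma$ is obtained from the union of these boundaries by deleting points that are interior to some $R_{s'}(x)$. I would then show that the surviving points on $\bd P(x)$ form a single polygonal chain, and that no point of $\pi(o,x)$ can belong to $\bd\Gamma$: a point $p\in \pi(o,x)\cap \bd R_s(x)$ satisfies $d(p,s)=d(p,\pi(o,x))=0$, which forces $s\in\pi(o,x)$ and is excluded. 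Hence this chain lies entirely on $\bd P$. The remaining interior portion of $\bd\Gamma$ is a concatenation of sub-arcs of the individual simple curves, joined at points equidistant from three sites; a Jordan curve argument, combined with the connectedness of $\Gamma$, then yields that this concatenation is a single simple curve whose both endpoints meet $\bd P$.

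For the monotonicity, let $\gamma$ denote the union of these simple curves over all connected components. I would verify weak monotonicity directly from the definition. For $p\in\gamma\subseteq \bd R_s(x)$, the triangle inequality gives $d(r,s)\ge d(r,\pi(o,x))$ for every $r$ on $\pi(p,\pi(o,x))$, so the path meets $R_s(x)$ only along $\gamma_s$, which is weakly monotone by Lemma~\ref{lem:R-monotone}. The remaining task is to rule out that $\pi(p,\pi(o,x))$ re-enters $R(x)$ through some other $\gamma_{s'}$; I would argue this by repeating the same triangle-inequality argument at any would-be re-entry point $p'$ and observing that the sub-path from $p$ to $p'$ must then lie in the interior of $R(x)\setminus\gamma$, which contradicts the geodesic-convexity-type property that $R_{s'}(x)$ contains $\pi(q,s')$ for every $q\in R_{s'}(x)$ (established in the proof of Lemma~\ref{lem:R-connect}).

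The main obstacle I expect is this last step: formally excluding re-entries of $\pi(p,\pi(o,x))$ into $R(x)$ from across $\gamma$. This requires combining the weak monotonicity of the individual $\gamma_s$'s with the geodesic behavior of $R_s(x)$ toward its site, and simultaneously handling the interplay among several regions that a single geodesic may traverse. The first part by contrast is essentially a routine topological bookkeeping once the exclusion of sites on $\pi(o,x)$ eliminates the degenerate case in which a piece of $\pi(o,x)$ could contribute to $\bd\Gamma$.
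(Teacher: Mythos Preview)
The paper states this as a corollary without a written proof, merely indicating that it follows from the preceding lemmas on $R_s(x)$.  Your plan for the first (structural) claim is the natural one and matches the intended derivation.

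For the monotonicity claim, however, your re-entry argument is confused.  You apply the triangle inequality only with the site $s$ for which $p\in\gamma_s$, obtaining $d(r,s)\ge d(r,\pi(o,x))$ along $\pi(p,\pi(o,x))$; this indeed keeps the path out of $\operatorname{int}R_s(x)$, but says nothing about other $R_{s'}(x)$.  Your proposed fix---invoking that $R_{s'}(x)$ contains $\pi(q,s')$ for $q\in R_{s'}(x)$---does not yield the contradiction you claim, and your assertion that ``the sub-path from $p$ to $p'$ must then lie in the interior of $R(x)\setminus\gamma$'' is backwards: a re-entry would mean that sub-path lies \emph{outside} $R(x)$, not inside.

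The missing observation is much simpler and is implicit in the paper's treatment.  A point $p$ on the beach line lies on $\bd R(x)$, not merely on $\bd R_s(x)$; hence $d(p,s')\ge d(p,\pi(o,x))$ for \emph{every} site $s'\in P(x)$.  Now the same triangle inequality you already wrote, applied simultaneously to all sites, gives for every $r\in\pi(p,\pi(o,x))$
\[
d(r,s')\;\ge\; d(p,s')-d(p,r)\;\ge\; d(p,\pi(o,x))-d(p,r)\;=\;d(r,\pi(o,x)),
\]
so $r\notin\operatorname{int}R(x)$.  Thus $\pi(p,\pi(o,x))$ never enters the interior of $R(x)$ at all, which disposes of the re-entry issue in one line and yields the weak monotonicity directly.
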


We maintain the geodesic nearest-point Voronoi diagram restricted to $R(x)$
of the sites contained in $P(x)$ while $x$ moves along $\bd P$.
However, after the sweep, $R(x)$ is still a proper subpolygon $P$, and therefore
$\vd$ is not completed yet. To resolve this, we attach a long and very thin
triangle to $P$ to make a bit larger simple polygon $P'$ and set $o$ to lie at the tip
of the triangle, as illustrated in Figure~\ref{fig:events},
so that once we finish the sweep (when the sweep point returns back to $o$) in $P'$
we have the complete Voronoi diagram in $P$.
(The triangle has height of the diameter of $P$.)

\paragraph{The beach line and the breakpoints.}  
There are $O(m)$ connected components of $R(x)$ each of whose boundary
consists of a polygonal chain of $\bd P$ and a connected simple curve.
The \emph{beach line} is defined to be 
the union of the $O(m)$ simple curves of $R(x)$.
The beach line has properties similar to those of the beach line of the Euclidean Voronoi
diagram.  It consists of $O(n+m)$ hyperbolic or
linear arcs.  We do not maintain them explicitly because the complexity of
the sequence is too large for our purpose.  Instead,
we maintain the combinatorial structure of the beach
line using $O(m)$ space as follows.

A point on the beach line is called a \emph{breakpoint} 
if it is equidistant from two distinct sites. 
Each breakpoint
moves and traces out a Voronoi edge as the
sweep point moves along $\bd P'$.  
We represent each breakpoint symbolically.
That is, a breakpoint is represented as the pair of sites
equidistant from it.
We say that such a pair \emph{defines} the breakpoint.
Given the pair defining a breakpoint and
the position of the sweep point,
we can find the exact position of the breakpoint
using Lemma~\ref{lem:equidistant-line} in $O(\log^2 n)$ time.

Additionally, we consider the endpoints of the $O(m)$ simple curves
comprising the beach line. We simply call them the \emph{endpoints} of the beach line.
For an endpoint $p$ of the beach line, there is a unique site $s$
satisfying $d(p,s)=d(p,\pi(o,x))$. We say that $s$ \emph{defines} $p$.

By Corollary~\ref{lem:beach-line-weakly} and
Corollary~\ref{lem:beach-line-order}, we can define
the order of these breakpoints and these endpoints. Let $B(x)=\langle
\beta_1,\ldots,\beta_{m'}\rangle$ be the sequence
of the breakpoints and the endpoints of the beach line
sorted in clockwise order along the boundary
of $P(x)\setminus R(x)$ with $m'=O(m)$.
We maintain $B(x)$ instead of the arcs on the beach line.
As $x$ moves along $\bd P'$, the
sequence $B(x)$ changes.  We will see that the
length of $B(x)$ is $O(m)$ at any time during the
move of the sweep point along $\bd P'$.

While maintaining $B(x)$, 
we compute the adjacency graph of $\vd$ together with the degree-1 and 
degree-3 vertices of $\vd$ restricted to $R(x)$.
% (Recall that a cell of $\vd$ restricted to $R(x)$ is associated
% with a site in $P(x)$.)
Specifically, when a new breakpoint defined by 
a pair $(s,s')$ of sites is 
added to $B(x)$, we add an edge connecting the node for
$s$ and the node for $s'$ into the adjacency graph.

\paragraph{Events.}  
\begin{figure}
  \begin{center}
    \includegraphics[width=\textwidth]{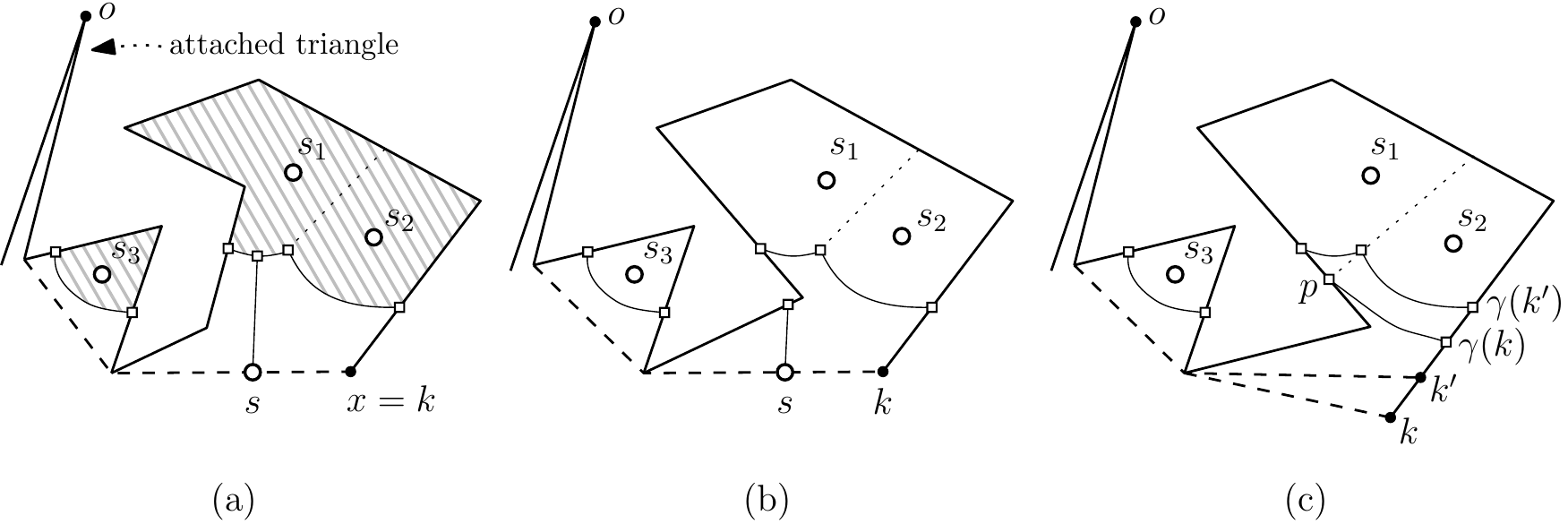}
    \caption {
      (a) The site event defined by $s$ with key $k$.
      Two (degenerate) breakpoints appear in $B(k)$.
      (b) Two (degenerate) endpoints appear in $B(k)$.
      (c) The vanishing event defined by the breakpoint of $(s_1,s_2)$
        with key $k$.
        The endpoint defined by $s_1$ and the breakpoint merge
        into the endpoint defined by $s_2$.
      \label{fig:events}}
  \end{center}
\end{figure}
We have four types of
events: site events, circle events, vanishing events, and 
merging events.
Every event corresponds to a \emph{key}, which is a
point on the boundary of $P'$.
% (Remind that we attach a long and thin triangle to $P$ to obtain $P'$.)
We maintain the events with respect to their 
keys sorted in clockwise order from $o$ along the boundary.
The event occurs when
the sweep point $x$ passes through the 
corresponding key.  The sequence $B(x)$ changes only
when $x$ passes through the key of an event.
Given a sorted sequence of $\nu$ events for $\nu\in\mathbb{N}$,
we can insert a new event in $O(\log \nu)$ time since
we are given each point on the boundary of $P'$ together with the edge of $P'$ 
where it lies.
We can delete an event from $\nu$ in $O(\log |\nu|)$ time, and find 
the first event in $\nu$ and delete it in $O(1)$ time.
% \complain{What about the deletion of an event?}

The definitions of the first two event types are
similar to the ones in Fortune's algorithm~\cite{F-VD-1987}.
Each site $s$ in $S$ defines a \emph{site event}.  The key $k$ 
of the site event defined by $s$ is the point on $\bd P'$
that comes first from $o$ in clockwise order along $\bd P'$ among the points $x'$ 
with $s\in \pi(o,x')$. When the sweep point passes through $k$,
the site $s$ appears on the beach line. At the same time, new breakpoints defined by
$s$ and some other site $s'$, or new endpoints defined by $s$
appear on $B(x)$.
% $(s,s')$ for some 
% other site $s'$ or new endpoints defined by $s$
% appear on $B(x)$.
See Figure~\ref{fig:events}(a) and (b).

An event of the other three event types is 
defined by a pair of consecutive points in $B(x)$
or a single breakpoint in $B(x)$.
An event is said to be \emph{valid} if 
the two points defining the event are consecutive,
or the breakpoint defining the event is in $B(\cdot)$.
Before the sweep point reaches the key of an event,
the two points defining the event may become non-consecutive,
or the breakpoint defining the event may disappear from $B(x)$
due to the changes of $B(\cdot)$.
In this case, we say that the event become \emph{invalid}.

A pair $(\beta_1,\beta_2)$ of consecutive breakpoints in
$B(x)$ defines a \emph{circle event}
if there is a point $c$ equidistant from $s_1,s_2$ and $s_3$ under the geodesic metric,
where $(s_1,s_2)$ and $(s_2,s_3)$ are two pairs of sites
defining $\beta_1$ and $\beta_2$, respectively.
The key $k$ of this circle event is the point on $\bd P'$
that comes first from $o$ in clockwise order along $\bd P'$
%closest to $o$
among the points $x'$ with $d(c,\pi(o,x'))=d(c,s_1)(=d(c,s_2)=d(c,s_3))$.  
If this event is valid when $x$ passes through $k$,
$c$ appears on the beach line at the time.
% Then
% $c$ appears on the beach line at that time.  
Moreover, $\beta_1$ and $\beta_2$
disappear from the beach line, and a new breakpoint defined
by $(s_1,s_3)$ appears on the beach line.
(One may regard this as merging $\beta_1$ and $\beta_2$ into the breakpoint
defined by $(s_1,s_3)$.)

Each breakpoint $\beta$ in $B(x)$ defines a \emph{vanishing event}. 
Let $(s_1,s_2)$ be the pair of sites defining $\beta$.
Consider the two points on $\bd P'$ equidistant from $s_1$ and $s_2$.
We observe that exactly one of them lies outside of $R(x)$.
We denote it by $p$. See Figure~\ref{fig:events}(c).
The key $k$ of the vanishing event is the point on $\bd P'$  
that comes first from $o$ in clockwise order along $\bd P'$
% closest to $o$
among the points $x'$ with $d(p,\pi(o,x'))=d(p,s_1)$.
Assume that this event is valid when $x$ passes through $k$.   
Then, $\beta$ traces out a Voronoi edge and reaches $p$, which is
a degree-1 Voronoi vertex.
Moreover, $B(x)$ changes accordingly as follows.
Right before $x$ reaches $k$, 
an endpoint defined by $s_1$ or $s_2$ is a neighbor of $\beta$ 
in $B(x)$. (Otherwise, the beach line is not weakly monotone.)
Without loss of generality, we assume that an endpoint defined by $s_1$
is a neighbor of $\beta$.
When $x$ reaches the key, this endpoint and $\beta$ disappear from $B(x)$, and
an endpoint defined by $s_2$ appears in $B(x)$.
(One may regard this as merging the endpoint defined by $s_1$ and $\beta$ into the endpoint
defined by $s_2$.)
% See Figure~\ref{fig:events}(c).

A pair $(\beta_1, \beta_2)$ of 
consecutive endpoints in $B(x)$ defines a \emph{merging event} if
$\beta_1$ and $\beta_2$ are endpoints of two distinct connected curves
of the beach line.
Let $s_1$ and $s_2$ be the sites defining $\beta_1$ and $\beta_2$, respectively.
Without loss of generality, 
assume that $\beta_1$ comes before $\beta_2$ from $o$ in clockwise order
along $\bd P'$.
Let $p$ be the (unique) point on $\bd P'$ equidistant from $s_1, s_2$ that comes after $\beta_1$
and before $\beta_2$ from $o$  in clockwise order along $\bd P'$.
The key $k$ of the merging event 
is the point on $\bd P'$
% closest to $o$
that comes first from $o$ in clockwise order along $\bd P'$
among the points $x'$ with $d(p,\pi(o,x'))=d(p,s_1)$.
Assume that this event is valid when $x$ passes through $k$.
Then, as the sweep point $x$ moves along $\bd P'$, 
$\beta_1$ and $\beta_2$ become closer and finally meet each other at $p$.
This means that the two connected curves, one containing $\beta_1$ and
the other containing $\beta_2$, merge into one connected curve.
At this time, $\beta_1$ and $\beta_2$ merge into a breakpoint
defined by $(s_1,s_2)$.

\subsection{An Algorithm}
Initially, $B(x)=B(o)$ is empty, so there is no circle, vanishing, or merging event.  We compute the keys of all site events in advance. For each site, we
compute the key corresponding to its event in $O(\log n)$
time~\cite{shortest-path-tree}.
As $B(x)$ changes, we obtain new pairs of consecutive breakpoints or endpoints,
and new breakpoints.  Then we
compute the key of each event in $O(\log^2 n)$ time using the following
lemmas and Lemma~\ref{lem:equidistant}.
In addition, as $B(x)$ changes, some event becomes invalid.
Once an event becomes invalid, it does not become valid again.
Thus we discard an event if it becomes invalid.
Therefore, the number of events we have is $O(|B(x)|)$ at any time.

\begin{lemma}
  \label{lem:circle-key}
  Given a point $p$ in $P'$ and a distance $r \in \mathbb{R}$, the
  point that comes first from $o$ in clockwise order
    along $\bd P'$
  % closest to $o$
  among the points $x'\in\bd P'$ with $d(p,\pi(o,x'))=r$
  can be found in $O(\log^2 n)$ time.
\end{lemma}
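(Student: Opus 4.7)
The plan is to perform two-level binary search on $\bd P'$, exploiting a monotonicity property of $f(x'):=d(p,\pi(o,x'))$. Since $P(x_1')\subseteq P(x_2')$ whenever $x_1'$ precedes $x_2'$ in clockwise order from $o$, the quantity $d(p,P(x'))$ is non-increasing in $x'$, and while $p\notin P(x')$ every geodesic from $p$ into $P(x')$ must cross $\pi(o,x')$, giving $f(x')=d(p,P(x'))$ in that regime. Hence $f$ is continuous and non-increasing on the prefix of $\bd P'$ traversed before $p$ enters $P(x')$. In the non-trivial case $f(o)=d(p,o)>r$, continuity forces the first $x^{*}$ with $f(x^{*})=r$ to lie in this monotone prefix, so binary search is justified.

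The outer binary search runs over the $O(n)$ vertices of $\bd P'$ to locate the edge $e^{*}$ containing $x^{*}$, using $O(\log n)$ iterations; each iteration invokes an inner test to decide $f(v)\lessgtr r$ at a chosen vertex $v$. The inner test asks whether $\pi(o,v)$ contains a point within geodesic distance $r$ of $p$. The Guibas--Hershberger structure returns $\pi(o,v)$ as a balanced binary tree of height $O(\log n)$, and the function $q\mapsto d(p,q)$ along a shortest path $\pi(o,v)$ in a simple polygon is unimodal: the geodesics $\pi(p,o)$ and $\pi(p,v)$ share a common prefix and separate at a single junction, yielding a unique geodesic perpendicular foot from $p$ onto $\pi(o,v)$. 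Descending the balanced tree guided by this unimodality, with a constant number of shortest-path distance queries per descent level, locates the foot in $O(\log n)$ time; comparing its distance to $r$ completes the inner test.

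Once $e^{*}$ is known, the combinatorial structure of $\pi(o,x')$ for $x'\in e^{*}$ is fixed except for its last edge, which pivots at the apex of the funnel from $o$ to $e^{*}$. The condition $d(p,\pi(o,x'))=r$ then reduces to a constant-complexity geodesic equation along $e^{*}$, solvable in $O(\log^{2} n)$ by the same technique used in the proof of Lemma~\ref{lem:equidistant-line}. Aggregating the outer and inner costs with the final refinement yields the claimed $O(\log^{2} n)$ bound.

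The main obstacle is the inner test: guaranteeing that locating the closest point on $\pi(o,v)$ to $p$ over the balanced-tree representation of a shortest path of complexity $\Theta(n)$ costs only $O(\log n)$ rather than the naive $O(\log^{2} n)$. This hinges on the unimodality of geodesic distance along a taut shortest path combined with the specific structure of the Guibas--Hershberger tree, so that a single $O(\log n)$-level descent suffices instead of $O(\log n)$ independent $O(\log n)$-time distance queries restarted from scratch at each level.
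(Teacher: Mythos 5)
Your overall plan (two-level binary search: first over the vertices of $\bd P'$, then within a single edge) matches the paper's, but the first level targets the wrong part of the boundary and the second level rests on a false claim about the shortest-path structure.

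On the first level: you observe that $f(x')=d(p,\pi(o,x'))$ equals $d(p,P(x'))$ and is non-increasing on the prefix of $\bd P'$ where $p\notin P(x')$, and you binary-search there for the first point where $f=r$. The paper's proof instead begins by computing the point $x_p$ on $\bd P'$ with $p\in\pi(o,x_p)$, notes $f(x_p)=0$, and proves $f$ is \emph{non-decreasing} on the suffix from $x_p$ onward; the binary search is carried out on that suffix. These are not two descriptions of the same search: since $o$ sits at the tip of the appended thin triangle, $f(o)=d(p,o)$ exceeds any relevant $r$, so $f$ crosses the value $r$ once in your prefix and once in the paper's suffix, and the two crossings are different points. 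The one the algorithm actually needs is the one after $x_p$ (for a circle event the degree-3 vertex $c$ must already lie in $P(x)$ before it can appear on the beach line, i.e., the key must lie at or after $x_c$); your search would return a point at which the three defining sites may not even be in $P(x')$ yet. The crucial idea you are missing is the computation of $x_p$ to anchor the search in the correct monotone regime.

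On the second level: you assert that once the edge $e^{*}$ is known, ``the combinatorial structure of $\pi(o,x')$ for $x'\in e^{*}$ is fixed except for its last edge.'' This is not true: as $x'$ slides along a single edge of $P'$, the combinatorial type of $\pi(o,x')$ can change $\Theta(n)$ times (the funnel apex and the chain of reflex vertices it wraps around change repeatedly). This is exactly why the paper performs a second binary search \emph{within} $e^{*}$, over the $O(n)$ segments cut out by extending the edges of $\pi(o,v_1)\cup\pi(o,v_2)$ toward $e^{*}$ (where $v_1,v_2$ are the endpoints of $e^{*}$), at cost $O(\log^2 n)$. Your appeal to Lemma~\ref{lem:equidistant-line} and an $O(\log^2 n)$ bound happens to land on the right running time, but the stated justification is wrong; you would need to acknowledge and search over the changing combinatorial structure rather than assume it is constant.

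Your concern about the inner test (computing $d(p,\pi(o,v))$ in $O(\log n)$ rather than $O(\log^2 n)$) is reasonable to flag, but the paper simply asserts that this distance-to-path query costs $O(\log n)$ given the shortest path data structure and the shortest path map from $o$; both your proposal and the paper's proof depend on this assertion in the same way, so it is not a point of divergence.
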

\begin{proof}
  We compute a point $x_p$ on $\bd P'$ such that $\pi(o,x_p)$ contains
  $p$ in $O(\log n)$ time~\cite{shortest-path-tree}.  Then
  $d(p,\pi(o,x_p))=0$ by definition.  As the sweep point $x$ moves
  along $\bd P'$ from $x_p$ in clockwise order, $d(p,\pi(o,x))$ does not
  decrease. This is because the point $p'\in \pi(p,\pi(o,x_2))$ closest to $p$
  does not lie in the interior of $P(x_1)$ for
  two points $x_1$ and $x_2$ such that $x_1$ comes before
  $x_2$ from $o$ in clockwise order along $\bd P'$.
  Let $x^*$ be the point that comes first from $o$ in clockwise order
  along $\bd P'$ among the points $x'$ with $d(p,\pi(o,x'))=r$.
  By the monotonicity of $d(p,\pi(o,\cdot))$, 
  we can apply
  binary search on the vertices of $\bd P'$ to compute the edge of $P'$
  containing $x^*$.  Since we can compute the geodesic distance from
  $p$ to $\pi(o,x')$ for any point $x'$ in $\bd P'$ in $O(\log n)$
  time, we can find the edge $e$ of $P'$ containing $x^*$ in $O(\log^2
  n)$ time by applying binary search on the vertices of $\bd P'$.
	
  Consider the subdivision of $e$ by the extensions of the edges of
  the geodesic paths between $o$ and each endpoint of $e$. The subdivision
  consists of $O(n)$ disjoint line segments on $e$. We apply binary search
  to find the line segment on $e$ containing $x^*$ in $O(\log^2 n)$
  time without computing the subdivision explicitly as we did in Section~\ref{sec:center-three}.
  For any point $x'$ in the line segment, $\pi(o,x')$ has the same combinatorial structure.  
  Thus we can compute $x^*$ directly in constant time.
  
  Therefore, we can compute the point on $\bd P'$
%  closest to $o$
  that comes first from $o$ in clockwise order 
  along $\bd P'$
  among the points $x'$ with $d(p,\pi(o,x'))=r$ in $O(\log^2 n)$ time in total.
\end{proof}

\begin{lemma}
  \label{lem:equidistant-boundary}
  For any two sites $s_1$ and $s_2$ in $P'$, we can compute the two
  points equidistant from $s_1$ and $s_2$ under the geodesic metric
  that lie on the boundary of $P'$ in $O(\log^2 n)$ time.
\end{lemma}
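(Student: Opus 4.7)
The plan is to reduce the problem to finding the two zeros on $\bd P'$ of the continuous function $g(x) = d(s_1,x) - d(s_2,x)$, since the boundary points equidistant from $s_1$ and $s_2$ are exactly these zeros. Under the general position assumption, the bisecting curve $b(s_1,s_2)$ is a simple curve in $P'$ with both endpoints on $\bd P'$; it partitions $P'$ into two connected regions, one containing each site. Consequently $g$ is negative on the arc of $\bd P'$ contained in $s_1$'s region, positive on the other arc, and has exactly two transversal zeros.

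First, I will build reference points that split $\bd P'$ into the two desired arcs. Compute $\pi(s_1,s_2)$ in $O(\log n)$ time using the shortest path data structure, and let $v_1$ be the vertex of $\pi(s_1,s_2)$ adjacent to $s_1$ and $v_2$ the vertex adjacent to $s_2$. Both lie on $\bd P'$, and the triangle inequality along the geodesic yields $d(s_1,v_1)<d(s_2,v_1)$ and $d(s_2,v_2)<d(s_1,v_2)$, so $g(v_1)<0<g(v_2)$. (If $\pi(s_1,s_2)$ has no intermediate vertex, I fall back to the perpendicular bisector midpoint of the segment $s_1s_2$ and shoot a ray in each perpendicular direction to produce reference points on $\bd P'$; this case is handled analogously.) Thus $\bd P'$ splits into two arcs, each containing exactly one zero of $g$.

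In each arc, I apply two-level binary search. At the first level, I binary search over the $O(n)$ vertices of $\bd P'$ lying on that arc: at each iteration I pick the vertex $v$ at the median index, evaluate $g(v)$ in $O(\log n)$ time via two shortest-path-length queries, and halve the search range using the intermediate value theorem (the zero lies on the side of $v$ whose endpoint has the opposite sign of $g(v)$). This requires $O(\log n)$ iterations and locates the edge $e$ of $\bd P'$ that contains the zero in $O(\log^2 n)$ time. At the second level, I binary search within $e$ using the conceptual subdivision of $e$ by the extensions of the edges of the shortest path trees of $s_1$ and $s_2$, exactly as in Section~\ref{sec:center-three}: on each sub-segment both $d(s_1,\cdot)$ and $d(s_2,\cdot)$ are simple hyperbolic functions, so $g$ is monotone-structured and each extension can be computed in $O(\log n)$ time via Lemma~\ref{lem:extension}. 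Within $O(\log n)$ iterations I isolate the sub-segment containing the zero and then solve the resulting hyperbolic equation in constant time. Repeating on the second arc and summing gives the $O(\log^2 n)$ bound.

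The main obstacle is justifying that binary search on the sign of $g$ converges, since $g$ is not monotone along an arc even though it has a single zero there. The resolution is exactly the sign-based argument above: the bisector has precisely two endpoints on $\bd P'$ by general position, one per arc, so the intermediate value theorem suffices to correctly halve the search range at every step, whether or not $g$ itself oscillates. A secondary subtlety is handling the case when $s_1$ or $s_2$ lies on $\bd P'$ or when the two sites see each other directly, which I will dispatch by using the midpoint of the Euclidean segment $s_1 s_2$ as the reference and using ray shooting to obtain boundary reference points that lie strictly in each of the two regions.
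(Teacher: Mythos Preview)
Your overall two-level binary search is exactly the paper's approach, but your choice of reference points $v_1,v_2$ has a genuine gap.  You assert that the triangle inequality ``along the geodesic'' gives $d(s_1,v_1)<d(s_2,v_1)$.  It does not: since $v_1$ lies on $\pi(s_1,s_2)$ you have the \emph{equality} $d(s_1,v_1)+d(v_1,s_2)=d(s_1,s_2)$, hence $g(v_1)=2\,d(s_1,v_1)-d(s_1,s_2)$, which is negative only when the first edge of $\pi(s_1,s_2)$ is shorter than half the total geodesic length.  If that first edge is long (which is perfectly possible), $g(v_1)>0$ and your partition of $\bd P'$ fails; moreover, when $\pi(s_1,s_2)$ has exactly one interior vertex you get $v_1=v_2$ and no partition at all.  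So the sign-based binary search cannot be launched from these points.

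The paper fixes this by taking $x_i$ to be the point where the extension of the edge of $\pi(s_1,s_2)$ incident to $s_i$, continued \emph{past} $s_i$, meets $\bd P'$.  Then $x_1,s_1,v_1$ are collinear, so the path $x_1s_1\cup\pi(s_1,s_2)$ is locally (hence globally) shortest, giving $d(s_2,x_1)=d(s_1,s_2)+d(s_1,x_1)>d(s_1,x_1)$; symmetrically $g(x_2)>0$.  With these reference points your remaining argument (single sign change per arc, first-level binary search over boundary vertices, second-level search over the combinatorial subdivision of the located edge) goes through and matches the paper's proof.  Your fallback for the straight-segment case via the Euclidean perpendicular bisector is fine and is essentially the same idea specialized to that degenerate situation.
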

\begin{proof}
  We first compute the balanced binary search tree of $\pi(s_1,s_2)$ in $O(\log n)$ time. 
   Then we extend
  the edge of $\pi(s_1,s_2)$ incident to $s_i$ towards
  $s_i$ for $i=1,2$ until it reaches a point $x_i$ on $\bd P'$.
  Then $x_1$ and $x_2$ partition $\bd P'$ into two parts,
  one from $x_1$ to $x_2$ and one from $x_2$ to $x_1$ in clockwise
  order. Note that each part contains one point equidistant from $s_1$ and $s_2$
  under the geodesic metric.
%   Note that one point $p\in \bd P'$ equidistant from $s_1$ and $s_2$
%  lies in the part of $\bd P'$ from $x_1$ to $x_2$ in clockwise order, while
%  the other point $p' \in \bd P'$ equidistant from $s_1$ and $s_2$ lies
%  in the other part of $\bd P'$.
  We show how to compute the equidistant point $p$
  on the part from $x_1$ to $x_2$ in clockwise order in $O(\log^2 n)$ time.
%  The other point can
%  be computed analogously.
	
  To compute the edge of $P'$ containing $p$, we apply binary search on
  the vertices of $\bd P'$ lying from $x_1$ to $x_2$ in clockwise
  order using the property that 
  $d(x,s_1)\leq d(x,s_2)$ for any point $x$ on $\bd P'$ from $x_1$ to $p$
  in clockwise order, and $d(x, s_1)\geq d(x,s_2)$ for any point
  $x$ on $\bd P'$ from $p$ to $x_2$ in clockwise order. 
  We find the edge $e$ of $P'$ containing $p$ in $O(\log^2 n)$ time.

  Then we consider the subdivision of the edge containing $p$ by the extensions of the
  edges of $\pi(s_1,b)\cup\pi(s_1,b')\cup\pi(s_2,b)\cup\pi(s_2,b')$
  for two endpoints $b$ and $b'$ of the edge $e$.
  The subdivision consists of $O(n)$ disjoint line segments on $e$. 
%  Then the edge is subdivided into $O(n)$ line
%  segments.
  By applying binary search without computing the subdivision explicitly,
  we find the line segment containing $p$ in $O(\log^2 n)$ time.  Inside the line segment, we
  find $p$ directly in constant time.
  Therefore, we can find $p$ and $p'$ in total $O(\log^2 n)$ time.
\end{proof}

\paragraph{Handling site events.}  
To handle a site event defined by a site $s$ with key $k$, we do the following.
By definition, $s$ appears on the beach line
when the sweep point $x$ passes through $k$.
% Thus, two
% breakpoints defined by $(s,s')$ for other sites $s'$
% (or two endpoints defined by $s$) appear on $B(k)$. 
% By Lemma~\ref{lem:num-breakpoint-site},
% two new breakpoints (or two new endpoints) appear on $B(k)$.
By Lemma~\ref{lem:site-on-path}, the part of the beach line defined by
  a site $s$ is a single line segment when the sweep point passes through $k$.
  Therefore, $B(k)$ contains two (degenerate) breakpoints
  defined by $s$ and some other sites or two (degenerate) endpoints
  defined by $s$, which lie at the endpoint 
  of the single line segment other than $s$. See Figure~\ref{fig:events}(a) and (b).
We find the positions of them 
in $B(x)$ and update $B(x)$ by adding them using Lemma~\ref{lem:com-break-site}.

\begin{lemma}
  \label{lem:com-break-site}
  We can obtain $B(k)$ from
  $B(k')$ in $O(\log^2 n\log |B(k')|)$ time,  where $k'$ is the key previous to $k$.
\end{lemma}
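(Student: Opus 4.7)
The passage from $B(k')$ to $B(k)$ amounts to inserting exactly two new consecutive elements into the sequence: by Lemma~\ref{lem:site-on-path}, the contribution of $s$ to the beach line at the moment $x=k$ is a degenerate line segment incident to $s$, so the two new items are either a pair of breakpoints or a pair of endpoints defined by $s$ and its immediate neighbours on the beach line. My plan is to keep $B(x)$ in a balanced binary search tree throughout the sweep and to locate the insertion position by binary search on $B(k')$, after which the sites defining the two new items are read off from the neighbouring entries and the two items are inserted.

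At each step of the binary search I pick the median element $\beta$ of the current search range and decide whether the new items belong to the left or to the right of $\beta$ in the clockwise order along $\bd(P(k)\setminus R(k))$. To decide this I compute the actual coordinates of $\beta$ at the instant $x=k$. If $\beta$ is a breakpoint defined by two sites $s_a,s_b$, its position is the point equidistant from $s_a$, $s_b$, and the edge of $\pi(o,k)$ that supports the funnel hosting $\beta$; Lemma~\ref{lem:equidistant-line} returns this point in $O(\log^2 n)$ time. If $\beta$ is an endpoint, its position on $\bd P'$ can be extracted in $O(\log n)$ time by a geodesic-distance query on the shortest-path data structure. Given the coordinates of $\beta$, comparing its position with that of $s$ along the weakly simple boundary $\bd(P(k)\setminus R(k))$ is a further $O(\log n)$-time query using the same data structure, so each comparison costs $O(\log^2 n)$.

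With $O(\log |B(k')|)$ binary-search rounds of $O(\log^2 n)$ each and one final insertion of the two new items into the BST in $O(\log |B(k')|)$ time, the total cost is $O(\log^2 n \log |B(k')|)$, as claimed.

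The main obstacle is keeping each comparison inside the $O(\log^2 n)$ budget. The subtle point is identifying, for a breakpoint $\beta$ of $B(k')$, the specific edge of $\pi(o,k)$ that plays the role of the line segment in Lemma~\ref{lem:equidistant-line}; I intend to attach to each element of $B(x)$ a pointer to the edge of $\pi(o,x)$ of the funnel it currently inhabits, and to refresh this pointer when $\pi(o,x)$ changes through standard operations on the Guibas--Hershberger tree. Granted this bookkeeping, every comparison is reduced to one application of Lemma~\ref{lem:equidistant-line} followed by an $O(\log n)$ order query, and the bound follows.
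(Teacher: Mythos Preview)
Your high-level plan---keep $B(x)$ in a balanced tree, binary-search for the insertion slot, and spend $O(\log^2 n)$ per comparison---is exactly what the paper does. The divergence is in how you localise a breakpoint $\beta$ and how you then compare it with $s$.

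The pointer bookkeeping you propose is the weak link. You want each $\beta\in B(x)$ to carry a pointer to the edge of $\pi(o,x)$ whose funnel it currently sits in, and to ``refresh this pointer when $\pi(o,x)$ changes through standard operations on the Guibas--Hershberger tree.'' But $\pi(o,x)$ changes combinatorially $\Theta(n)$ times over the sweep, and at each change many breakpoints can have their nearest edge switch; moreover, the foot of a fixed $\beta$ on $\pi(o,x)$ can drift across a vertex of $\pi(o,x)$ even while the path is combinatorially stable, because $\beta$ itself moves. The Guibas--Hershberger structure answers static two-point queries; it does not provide an operation that tracks, for a moving point, which edge of a moving path is nearest. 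Without a separate argument bounding the total number of pointer updates and a mechanism to detect them, this step is not within the stated time budget.

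The paper avoids the issue by never singling out one edge. It invokes Lemma~\ref{lem:equidistant-line} with $\pi(o,k)$ playing the role of the segment: the two-level binary search in that lemma only needs $O(\log n)$-time evaluation of $d(\cdot,\pi(o,k))$ and binary search over the edges of the path, both of which the shortest-path data structure supplies. Hence the exact position of $\beta$ is computed on demand in $O(\log^2 n)$ time with no persistent state. For the comparison itself the paper uses Corollary~\ref{lem:beach-line-order}: by weak monotonicity of the beach line, the order of $\gamma_s$ and $\beta$ along the beach line equals the order of $s$ and the point of $\pi(o,k)$ closest to $\beta$ along $\pi(o,k)$, which is a single $O(\log n)$ query once $\beta$'s coordinates are known. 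Drop the pointers and use this route; the rest of your argument then goes through.
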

\begin{proof}
  By applying binary search on the breakpoints and the endpoints on $B(k')$,
  we compute the part (a single line segment) $\gamma_s$ of the beach line
  defined by $s$ when the sweep point $x$ passes through $k$ as follows. 
  Let $\beta$ be a breakpoint or an endpoint of the beach line on $B(k')$.
  We determine the position of $\gamma_s$ on the beach line $\gamma$ 
  with respect to $\beta$ in $O(\log^2n)$ time.
  Recall that we maintain $\beta$ symbolically. We have two
  sites (or one site if $\beta$ is an endpoint) defining $\beta$,
  but not the exact position of $\beta$.
  If $\beta$ is a breakpoint, 
  we compute the exact position of $\beta$ using Lemma~\ref{lem:equidistant-line}.
  For the case that $\beta$ is an endpoint, 
  we can compute the exact position of $\beta$
  in $O(\log^2 n)$ time in a way similar to the one
  in Lemma~\ref{lem:equidistant-line}. 
  The order of $\gamma_s$ and $\beta$ along $\gamma$
  is the same as the order
  of $s$ and the point on $\pi(o,k)$ closest to $\beta$ under the geodesic metric
  by Lemma~\ref{lem:beach-line-order}.
  Thus we can compute the order of $\gamma_s$ and $\beta$ 
  along $\gamma$ in $O(\log^2 n)$ time in total.
  Using this property, we apply binary search on the breakpoints
  and the endpoints on $B(k')$,
  and find the position for $\gamma_s$ on the beach line in
  $O(\log^2 n \log |B(k')|)$ time.
  
  Based on this result, we compute $B(k)$ by adding two (degenerate) breakpoints
  or two (degenerate) endpoints to $B(k')$.
  In any case, we can compute $B(k)$ in $O(\log |B(k')|)$ time.  
  Thus, the total running time is $O(\log^2 n\log|B(k')|)$.
\end{proof}

Adding two breakpoints or two endpoints to $B(x)$ 
makes a constant number of new pairs of consecutive breakpoints or endpoints, which
define circle or merging events.
This also makes a constant number of new vanishing events.
We compute the key for each such
event in $O(\log^2 n)$ time and add the events to the event sequence
in $O(\log |B(x)|)$ time.
Recall that the number of events we have is $O(|B(x)|)$ at any time.
Therefore, each site event can be handled in $O(\log^2 n\log |B(x)|)$ time.

\paragraph{Handling the other events.}  Let $k$ be the key
of a circle, vanishing, or merging event. 
For a circle event,
two breakpoints defining the event disappear from the beach line,
and a new breakpoint appears. 
For a vanishing event, 
the breakpoint defining the event and a neighboring endpoint 
disappear from the beach line, and a new endpoint appears. 
For a merging event, two endpoints defining the event
are replaced with a new breakpoint. 
In any case, we can update $B(x)$ in $O(\log |B(x)|)$ time.

After updating $B(x)$, we have a constant number of new pairs of
consecutive breakpoints or endpoints, and 
a constant number of new breakpoints.
 We compute 
the keys of events defined by them in $O(\log^2 n)$ time.

\paragraph{Analysis.} For analysis of the correctness,
we show that the combinatorial structure of the beach line changes
only when the sweep point $x$ passes through an event key.

\begin{lemma}
The combinatorial structure of the beach line changes
only when the sweep point $x$ passes through an event key.
\end{lemma}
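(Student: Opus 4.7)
The plan is to enumerate all possible combinatorial changes of $B(x)$ and show that each one corresponds to the key of a site, circle, vanishing, or merging event. Throughout, I would rely on two structural facts established earlier: each $R_s(x)$ grows monotonically in $x$ (so $R(x)$ also grows monotonically), and by Lemma~\ref{lem:R-monotone} together with Corollary~\ref{lem:beach-line-order} each connected component of the beach line is weakly monotone with respect to $\pi(o,x)$. The combinatorial structure of $B(x)$ can change only by (i)~inserting a new breakpoint or endpoint, (ii)~deleting an existing breakpoint or endpoint, or (iii)~two consecutive items swapping their order along the beach line.

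First I would rule out~(iii). Between two moments at which $B(x)$ is otherwise unchanged, each breakpoint and endpoint traces a continuous curve on the beach line, and the weak monotonicity of each component together with the fact that the sites defining distinct breakpoints/endpoints are distinct implies that two items can coincide along the beach line only if their defining pairs become equidistant at a common point $p$. Such a coincidence forces either three sites to be simultaneously equidistant (which is exactly a circle event at $p$) or a breakpoint to reach $\bd P'$ (a vanishing event) or two distinct components to meet (a merging event); in any of these cases the event key is precisely the current sweep-point position.

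Next I would handle the insertions in~(i). A new item of $B(x)$ arises exactly when a new piece of the boundary of $R(x)$ is created inside $P(x)$. Since $R(x)$ is the union of the $R_s(x)$'s for sites $s\in P(x)$, this can happen in only two ways: either a new site $s$ enters $P(x)$, so that $R_s(x)$ first becomes nonempty and by Lemma~\ref{lem:site-on-path} introduces a degenerate pair of breakpoints or endpoints into $B(x)$, or two previously disjoint components of the beach line first touch, merging their bounding curves and replacing a pair of adjacent endpoints by a single breakpoint. The first case is, by definition of the site-event key, exactly a site event; the second is, by definition, exactly a merging event.

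Finally, for~(ii), a deletion occurs precisely when two or more items of $B(x)$ collapse to a common point. By the weak monotonicity and the general-position assumption, only the following collapses are possible: two consecutive breakpoints collapse when three sites become equidistant at an interior point of $P(x)$, which is the definition of a circle event; a breakpoint together with an adjacent endpoint collapses at $\bd P'$, which is the definition of a vanishing event; and two endpoints of distinct components coming from opposite sides of a gap collapse, which is a merging event. Putting these three cases together with the insertion cases accounts for every change of $B(x)$, completing the proof. The main technical obstacle is the exhaustiveness argument in the deletion case, namely verifying that no other topological degeneration of the beach line is possible under the general-position assumption; I expect to handle this by a local analysis near the collapsing point, combined with the connectedness of each $R_s(x)$ from Lemma~\ref{lem:R-connect}.
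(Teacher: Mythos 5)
Your proposal takes a genuinely different organizational route from the paper's. You argue \emph{forward}, enumerating the possible types of change to $B(x)$ (insertion, deletion, swap) and matching each to one of the four event types. The paper argues \emph{backward}: it fixes a breakpoint or endpoint $\beta$ in $B(x)$ at sweep position $x$ and classifies $\beta$ by its location relative to $\vd$ itself --- on a Voronoi edge proper, at a degree-3 vertex, at a degree-1 vertex, or on $\bd P'$ --- showing in the degree-3 and degree-1 cases that $x$ is the key of a circle, vanishing, or merging event, and in the remaining cases that $\beta$ (defined by the same pair of sites) was already present on the beach line strictly before $x$, so no combinatorial change occurred at $x$. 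The paper's decomposition ties the dynamics of $B(x)$ directly to the combinatorial structure of $\vd$, which makes the exhaustiveness of the case analysis essentially automatic, whereas in your version exhaustiveness has to be argued separately.

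Two concrete issues remain. First, your taxonomy does not match the events cleanly: every non-site event replaces two consecutive items of $B(x)$ by one (circle: two breakpoints become one breakpoint; vanishing: a breakpoint and an adjacent endpoint become one endpoint; merging: two endpoints become one breakpoint). So each is simultaneously a deletion of two items and an insertion of one; there are no pure deletions or swaps, which is why the merging event ends up reappearing under both your cases (i) and (ii), and why the claim that a site event is the only way a ``new piece of the boundary of $R(x)$ is created'' is not quite right. Second, and more importantly, you flag the exhaustiveness argument as ``the main technical obstacle'' but do not supply the continuity argument that actually closes it. The paper's proof supplies exactly this: it forms the union $U_p$ of shafts from all $p\in\pi(s_1,\beta)\cup\pi(s_2,\beta)$ in direction opposite to the edge of $\pi(p,\pi(o,x))$ incident to $p$, shows $U_p\subseteq R_{s_1}(x)\cup R_{s_2}(x)$ and that $U_p$ meets the relevant Voronoi edge(s) away from $\beta$, and then invokes continuity of $R_s(\cdot)$ to conclude the breakpoint(s) in question already existed before $x$. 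Without that (or an equivalent) quantitative argument, your ``local analysis near the collapsing point'' does not by itself rule out breakpoints or endpoints appearing or vanishing away from event keys, so you would in effect be importing the paper's main lemma-internal technique anyway.
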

\begin{proof} %\complain{Check this proof again.}
  Consider a breakpoint $\beta$ on $B(x)$
  defined by two sites $s_1$ and $s_2$ with the sweep point 
  at a point $x$ on $\bd P'$.
  % when the sweep point $x$ reaches a point $x'$ in $\bd P'$.
  If $s_1$ or $s_2$ is on $\pi(o,x)$, $x$ is the key of a site event
  defined by $s_1$ or $s_2$. In the following,
  we assume that $s_1, s_2\in P(x)$ but not on $\pi(o,x)$.
%  the keys of $s_1$ and $s_2$ were swept by $x$. 
%  Otherwise, $x$ is the key of a site event defined by $s_1$ or $s_2$.
  Then $\beta$ lies on the common boundary of the Voronoi cells of
  $s_1$ and $s_2$ or a degree-3 Voronoi vertex of $\vd$ defined by three
  sites including $s_1$ and $s_2$.
  
  Consider the case that $\beta$ lies on the common boundary of the Voronoi
  cells other than its endpoints. 
  We claim that
  the combinatorial structure does not change in this case.
  There are two cases: $\beta$ lies on a Voronoi edge or a degree-2 Voronoi vertex.
  For the first case, we let $e$ be the edge containing $\beta$. For the second case,
  we let $e$ be the Voronoi edge incident to the degree-2 Voronoi vertex such that
  $d(\beta,s_1)<d(x,s_1)$ for any point $x\in e$.
%  We showed that $\pi(s,p) \subseteq R_{s}(x)$ for any site $s \in P(x)$
%  and any point $p\in R_s(x)$. Moreover, the boundary of $R_s(x)$ excluding
%  its polygonal chain on $\bd P'$ is weakly monotone with respect to $\pi(o,x)$.
  Let $U_p$ denote the union of the shafts from $p$ in direction opposite to
  the edges of $\pi(p,\pi(o,x))$ incident to $p$ for every point
  $p\in \pi(s_1,\beta)\cup\pi(s_2,\beta)$. Then  $U_p\subseteq
  R_{s_1}(x)\cup R_{s_2}(x)$ and $U_p$ intersects $e$
  % it intersects the Voronoi edge where $\beta$ lies
  at a point other than $\beta$ unless
  $s_1\in \pi(\beta,\pi(o,x))$ or $s_2\in \pi(\beta,\pi(o,x))$. 
  (But, it is not possible that $s_1,s_2\in \pi(\beta,\pi(o,x))$ since
  $d(\beta,s_1)=d(\beta,s_2)=d(\beta,\pi(o,x))$ and $x$ is not
  the key of a site event.)
  By continuity of $R_s(\cdot)$ for any site $s\in P(\cdot)$,
  this implies that there is a breakpoint defined by $s_1$ and $s_2$
  before the sweep point reaches $x$.
  Thus, the combinatorial structure does not change in this case.
	
  Consider the case that $\beta$ lies on a degree-3 Voronoi vertex.
  We claim that $x$ is the key of a circle event.
  Let $s_3$ be the site such that $\beta$ is defined by $s_1,s_2$ and $s_3$.
  %and $s_3$ be the three sites defining $\beta$.
  We again consider the union $U_p$ of the shafts from $p$
  in direction opposite to the edges of $\pi(p,\pi(o,x))$ incident to $p$
  for every point $p\in \pi(s_1,\beta)\cup\pi(s_2,\beta)\cup\pi(s_3,\beta)$. 
  Then $U_p\subseteq R_{s_1}(x)\cup R_{s_2}(x)\cup R_{s_3}(x)$ and
  $U_p$ intersects two Voronoi edges incident to $\beta$.
  This implies that there are two consecutive breakpoints defined by $(s_1,s_2)$
  and $(s_2,s_3)$ before the sweep point reaches $x$.
  So, $x$ is the key of the circle event defined by this pair of breakpoints.
  
  We can show that the same holds for the case of an endpoint $\beta$
  in a similar way. If $\beta$ lies on $\bd P'$
  which is not a Voronoi vertex, the combinatorial structure does not change.
  If $\beta$ lies on a degree-1 Voronoi vertex,
  $x$ is the key of a vanishing event or merging event. % in a similar way.
%  Therefore, the lemma holds.
\end{proof}

For analysis of the running time, we give
upper bounds on the length of $B(x)$ and the total number of 
valid events we have handled.

\begin{lemma}
  \label{lem:length}
  The length of $B(x)$ is $O(m)$ at any time.
\end{lemma}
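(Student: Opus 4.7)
The plan is to prove $|B(x)| = O(m)$ by an event-counting (potential-style) argument: since $B(x)$ only changes when the sweep point crosses the key of an event, I would track the net change in $|B(x)|$ caused by each of the four event types and observe that only site events can increase $|B(x)|$.

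First I would invoke the preceding lemma to reduce the problem to analyzing the four event types, because $B(x)$ is constant in between event keys. Then I would read off, directly from the event definitions already given, the net contribution of each type to $|B(x)|$. For a site event defined by a site $s$, Lemma~\ref{lem:site-on-path} tells us that the new part of the beach line for $s$ is a single line segment, and the update rule depicted in Figure~\ref{fig:events}(a,b) inserts exactly two new elements into $B(x)$ (either two degenerate breakpoints or two degenerate endpoints), so the net change is $+2$. For each of the other three event types the net change is $-1$: a circle event replaces two consecutive breakpoints by a single new breakpoint; a vanishing event replaces a breakpoint together with a neighboring endpoint by a single new endpoint; and a merging event replaces two consecutive endpoints by a single new breakpoint.

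With these per-event counts in hand the conclusion is immediate. Each site in $S$ triggers exactly one site event during the sweep, so at most $m$ site events occur in total, and all other events are non-increasing. Starting from $B(o)=\emptyset$ and summing the per-event changes gives $|B(x)| \leq 2m$ at every moment, which is the claimed $O(m)$ bound.

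I do not anticipate any serious obstacle; the argument is almost pure bookkeeping on the event definitions already set up. The only care needed is to confirm each per-event count, especially for site events, where Lemma~\ref{lem:site-on-path} is what guarantees that a newly inserted site contributes only a single line segment (and thus only two new elements) to the beach line at its key. Notably, the worrying possibility that a single site could contribute many arcs to the beach line, as happens in Euclidean Fortune's algorithm, is ruled out implicitly by this accounting and does not require a separate structural lemma.
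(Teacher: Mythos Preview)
Your proposal is correct and follows essentially the same approach as the paper: the paper's proof also starts from $B(o)=\emptyset$, observes that site events add at most two elements while all other events decrease $|B(x)|$, and concludes the $O(m)$ bound from the fact that there are $O(m)$ site events. Your version simply spells out the $-1$ net change for each of the three non-site event types in more detail than the paper does.
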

\begin{proof}
  Initially, the length of $B(x)$ is zero. The length of
  $B(x)$ increases if the sweep point $x$ passes through the key of a site event.
  In this case, we create at most two breakpoints or two endpoints. 
  Thus, the length of $B(x)$ increases by at most two.  
  For other events, the length of $B(x)$ decreases.  
  Since we have $O(m)$ site events, the length of
  $B(x)$ is $O(m)$ at any time.
\end{proof}

% Handling an event $e$ can make some other event $e'$ invalid.
An event becomes invalid in the process of handling another event.
% Recall that we discard an event when it becomes invalid.
The time we spend to discard an event from $B(x)$ is subsumed by the time
we spend to handle the event that invalidates it.
Thus, it is sufficient to bound the number of events
that are valid at the time the sweep point $x$ passes through their
corresponding keys. %\complain{Check if this revised paragraph makes sense.}

\begin{lemma}
  There are $O(m)$ events in total that are valid at the times the sweep point $x$
  passes through their corresponding keys.
%  such that the sweep point $x$
%  passes through their corresponding keys during the sweep.
\end{lemma}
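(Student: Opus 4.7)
The plan is to bound the number of valid events by using $|B(x)|$ as a potential function. Site events are trivially counted: there is exactly one per site in $S$, giving $m$ in total. For the remaining three event types, I would show that each valid circle, vanishing, or merging event strictly decreases $|B(x)|$ by $1$, whereas each site event increases $|B(x)|$ by exactly $2$. Since $|B(x)|$ starts at zero and is always nonnegative, a net-flow argument caps the total number of decreasing valid events by the total increase from site events.

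To verify the per-event changes, I would revisit the event descriptions preceding the lemma. When the sweep point passes through a site-event key, Lemma~\ref{lem:site-on-path} together with the subsequent discussion guarantees that either two degenerate breakpoints or two degenerate endpoints are inserted into $B(x)$, so $|B(x)|$ increases by exactly $2$. A circle event deletes two consecutive breakpoints and inserts one new breakpoint in their place; a vanishing event deletes a breakpoint together with an adjacent endpoint and inserts a single new endpoint; a merging event deletes two consecutive endpoints and inserts a single new breakpoint. Each of these three non-site events thus decreases $|B(x)|$ by exactly $1$. By the lemma just proved, these are the only instants at which the combinatorial structure of the beach line, and hence $B(x)$, can change, so no other source contributes to the update.

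With these bookkeeping entries in hand, the counting is immediate. The cumulative increase of $|B(x)|$ over the whole sweep is at most $2m$, contributed by the $m$ site events. Since $|B(x)| \geq 0$ at all times, the cumulative decrease coming from valid circle, vanishing, and merging events is also at most $2m$, and each such valid event accounts for exactly one unit of this decrease. Consequently, there are at most $2m$ valid non-site events, and at most $3m = O(m)$ valid events in total.

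The main subtlety I anticipate is confirming the exact per-event change for the vanishing event, where the identity of the endpoint that merges with the breakpoint depends on the local structure of the beach line; this is precisely what the weak monotonicity property (Corollary~\ref{lem:beach-line-order}) is there to pin down. Once the three $-1$ entries are locked in, the bound follows from a one-line nonnegativity argument and requires no case analysis on the order in which events occur.
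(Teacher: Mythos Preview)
Your potential-function argument is correct and yields the same $O(m)$ bound, but it takes a different route from the paper. The paper associates each valid non-site event directly with a Voronoi vertex of $\vd$: a valid circle event corresponds to a degree-3 vertex, and a valid vanishing or merging event corresponds to a degree-1 vertex. Since $\vd$ has $O(m)$ degree-1 and degree-3 vertices, the bound is immediate. Your approach is more self-contained---it essentially reuses the bookkeeping behind Lemma~\ref{lem:length} and needs no external fact about the combinatorial complexity of $\vd$---whereas the paper's proof is shorter but relies on the known structure of the diagram and makes the geometric role of each event type explicit. One small correction: the paper only guarantees that a site event increases $|B(x)|$ by \emph{at most} two (not exactly two), but this only strengthens your inequality.
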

\begin{proof}
  Consider the sites that are valid at the times the sweep point $x$ passes
  through their corresponding keys.
  Each site event corresponds to a site in $S$,
  and each site defines exactly one site event.
  Each circle event  corresponds to a degree-3 vertex of $\vd$.
  Each vanishing or merging event corresponds to a degree-1 vertex of $\vd$.	
  Since we have $m$ sites and $O(m)$ degree-1 or degree-3 vertices,
  the total number of events is $O(m)$.
\end{proof}

Therefore, We have the following lemma and theorem.
\begin{lemma}
  Once the shortest path data structure for $P'$ and the shortest path
  map for $P'$ from a fixed point are constructed,
  we can compute the topological structure
  of $\vd$ in $O(m\log m\log^2 n)$ time using $O(m)$ space.
\end{lemma}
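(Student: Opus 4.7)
The plan is to aggregate the per-event cost and the structural bounds already established in this section. The space bound is immediate from the preceding length lemma: $|B(x)| = O(m)$ at any time, and the event queue holds at most one event per consecutive pair (and per breakpoint) in $B(x)$, so maintaining $B(x)$ together with the event queue takes $O(m)$ space. Correctness follows from the lemma asserting that the combinatorial structure of the beach line changes only when the sweep point crosses an event key; thus processing each valid event in the order of its key is sufficient to produce the adjacency graph and the degree-1 and degree-3 vertices.

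For the running time, I would first initialize the event queue with the site events, one per input site, computing each key in $O(\log n)$ time using the shortest path map from $o$, for a total of $O(m \log n)$. The main loop processes events one at a time. For a site event I invoke Lemma~\ref{lem:com-break-site} to locate the insertion point in $B(x)$ and update $B(x)$ in $O(\log^2 n \log |B(x)|)$ time; this creates a constant number of new pairs of consecutive breakpoints or endpoints and a constant number of new breakpoints, and for each new circle, vanishing, or merging event I compute the key in $O(\log^2 n)$ time by combining Lemma~\ref{lem:circle-key} with Lemma~\ref{lem:equidistant} (for circle events) or Lemma~\ref{lem:equidistant-boundary} (for vanishing and merging events), and insert the event into the queue in $O(\log m)$ time. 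For a circle, vanishing, or merging event, updating $B(x)$ requires only $O(\log m)$ time, and again a constant number of new events are created and inserted at the same asymptotic cost.

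Since $|B(x)| = O(m)$, each event is handled in $O(\log^2 n \log m)$ time, and by the preceding lemma the total number of valid events is $O(m)$. Multiplying gives the claimed $O(m \log m \log^2 n)$ bound. The remaining subtlety is bookkeeping for invalid events: rather than eagerly removing an event when the structural change that invalidated it occurs, I would detect invalidity lazily at the moment the event surfaces from the queue and simply discard it; the number of discarded events is bounded by the number of events ever created, which is still $O(m)$, and the per-discard cost of $O(\log m)$ is subsumed by the bound above. The main obstacle is to verify that no event-handling step hides a further $O(n)$ factor; once each of the four event types is checked against the relevant lemmas in Sections~\ref{sec:threepoints-equidistant} and~\ref{sec:point-line}, the bound follows by straightforward aggregation.
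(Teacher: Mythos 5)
Your aggregation argument matches the paper's own accounting: $O(m)$ valid events, each charged $O(\log^2 n\log m)$ via Lemmas~\ref{lem:com-break-site}, \ref{lem:circle-key}, \ref{lem:equidistant}, and \ref{lem:equidistant-boundary}, with correctness from the lemma bounding when $B(x)$ can change and an $O(m)$ bound on $|B(x)|$ giving the space. The only deviation is your lazy discarding of invalid events instead of the paper's eager removal; since an invalidated event never becomes valid again and the total number of events ever created is $O(m)$, both variants keep the queue at $O(m)$ and the total cost unchanged, so the proof is correct and essentially the paper's.
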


\begin{theorem}
  Given a set of $m$ point sites contained in a simple polygon with
  $n$ vertices, we can compute the geodesic nearest-point Voronoi
  diagram of the sites in $O(n+m\log m\log^2n)$ time using $O(n+m)$
  space.
\end{theorem}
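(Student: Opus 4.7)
The plan is to combine the three ingredients already developed in this section. First, in the preprocessing phase I would build the augmented polygon $P'$ (the long thin triangle attached to $P$ with $o$ at the tip), triangulate it, and construct the shortest path data structure of Guibas and Hershberger together with the shortest path map from $o$; all of this fits within $O(n)$ time and $O(n)$ space.

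Next I would invoke the preceding lemma to compute the topological structure of $\vd$, i.e.\ the adjacency graph together with the exact positions of the degree-1 and degree-3 Voronoi vertices, in $O(m \log m \log^2 n)$ time using $O(m)$ space. This is the core of the argument and rests on the polygon-sweep with the beach line sequence $B(x)$: each of the $O(m)$ valid site, circle, vanishing, and merging events is handled in $O(\log^2 n \log |B(x)|) = O(\log m \log^2 n)$ time, keys of newly created events are computed via Lemmas~\ref{lem:circle-key}, \ref{lem:equidistant-boundary}, and~\ref{lem:equidistant}, and $|B(x)| = O(m)$ throughout by Lemma~\ref{lem:length}.

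Finally I would feed the topological structure into Lemma~\ref{lem:topological}. Here $T_2 = O(m)$ since the adjacency graph is a planar graph on $m$ Voronoi cells, and $T_1 = O(n+m)$ since the geodesic nearest-point Voronoi diagram has complexity $\Theta(n+m)$~\cite{Aronov-VD-1989}. Thus the reconstruction phase runs in $O(T_1 + T_2 \log n) = O(n + m \log n)$ time. Summing the three phases gives
\[
O(n) \;+\; O(m\log m \log^2 n) \;+\; O(n + m \log n) \;=\; O(n + m\log m\log^2 n),
\]
and the space used is dominated by the $O(n)$ preprocessing plus the $O(n+m)$ output.

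I expect no genuine obstacle at this stage: all the non-trivial work has been done in the preceding lemmas. The only point to verify carefully is that the extra thin triangle used to close the sweep contributes only $O(1)$ additional boundary vertices to $P'$, so the $O(n+m)$ space and $O(n+m)$ output bound for the final diagram restricted back to $P$ are preserved, and that the $O(\log^2 n)$ event-handling time on $P'$ is identical to that on $P$ since $P'$ still has $\Theta(n)$ vertices.
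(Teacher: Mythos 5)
Your proposal is correct and follows essentially the same route the paper takes: preprocess $P'$ in $O(n)$ time (shortest path data structure plus shortest path map from $o$), apply the preceding lemma to get the topological structure in $O(m\log m\log^2 n)$ time and $O(m)$ space, then reconstruct via Lemma~\ref{lem:topological} with $T_1=O(n+m)$ and $T_2=O(m)$ for an additional $O(n+m\log n)$ time, all within $O(n+m)$ space. The remark that the attached thin triangle only adds $O(1)$ vertices to $P'$ is the right sanity check and the paper implicitly relies on it as well.
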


\section{The Geodesic Higher-order Voronoi Diagram}
In this section, we first present an asymptotically tight combinatorial 
complexity of the geodesic higher-order Voronoi diagram of points in a simple polygon.
Then we present an algorithm to compute the diagram by applying the
polygon-sweep paradigm introduced in Section~\ref{sec:NVD}.  In the
plane, Zavershynskyi and Papadopoulou~\cite{higherVD3} presented a
plane-sweep algorithm to compute the higher-order Voronoi diagram. We
use their approach together with our approach in
Section~\ref{sec:NVD}.  In the following, we assume that $1\leq k\leq m-1$.

\subsection{The Complexity of the Diagram inside a Simple Polygon}
Liu and Lee~\cite{LL-KVD-2013} presented an asymptotically tight 
complexity of the geodesic higher-order Voronoi diagram of
points in a polygonal domain with holes.  However, 
their bound is not tight for a simple polygon.
We present an improved upper bound and prove that it is asymptotically tight.

\paragraph{A lower bound.}
\begin{figure}
  \begin{center}
    \includegraphics[width=0.9\textwidth]{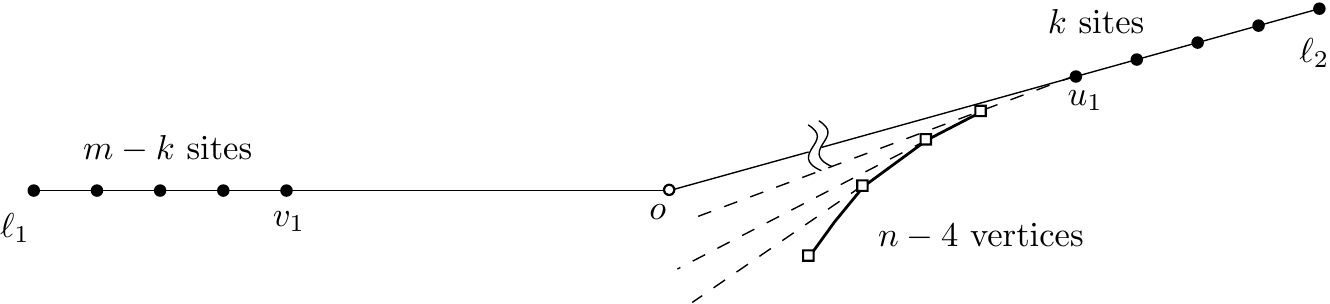}
	\caption {
	An example of which the order-$k$ Voronoi diagram has
	complexity of $\Omega(k(m-k)+\min\{nk, n(m-k)\})$.
	\label{fig:lowerbound}}
    \end{center}
\end{figure}
Figure~\ref{fig:lowerbound} shows an example of which
the order-$k$ Voronoi diagram has complexity of
$\Omega(k(m-k)+\min\{nk, n(m-k)\})$.
Let $o$ be an arbitrary point in the plane. We
construct a simple polygon $P$ and a set of sites with
respect to the point $o$.  Let $\ell_1$ be a
sufficiently long horizontal line segment whose
right endpoint is $o$ and $\ell_2$ be a
sufficiently long line segment with a positive slope 
 whose left endpoint is $o$.  
($\ell_1$ and $\ell_2$ are not parts of the simple polygon,
but they are auxiliary line segments to locate the points.)
We put $m-k$ sites on $\ell_1$ such that the sites are sufficiently
close to each other. 
Similarly, we put $k$ sites on $\ell_2$ such that sites are sufficiently
close to each other and $\|v_{m-k}-o\|\ll\|u_1-o\|$, where
$v_i$'s are sites on $\ell_1$ sorted along $\ell_1$
from $o$, $u_j$'s are sites on $\ell_2$ sorted
along $\ell_2$ from $o$ for $i=1,\ldots,m-k$ and $j=1,\ldots,k$.

As a part of the boundary of $P$, 
we put a concave and $y$-monotone polygonal curve with
$n-4$ vertices below $\ell_2$ such that the highest point of the curve 
is sufficiently close to $u_1$.  Then we put another
four vertices of $P$ sufficiently far from the sites such that $P$ contains all  sites and there is no simple polygon 
containing more Voronoi vertices than $P$ contains.

\begin{lemma}
    \label{lem:lower-cells}
    For any $\kappa$ consecutive sites on $\ell_1$ with $1\leq \kappa\leq \min\{k,m-k\}$,
    there exist $k-\kappa$ sites on $\ell_2$ such that
    the $\kappa$ sites on $\ell_1$ and the $k-\kappa$ 
    sites on $\ell_2$ define a non-empty Voronoi cell.
\end{lemma}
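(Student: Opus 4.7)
The plan is to exhibit an explicit witness $p\in P$ whose $k$ geodesically nearest sites in $S$ are exactly the prescribed $\kappa$ consecutive sites on $\ell_1$ together with a carefully chosen $(k-\kappa)$-subset of $\ell_2$. Write $V':=\{v_i,\ldots,v_{i+\kappa-1}\}$ for the given block and declare $U':=\{u_1,\ldots,u_{k-\kappa}\}$, i.e., the $k-\kappa$ sites on $\ell_2$ closest to $o$. The goal is to produce a point $p\in P$ whose $k$ nearest sites are exactly $V'\cup U'$, which is precisely what it means for the associated order-$k$ Voronoi cell to be non-empty.

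I would take $p$ on the line perpendicular to $\ell_1$ through the midpoint of $v_iv_{i+\kappa-1}$, at an appropriately chosen depth $y$ into $P$. By symmetry of the $v_j$'s about this perpendicular, the $\kappa$ Euclidean-nearest (hence geodesically nearest, since the $v_j$'s are clustered near $o$ and the corresponding segments from $p$ are unobstructed) sites among $v_1,\ldots,v_{m-k}$ are precisely those of $V'$, with the $(\kappa+1)$-st strictly farther. For the $u_\ell$'s, either they are directly visible from $p$, in which case $d(p,u_\ell)$ is the Euclidean distance, or the concave chain forces every geodesic $\pi(p,u_\ell)$ to bend through a common vertex $h$ near $u_1$, in which case $d(p,u_\ell)=d(p,h)+\|h-u_\ell\|$. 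In either situation the values $d(p,u_\ell)$ are monotone in $\ell$, and the $k-\kappa$ geodesically nearest $\ell_2$-sites to $p$ are exactly those of $U'$.

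It remains to calibrate $y$ so that $\max_{s\in V'\cup U'}d(p,s)<\min_{s\notin V'\cup U'}d(p,s)$. The two intra-class inequalities have already been verified; the two inter-class inequalities require $\max_{v\in V'}d(p,v)$ to undercut $\min_{u\notin U'}d(p,u)$, and symmetrically $\max_{u\in U'}d(p,u)$ to undercut $\min_{v\notin V'}d(p,v)$. Both follow from the scale hypothesis $\|v_{m-k}-o\|\ll\|u_1-o\|$ of the construction, provided $y$ is chosen inside an interval of width $\Theta(\delta)$, where $\delta$ is the $u_\ell$-spacing: the $v$-cluster has spread negligible compared to $y$, while the $u_\ell$-spacing offers enough resolution to place the threshold between $u_{k-\kappa}$ and $u_{k-\kappa+1}$.

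The main obstacle is precisely this last calibration: one must exhibit a common interval of $y$-values on which both inter-class inequalities hold simultaneously and for which the resulting $p$ still lies in $P$. This reduces to a delicate but essentially routine computation with the four governing distances, using the freedom in the construction to tune the spacings $\epsilon$ (between consecutive $v_j$'s), $\delta$ (between consecutive $u_\ell$'s) and the length $L_2:=\|u_1-o\|$.
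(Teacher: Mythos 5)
Your proposal follows essentially the same route as the paper: the witness is placed on the perpendicular to $\ell_1$ through the midpoint of $v_iv_{i+\kappa-1}$, the intra-class orderings are handled by symmetry of the $v$'s and monotonicity of the $u$'s, and the inter-class separation is justified by the scale hypothesis $\|v_{m-k}-o\|\ll\|u_1-o\|$. The one thing you miss is the paper's simplifying choice of witness: rather than leaving the depth $y$ as a free parameter to be calibrated, the paper takes $x$ to be the point equidistant from $v_i$, $v_{i+\kappa-1}$ \emph{and} $u_{k-\kappa}$ (the Euclidean circumcenter of those three points). Since that point already lies on your perpendicular bisector, this pins the depth so that the threshold between ``inside the $k$-tuple'' and ``outside'' falls exactly at $v_i$, $v_{i+\kappa-1}$, $u_{k-\kappa}$, and the only remaining checks are the two monotonicity facts you also observe ($\|x-u_j\|$ monotone in $j$, $\|x-v_j\|$ minimized on the symmetric block). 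In other words, the calibration you flag as ``the main obstacle'' is exactly what the paper's choice of witness dissolves. As your draft stands, the final step — exhibiting a common interval of $y$ and verifying both inter-class inequalities on it — is announced but not carried out, so the argument is incomplete; adopting the paper's circumcenter choice closes it cleanly with no interval computation needed.
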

\begin{proof}
    Consider $\kappa$ consecutive sites
    $v_i,\ldots,v_{i+\kappa-1}$ on $\ell_1$.  We show that there
    is a point whose nearest $k$ sites are
    $v_i,\ldots,v_{i+\kappa-1}$ and
    $u_1,\ldots,u_{k-\kappa}$.  To show this, consider the point $x$
    equidistant from $v_i, v_{i+\kappa-1}$ and
    $u_{k-\kappa}$.  Since $\|v_{m-k}-o\|\ll\|u_1-o\|$, it holds that $\|x-u_{k-\kappa}\|\geq
    \|x-u_j\|$ for any index $j \leq k-\kappa$
    and $\|x-u_{k-\kappa}\|\leq
    \|x-u_j\|$ for any index $j \geq k-\kappa$. This
    implies that the nearest $k$ sites from $x$ is
    $v_i,\ldots,v_{i+\kappa-1}$ and
    $u_1,\ldots,u_{k-\kappa}$.  Therefore, the lemma
    holds.
\end{proof}

\begin{lemma}
    \label{lem:lower-edges}
    For every integer $\kappa$ with $1\leq \kappa \leq \min\{k, m-k\}$, there are $\Theta(n)$ Voronoi
    edges which come from the bisecting curve of $v_\kappa$
    and $u_{k-\kappa}$.
\end{lemma}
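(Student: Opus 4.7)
My plan is to produce two adjacent order-$k$ Voronoi cells of $\kvd$ whose common boundary is a long connected portion of $b(v_\kappa,u_{k-\kappa})$, and then count the hyperbolic and linear arcs that this portion comprises. I would apply Lemma~\ref{lem:lower-cells} to produce the two cells: invoking it with the $\kappa$ consecutive sites $v_1,\ldots,v_\kappa$ on $\ell_1$ yields a nonempty cell $C_1$ whose $k$-tuple is $T_1=\{v_1,\ldots,v_\kappa,u_1,\ldots,u_{k-\kappa}\}$, and invoking it with the $\kappa-1$ consecutive sites $v_1,\ldots,v_{\kappa-1}$ yields a nonempty cell $C_2$ whose $k$-tuple is $T_2=\{v_1,\ldots,v_{\kappa-1},u_1,\ldots,u_{k-\kappa+1}\}$. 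Then $|T_1\triangle T_2|=2$, so $C_1$ and $C_2$ are adjacent and share a connected common boundary lying on the bisecting curve of the swapped pair, which is (up to the indexing convention used in the statement) exactly $b(v_\kappa,u_{k-\kappa})$.

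The second step is to locate and count the pieces of the common boundary. Because $\|v_{m-k}-o\|\ll\|u_1-o\|$, all the $v$-sites are tightly clustered near $o$ while the $u$-sites lie much farther away along $\ell_2$, and the concave chain of $n-4$ vertices sits between them, with its highest point placed close to $u_1$. Consequently every shortest path from $v_\kappa$ or from $u_{k-\kappa}$ to a point on the common boundary of $C_1$ and $C_2$ must turn around reflex vertices of the chain, and each additional chain vertex that enters or leaves such a shortest path induces a breakpoint on the bisecting curve between consecutive hyperbolic/linear arcs (a degree-$2$ Voronoi vertex of $\kvd$ on the common boundary). The $n-4$ reflex vertices thus produce $\Omega(n)$ arcs, each of which is a distinct Voronoi edge of $\kvd$. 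The matching $O(n)$ upper bound follows from the standard fact that in a simple $n$-gon the bisecting curve of any two points is piecewise hyperbolic/linear with $O(n)$ pieces, determined by the $O(n)$ vertices of the shortest-path trees rooted at the two sites.

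The hard part will be verifying that the common boundary of $C_1$ and $C_2$ really covers essentially the entire stretch of $b(v_\kappa,u_{k-\kappa})$ along the chain, rather than being truncated by a bisector involving some third site. To handle this I would show that at every point of the stretch the $k-1$ sites of $T_1\cap T_2$ are exactly the $k-1$ geodesically nearest sites, with only $v_\kappa$ and $u_{k-\kappa}$ competing for the $k$-th nearest slot. This reduces to distance comparisons that exploit the clustering of the $v$-sites near $o$, the ordering of the $u$-sites along $\ell_2$, and the geodesic detour forced by the chain: any $v_j$ with $j<\kappa$ beats $v_\kappa$ at points along the stretch (by the clustering together with the forced detour), any $u_j$ with $j\le k-\kappa$ beats $u_{k-\kappa}$ (since its position on $\ell_2$ is closer to the chain and thus to the stretch), and every remaining site is strictly farther than both competitors, so no other bisector interrupts the common boundary.
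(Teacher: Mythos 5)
Your proposal is correct and follows essentially the same route as the paper: observe that the concave chain forces $b(v_\kappa,u_{k-\kappa})$ to consist of $\Theta(n)$ hyperbolic/linear arcs, and then verify by direct distance comparisons (using the clustering of the $v$-sites near $o$, the ordering of the $u$-sites along $\ell_2$, and the fact that points on the bisecting curve lie below $\ell_1\cup\ell_2$) that along this curve only $v_\kappa$ and $u_{k-\kappa}$ compete for the $k$-th nearest slot, so the arcs are genuine Voronoi edges of $\kvd$. Your detour through Lemma~\ref{lem:lower-cells} to produce the two candidate cells is unnecessary --- the paper argues directly on the bisecting curve, since Lemma~\ref{lem:lower-cells} only yields a nonempty cell and does not by itself give adjacency or coverage along the chain, as you yourself note --- but the dominance check you sketch at the end is exactly what closes that gap, and it is in fact slightly more complete than the paper's one-liner because you state both sides of the comparison (nearer sites strictly nearer, farther sites strictly farther) rather than only the ``farther'' half.
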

\begin{proof}
    Consider the bisecting curve of $v_\kappa$ and
    $u_{k-\kappa}$.  Due to the concave curve with
    $n-4$ vertices, the bisecting curve consists of $O(n)$
    hyperbolic arcs and one line segment.  We claim
    that each hyperbolic arc is a Voronoi edge of
    $\kvd$.  To see this, for any point $p$ in a
    hyperbolic arc of the bisecting curve, observe that
    $d(v_\kappa,p) \leq d(v_i,p)$ for any $i \geq
    \kappa$ and $d(u_{k-\kappa},p) \leq d(u_j,p)$ for
    any $j\geq k-\kappa$.  
    This is because $p$ lies below $\ell_1\cup\ell_2$.
    Therefore, there are $O(n)$
    Voronoi edges from the bisecting curve of $v_\kappa$ and
    $u_{k-\kappa}$.
  \end{proof}

Lemma~\ref{lem:lower-cells} implies that $\kvd$
contains $\Omega(k(m-k))$ cells, and
Lemma~\ref{lem:lower-edges} implies that $\kvd$
contains $\Omega(\min\{nk, n(m-k)\})$ edges.

  \paragraph{An upper bound.}  As Liu and Lee~\cite{LL-KVD-2013} shows, 
  each Voronoi
  vertex of $\kvd$ has degree 2 or 3, except for
  the vertices lying on the boundary of $P$. In the following, we show that $\kvd$
  has $O(k(m-k)+\min\{nk,n(m-k)\})$ Voronoi edges, which implies that
  the complexity of $\kvd$ is $O(k(m-k)+\min\{nk,n(m-k)\})$ by the Euler's formula for planar graphs.

  A Voronoi edge is a part of the bisecting curve of two
  sites.  Consider Voronoi edges that come from the
  bisecting curve of two sites $s_1$ and $s_2$ in $S$.  Such
  Voronoi edges form a curve that connects two degree-3 (or degree-1) vertices $v$ and
  $u$.  Let $V(s_1,s_2)$ is the set of vertices of $P$ in
  $\pi(s_1',v)\cup\pi(s_1',u)\cup\pi(s_2',v)\cup\pi(s_2',u)$ excluding $s_1'$ and $s_2'$,
  where $s_i'$ is the junction of $\pi(s_i,u)$ and
  $\pi(s_i,v)$ for $i=1,2$.  Recall that the number of
  Voronoi edges on the common boundary of $V_1$ and $V_2$ is
  $O(1+|V(s_1,s_2)|)$. (Refer to Figure~\ref{fig:bisect}(b).)

  For each vertex $w$ of $P$, we focus on the number
  of pairs $(s_1,s_2)$ of sites such that $w$ is
  contained in $V(s_1,s_2)$.  Observe that either $s_1$ or
  $s_2$ (but not both) is one of the $k$ nearest sites
  from $w$. Without loss of generality, we assume that $s_1$ is one of the $k$ nearest sites from $w$.
  We  claim that there is no site $s_3$ distinct from $s_1$ such that
  $w \in V(s_2,s_3)$.  
  This is because $\pi(s_2,x)$ intersects at most one Voronoi edge
  defined by $s_2$ and some other site for any point $x$ in $P$.
  We also claim that there is no site $s_3$ distinct from $s_2$ such that
  $w\in V(s_1,s_3)$. Assume to the contrary that there is such a site $s_3$. 
  We observe that $s_1$ is one of the $k$ nearest sites
  of any point in the region $R$ bounded by $\pi(s_1',v)\cup \pi(s_1',u)$ and $b(s_1,s_2)$. 
  A Voronoi edge defined by $s_1$ and $s_3$ does not intersect $R$.
  This means that $V(s_1,s_2)$ and $V(s_1,s_3)$ are interior disjoint.
  Thus $w$ is the junction of $\pi(s_1,u')$ and $\pi(s_2,v')$, where $u'$ and $v'$ are 
  the endpoints of the Voronoi edge defined by $s_1$ and $s_2$. Thus it is not contained in
  $V(s_1,s_3)$, which is a contradiction.
  By the two claims, the number of pairs $(s,s')$
  with $w \in V(s,s')$ is $\min\{k, m-k\}$ for each vertex $w$ of $P$.

  Therefore, the number of Voronoi edges is
  $O(k(m-k)+\min\{nk, n(m-k)\})$, and the total
  complexity of $\kvd$ is $O(k(m-k)+\min\{nk,
  n(m-k)\})$.

  Combining the lower bound example, we have the
  following lemma.

\begin{lemma}
  The geodesic order-$k$ Voronoi diagram of $m$
  points contained in a simple polygon with $n$ vertices
  has complexity of $\Theta(k(m-k)+\min\{nk,
  n(m-k)\})$.
\end{lemma}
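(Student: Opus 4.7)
The plan is to assemble the lemma from the lower and upper bounds established in the preceding paragraphs, with Euler's formula acting as the bridge between an edge count and the full combinatorial complexity.

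For the lower bound, I would first invoke Lemma~\ref{lem:lower-cells} on the construction in Figure~\ref{fig:lowerbound}: by ranging the starting index $i$ over the sites on $\ell_1$ and the window size $\kappa$ over $\{1,\ldots,\min\{k,m-k\}\}$, one exhibits $\Omega(k(m-k))$ distinct non-empty Voronoi cells. Independently, Lemma~\ref{lem:lower-edges} produces $\Theta(n)$ Voronoi edges from the bisecting curve of $(v_\kappa, u_{k-\kappa})$ for each of the $\min\{k,m-k\}$ values of $\kappa$, and these edge sets are pairwise disjoint because they come from distinct bisecting curves. Summing gives $\Omega(\min\{nk, n(m-k)\})$ edges, and the two contributions together yield the $\Omega(k(m-k)+\min\{nk,n(m-k)\})$ lower bound on the number of vertices/edges/faces of $\kvd$.

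For the upper bound I would take the argument already sketched in the excerpt and state it cleanly. The number of degree-$3$ (and degree-$1$) Voronoi vertices is $O(k(m-k))$ by Liu--Lee, so the number of adjacent pairs of Voronoi cells is also $O(k(m-k))$. Each such pair $(s_1,s_2)$ contributes $O(1+|V(s_1,s_2)|)$ Voronoi edges to its common boundary, as observed in the preceding discussion. To bound $\sum_{(s_1,s_2)} |V(s_1,s_2)|$, I would swap the order of summation: for every vertex $w$ of $P$, count the pairs $(s_1,s_2)$ with $w\in V(s_1,s_2)$. The two claims proved in the excerpt show that once one fixes which of $\{s_1,s_2\}$ lies among the $k$ nearest sites of $w$, the other is determined; applying this reasoning symmetrically (the roles of ``nearest $k$'' and ``farthest $m-k$'' sites can be swapped at $w$) gives at most $\min\{k,m-k\}$ such pairs per vertex of $P$. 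Hence the total edge count is $O(k(m-k)+n\cdot\min\{k,m-k\})=O(k(m-k)+\min\{nk,n(m-k)\})$. Since every interior vertex of $\kvd$ has degree at most $3$, Euler's formula for planar graphs converts this edge bound into the same bound on the number of vertices and faces, hence on the total combinatorial complexity.

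I do not expect any genuine obstacle beyond the symmetry check in the upper bound: the excerpt's pair-counting argument is written in the ``$s_1$ is among the $k$ nearest sites of $w$'' direction, and I would need to observe explicitly that the dual argument (swapping ``nearest'' and ``non-nearest'') gives the $m-k$ bound, so that the better of the two always applies and produces the $\min\{k,m-k\}$ factor. With that symmetry in hand, matching the upper bound against the lower-bound construction gives the asymptotically tight $\Theta(k(m-k)+\min\{nk,n(m-k)\})$ as claimed.
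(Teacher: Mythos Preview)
Your proposal is correct and follows the paper's approach closely. One small clarification: the paper's two claims in the upper-bound paragraph already give both directions of uniqueness (Claim~1 fixes the non-nearest site $s_2$ and shows $s_1$ is unique, yielding the $m-k$ bound; Claim~2 fixes the nearest site $s_1$ and shows $s_2$ is unique, yielding the $k$ bound), so the $\min\{k,m-k\}$ factor follows directly from reading them together rather than from a separate dual argument.
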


\subsection{Computing the Topological Structure of the Diagram}
We use the notations defined in Section~\ref{sec:NVD}.
For $\vd$, we maintain the combinatorial structure of the beach line as
$x$ moves along $\bd P$. For $\kvd$, we maintain the combinatorial
structures of $k$ curves each of which represents a level of the arrangement
of some curves.

For a site $s \in P(x)$, recall that $R_s(x)=\{p \in P(x): d(p,s)\leq
d(p,\pi(o,x))\}$ is connected.
Moreover, the boundary of $R_s(x)$ consists of one polygonal chain of $\bd P$
and a simple curve with endpoints on $\bd P$.
We call the simple curve the \emph{wave-curve} for $s$. 
A wave-curve is weakly monotone with respect to $\pi(o,x)$ by
Lemma~\ref{lem:R-monotone}. 

We say that a point $p\in P(x)$ lies \emph{above} a curve if $\pi(p,
\pi(o,x))$ intersects the curve.  
We use $L_i(x)$ to denote the region of $P(x)$ consisting of all points
lying above at least $i$ wave-curves for sites contained in $P(x)$. 
The \emph{$i$th-level} of the arrangement of wave-curves of sites contained 
in $P(x)$ is defined to be the boundary of $L_i(x)$ excluding $\bd P$. The
$i$th-level for each $i$ is weakly monotone with respect to $\pi(o,x)$ and
consists of at most $m$ simple curves with endpoints lying on $\bd P$.
 Note that the beach line
for $\vd$ coincides with the 1st-level of the arrangement of the wave-curves for all sites in $P(x)$ and $R(x)$ coincides with $L_1(x)$.

\begin{lemma}
  $\kvd[S]$ restricted to $L_k(x)$ coincides with $\kvd[S\cap P(x)]$
  restricted to $L_k(x)$.
\end{lemma}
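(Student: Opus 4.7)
The plan is to prove the two diagrams agree on $L_k(x)$ by showing that for every $p \in L_k(x)$, the unordered $k$-tuple of sites nearest to $p$ in $S$ is contained in $S \cap P(x)$ and therefore coincides with the $k$-tuple of sites nearest to $p$ in $S \cap P(x)$. This is the whole content of the lemma, so I will just work at the level of point membership in cells.

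First I would translate the geometric condition that defines $L_k(x)$ into an analytic condition on distances. By definition a point $p \in P(x)$ lies above the wave-curve for a site $s \in P(x)$ when $\pi(p, \pi(o,x))$ crosses the curve. Since the wave-curve is precisely the non-boundary portion of $\bd R_s(x)$, consisting of points equidistant from $s$ and from $\pi(o,x)$, and since $R_s(x)$ is connected and weakly monotone with respect to $\pi(o,x)$ (by the lemma stating that the boundary of $R_s(x)$ is weakly monotone and the earlier lemma that $R_s(x)$ is connected), the wave-curve separates $s$ from $\pi(o,x)$ inside $R_s(x) \cup (P(x)\setminus R_s(x))$. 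So being above the wave-curve is equivalent to membership in $R_s(x)$, i.e.\ $d(p,s) \le d(p, \pi(o,x))$. Hence $p \in L_k(x)$ if and only if there exist at least $k$ sites $s_1,\dots,s_k \in S \cap P(x)$ with $d(p,s_i) \le d(p, \pi(o,x))$ for every $i$.

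Next I would invoke the observation, already stated in the paper right after the introduction of $P(x)$ and $R(x)$, that for every point $p \in P(x)$ and every site $s' \in S$ with $s' \notin P(x)$ one has $d(p, \pi(o,x)) < d(p, s')$; this is immediate because any geodesic path from $p$ to $s'$ must cross $\pi(o,x)$. Combining this with the previous paragraph, for $p \in L_k(x)$ and any $s' \in S \setminus P(x)$ we obtain
\[
d(p, s') \;>\; d(p, \pi(o,x)) \;\ge\; d(p, s_i) \quad \text{for } i=1,\dots,k.
\]
Therefore at least $k$ sites of $S \cap P(x)$ are strictly closer to $p$ than any site of $S \setminus P(x)$, so the $k$ geodesically nearest sites to $p$ in $S$ are all contained in $S \cap P(x)$. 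By the general position condition, this $k$-tuple is unique, and it obviously coincides with the $k$ geodesically nearest sites of $p$ in $S \cap P(x)$ (restricting attention to $S \cap P(x)$ cannot change which of its elements are the $k$ closest).

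The only step that carries any subtlety is the separation argument relating ``above the wave-curve'' to $R_s(x)$; everything else is a one-line distance inequality. I would expect to spell that step out carefully using the weakly-monotone structure established earlier, but no new machinery beyond the already-proved lemmas on $R_s(x)$ and on distances across $\pi(o,x)$ is needed.
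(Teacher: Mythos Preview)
Your proposal is correct and follows essentially the same approach as the paper: show that for $p\in L_k(x)$ there are $k$ sites in $S\cap P(x)$ at geodesic distance at most $d(p,\pi(o,x))$, while every site outside $P(x)$ is strictly farther than $d(p,\pi(o,x))$, so the $k$ nearest sites of $p$ in $S$ already lie in $S\cap P(x)$. The only difference is that you spell out the equivalence between ``above the wave-curve for $s$'' and membership in $R_s(x)$, which the paper treats as immediate from the definition of $L_k(x)$.
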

\begin{proof}
  Consider a Voronoi cell of $\kvd[S]$ containing a point $p \in L_k(x)$. We
  show that this cell is associated with $k$ sites in $S\cap P(x)$.
  To see this, observe that $d(p,s) > d(p,\pi(o,x))$ for any site $s
  \notin P(x)$.  By the definition of $L_k(x)$, there are $k$ sites in
  $P(x)$ whose geodesic distance from $p$ is at most $d(p,\pi(o,x))$.
  Thus the Voronoi cell of $\kvd[S]$ containing $p$ is associated with
  $k$ sites in $S\cap P(x)$.  Therefore, the lemma holds.
\end{proof}

Therefore, we can obtain the topological structure of $\kvd[S]$ by
maintaining the topological structure of the $k$th-level of the
arrangement of wave-curves.  In addition to this, we maintain the
topological structures of the $i$th-level of the arrangement for all $i<
k$ to detect the changes to the topological structure of the
$k$th-level.  
A breakpoint of the $i$th-level is an intersection of two wave-curves 
for each $i=1,\ldots,k$, and an endpoint of the $i$th-level is an endpoint
of a maximal connected curve in the $i$th-level.
Let $B_i$ be the sequence of breakpoints and endpoints of the $i$th-level, 
which represents the combinatorial structure of the $i$th-level.

To avoid the case that $R(x)$ does not contain $P$ after the sweep, we consider
a simple polygon $P'$ obtained from $P$ by attaching a long and 
very thin triangle on $\bd P$
as we do in Section~\ref{sec:NVD}.

\paragraph{Events.}  There are four types of events: site
  events, circle events, vanishing events and merging events. 
  Every event corresponds to a
\emph{key} which is a point on $\bd P'$.  An event occurs when the
sweep point $x$ passes through its corresponding key.
The definitions of the event types are analogous to the ones in
Section~\ref{sec:NVD}.

Each site $s$ in $S$ defines a \emph{site event}.  The key of the site event
defined by $s$ is the point on $\bd P'$ 
that comes first from $o$ in clockwise order along $\bd P'$ among the points $x'$ 
with $s\in \pi(o,x')$. 
A pair $(\beta_1,\beta_2)$ of consecutive breakpoints in $B_i$ defines a
\emph{circle event} if there exists the point $c$ equidistant from three sites
defining $\beta_1$ or $\beta_2$.  
The key of this circle event is the point on $\bd P'$ that comes first from $o$ in clockwise order along $\bd P'$ 
among the points $x'$ with $d(c,\pi(o,x'))=d(c,p)$,
where $p$ is one of the sites defining $\beta_1$ or $\beta_2$.
Each breakpoint $\beta$ in $B_i$ defines a \emph{vanishing event}. 
Let $(s_1,s_2)$ be the pair defining $\beta$.
Let $p$ be the point on $\bd P'$ equidistant from $s_1$ and $s_2$
that lies outside of $L_i(x)$.
The key $k$ of the vanishing event is the point on $\bd P'$  that comes first from $o$ in clockwise order along $\bd P'$  among the points $x'$ with $d(p,\pi(o,x'))=d(p,s_1)$.
A pair $(\beta_1, \beta_2)$ of 
consecutive endpoints in $B_i$ defines a \emph{merging event} if
$\beta_1$ and $\beta_2$ are endpoints of different connected curves of the $i$th-level.
Let $s_1$ and $s_2$ be the sites defining $\beta_1$ and $\beta_2$, respectively.
Let $p$ be the (unique) point on $\bd P'$
equidistant from $s_1, s_2$ that comes after $\beta_1$ and before $\beta_2$
from $o$ along $\bd P$ in clockwise order.
The key $k$ of the merging event defined by $(\beta_1,\beta_2)$ 
is the point on $\bd P'$ that comes first from $o$ in clockwise order along $\bd P'$  among the points $x'$ 
with $d(p,\pi(o,x'))=d(p,s_1)$.

The changes of $B_i$ due to events are analogous to the ones in Section~\ref{sec:NVD}, except for the circle events.
The sequence $B_i$ may change when the sweep
point passes through the key of some circle event
defined by a pair of consecutive breakpoints in $B_{i-1}$ or $B_{i-2}$
as illustrated in Figure~\ref{fig:higher-events}.

\paragraph{An algorithm.}  Initially, $B_i$ is empty for all 
$i=1,\ldots,k$, so there is no event.  We compute
all site events in advance. For each site, we can compute its key in $O(\log n)$
time~\cite{shortest-path-tree}.  As we handle events, the sequence
$B_i$ changes accordingly and we reflect these changes to $B_i$.  At the same
time, we compute the topological structure of $\kvd$ using $B_k$.

To handle a site event defined by a site $s$, we find the position
for $s$ in $B_i$ and add $s$ to $B_i$ for each $i\leq k$.  This takes
$O(\log^2 n\log |B_i|)$ time for each $i$ by
Lemma~\ref{lem:com-break-site}.  Adding $s$ to $B_i$ makes new pairs
of consecutive breakpoints or endpoints in $B_i$, or 
new breakpoints, which define some events.  We
compute the key for each event 
in $O(\log^2 n)$ time using Lemma~\ref{lem:circle-key}.

\begin{figure}
  \begin{center}
    \includegraphics[width=0.9\textwidth]{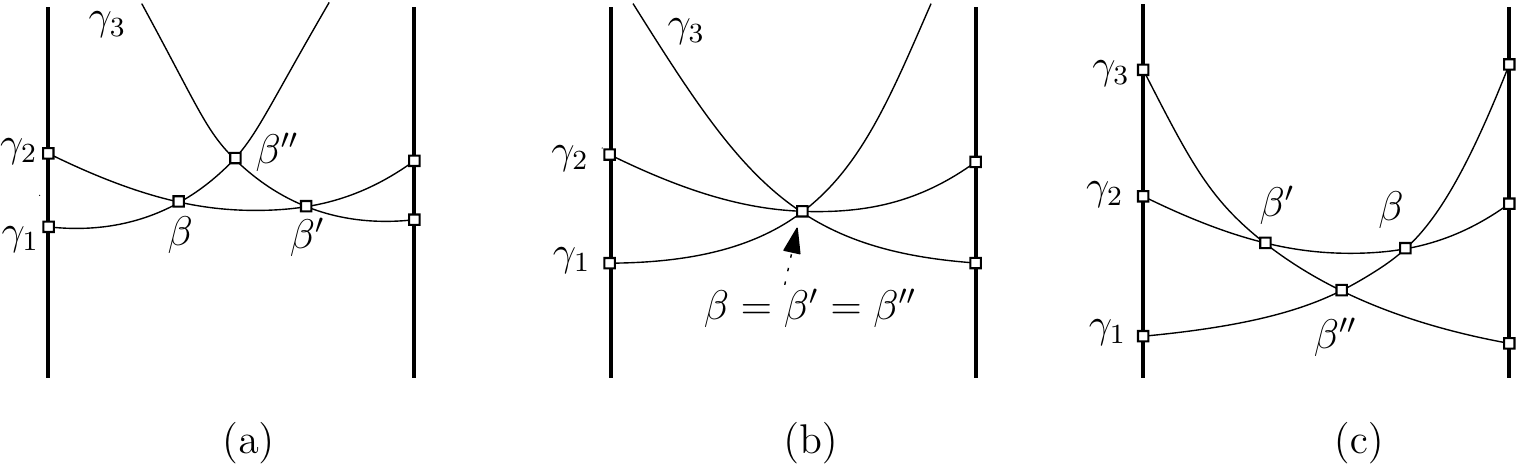}
    \caption{The curves 
      $\gamma_1$, $\gamma_2$, and $\gamma_3$ are wave-curves of $s_1,s_2$, and $s_3$, respectively. 
      The points $\beta, \beta'$, and $\beta''$ are breakpoints
      defined by $(s_1,s_2),(s_2,s_3)$, and $(s_3,s_1)$, respectively.
      \label{fig:higher-events}}
  \end{center}
\end{figure}
To handle a circle event defined by a pair $(\beta,\beta')$ of
consecutive breakpoints in $B_i$, we consider
the following observations, 
which are analogous to the ones for Euclidean order-$k$ Voronoi diagram
given by Zavershynskyi and Papadopoulou~\cite{higherVD3}.
Let $(s_1,s_2)$ and $(s_2,s_3)$ be the pairs of sites defining $\beta$
and $\beta'$, respectively. Let $c$ be the point equidistant from $s_1, s_2$
and $s_3$, and $k$ be the key of the event.  When the sweep
point $x$ passes through $k$, two breakpoints merge into a breakpoint
defined by $(s_1,s_3)$ in $B_i$. 
For illustration, see Figure~\ref{fig:higher-events}.

Additionally, $B_{i+1}$ and $B_{i+2}$ change.
The breakpoints $\beta$ and $\beta'$ are also in $B_{i+1}$ before 
$x$ reaches $k$. Moreover, a breakpoint $\beta''$ defined by $(s_1,s_2)$
lies between $\beta$ and $\beta'$ in $B_{i+1}$.
When $x$ reaches $k$, the three breakpoints merge into one (degenerate) breakpoint equidistant from $s_1,s_2$ and $s_3$.
After $x$ passes through $k$, the order of the breakpoints changes:
$\beta', \beta''$ and $\beta$.
For $B_{i+2}$, there is $\beta''$ before $x$ reaches $k$.
After $x$ passes through $k$, $\beta''$ is replaced with $\beta'$ and $\beta$.
The remaining levels
do not change as $x$ passes through $x'$.  We update $B_i$,
$B_{i+1}$ and $B_{i+2}$ accordingly in $O(\log |B_i|+\log |B_{i+1}|+\log |B_{i+2}|)$ time.

After updating $B_i$, $B_{i+1}$ and $B_{i+2}$, we have new pairs of
consecutive breakpoints or endpoints, or new breakpoints.  
We compute the events defined by them and
add the events to the event queue in $O(\log^2 n + \log |B_i|+\log |B_{i+1}|+\log |B_{i+2}|)$ time.
The other events can be handled in a way similar to the one in Section~\ref{sec:NVD} in $O(\log^2 n + \log |B_i|)$ time.

\paragraph{An analysis.}  We have $m$ site events and spend
$O(\sum_{1\leq i\leq k}\log^2 n\log |B_i|)$ time to handle each site
event, where $|B_i|$ is the maximum length of $B_i$ during the sweep.  We can handle all site events in
$O(km\log^2n \log m)$ time in total by the following lemma.
\begin{lemma}
  The length of $B_i$ is $O(im)$ at any time.
\end{lemma}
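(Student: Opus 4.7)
The plan is to decompose $|B_i|$ into endpoints and breakpoints and bound each separately, generalizing the proof that $|B_1| = O(m)$ from the previous section.

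For the endpoints: since the $i$th-level consists of at most $m$ simple curves with both endpoints on $\bd P$, there are at most $2m$ endpoints in $B_i$, contributing $O(m) \le O(im)$ to the count.

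For the breakpoints: I would view them as vertices of the $i$th-level in the arrangement of the $m$ wave-curves for sites currently in $P(x)$. The key geometric property needed is that any two wave-curves $\gamma_s$ and $\gamma_{s'}$ cross in $O(1)$ points; this should hold because their intersections lie on the bisecting curve $b(s,s')$ (a single simple curve under the general position assumption) at the additional points equidistant from $\pi(o,x)$, forming a finite discrete set. With this pseudo-segment-like property, I would invoke the classical Clarkson--Shor bound on the complexity of the $\le i$-level of an arrangement of $m$ pseudo-segments, which is $O(im)$, and note that the $i$th-level is contained in the $\le i$-level. This yields $|B_i| = O(im)$.

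An alternative route, which avoids that arrangement machinery and mimics more directly the amortized argument used for $|B_1|$, is to track how each event changes $|B_i|$ and induct on $i$. Site events increase $|B_i|$ by $O(1)$ each (contributing $O(m)$ total across $m$ sites); circle, vanishing, and merging events at level $i$ only decrease $|B_i|$; and circle events at levels $i-1$ and $i-2$ can increase $|B_i|$ by $O(1)$, but the number of such events is bounded by the total number of circle events ever occurring on those lower levels, which by the inductive hypothesis is $O(|B_{i-1}| + |B_{i-2}|) = O((i-1)m)$. Summing the contributions gives $|B_i| = O(im)$, with base case $|B_1| = O(m)$ established in Section~\ref{sec:NVD}.

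The main obstacle will be rigorously establishing the bounded-intersection property of wave-curves in the geodesic setting, since bisecting curves in a simple polygon are combinatorially more intricate than their Euclidean counterparts. For the amortized route, the technical point is verifying that a circle event at level $i-1$ or $i-2$ contributes to $|B_i|$ only by a bounded amount, which should follow from the local picture shown in Figure~\ref{fig:higher-events} where only a constant number of breakpoints in each of $B_{i-1}$, $B_i$, $B_{i+1}$ are affected by a single circle event.
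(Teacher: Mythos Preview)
Your second (amortized) route is essentially the paper's approach: track increases and decreases of $|B_i|$, noting that site events add at most two elements and that circle events from lower levels are the only other source of growth. The paper, however, does \emph{not} bound the number of circle events at level $j$ via the inductive hypothesis on $|B_{j}|$. Instead it uses the cleaner fact that each valid circle event at level $j$ corresponds to a degree-3 vertex of the order-$j$ Voronoi diagram, and there are $O(j(m-j))$ such vertices; summing over $j=i-1,i-2$ gives the $O(im)$ bound directly.

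Your inductive formulation has a small gap as stated. You write that the number of circle events at levels $i-1$ and $i-2$ is ``by the inductive hypothesis $O(|B_{i-1}|+|B_{i-2}|)$,'' but the inductive hypothesis bounds the \emph{maximum} of $|B_j|$ at any instant, not the \emph{total} number of circle events over the whole sweep; these are different quantities (a sequence can undergo many more deletions than its peak size). The fix is easy: let $C_j$ be the number of valid circle events at level $j$; since each such event decreases $|B_j|$ and $|B_j|\ge 0$, $C_j$ is at most the total increase of $|B_j|$, which is $2m+C_{j-2}$ (using the precise accounting from Figure~\ref{fig:higher-events}: a circle event at level $j$ increases $|B_{j+2}|$ by one and leaves $|B_{j+1}|$ unchanged). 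This recurrence gives $C_j=O(jm)$ and hence $|B_i|=O(im)$. So your route works once you induct on $C_j$ rather than on $\max|B_j|$; the paper sidesteps this entirely by invoking the known count of degree-3 vertices of $j\textnormal{-}\textsf{VD}$.

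Your first (Clarkson--Shor) route is not what the paper does, and the obstacle you identify is real: two geodesic wave-curves need not intersect in $O(1)$ points without further argument, so treating them as pseudo-segments requires justification that the paper does not supply and that is not obviously available.
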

\begin{proof}
  For $i=1$, the lemma holds by Lemma~\ref{lem:length}.  Consider
  $i\geq 2$.  Initially, $B_i$ is empty.  When the sweep point
  passes through the key of a site event, the length of $B_i$
  increases by at most two.  When the sweep point passes through
  the key of a circle, vanishing, or merging event caused by $B_i$, the length of
  $B_i$ decreases.  In contrast to the nearest-point Voronoi diagram, there is one more case that
  the length of $B_i$ increases in the higher-order Voronoi diagram.  A circle event due to $B_{i-1}$ or $B_{i-2}$ (or,
  $B_{i-1}$ for $i=2$) increases the length of $B_i$ by at most one.
  Note that a circle event due to $B_j$ that is valid when the sweep
  point passes through its key corresponds to a degree-3
  Voronoi vertex of $j$-$\vd$. Thus there are $O(im)$ circle events 
  due to $B_{i-1}$ or $B_i$  that are valid when the sweep
  point passes through their keys.  Therefore, the length of $B_i$ is
  $O(im)$, and the lemma holds.
\end{proof}

There are $O(k^2 m)$ circle events in total 
that are valid when the sweep point passes through their keys.
This is because each such circle event
corresponds to a degree-3 Voronoi vertex of $i$-$\vd$ for some $i$.  
There are $O(k^2 m)$ vanishing or merging events in total that are
valid when the sweep point passes through their keys.
This is because each such event corresponds to a degree-1 Voronoi vertex. 
Recall that $i$-$\vd$ has $O(i(n-i))$ degree-1 or
degree-3 vertices. 
Since we can handle each event in $O(\log m\log^2 n)$ time,
the running time for this algorithm is $O(k^2m\log m\log^2 n)$ time.

Therefore, we have the following lemma and theorem.
\begin{lemma}
  After constructing the shortest path data structure for $P'$ and the shortest path
  map for $P'$ from any fixed point, we can compute $\kvd$ in 
  $O(k^2m\log m\log^2 n)$ time using $O(km)$ space (excluding the two data
  structures).
\end{lemma}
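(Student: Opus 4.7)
The plan is to execute the polygon-sweep algorithm described in the preceding paragraphs and then carefully add up the per-event costs. The topological structure of $\kvd$ is read off from $B_k$ as the sweep point $x$ returns to $o$, so the task is to maintain the $k$ sequences $B_1,\ldots,B_k$ correctly and efficiently under the four event types.

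First I would precompute the keys of all $m$ site events, each in $O(\log n)$ time using the shortest path data structure and the shortest path map from $o$, and place them in a sorted event queue along $\bd P'$. Then the sweep processes events in order. For a site event defined by $s$, I insert at most two new breakpoints or endpoints into each of $B_1,\ldots,B_k$ using the generalization of Lemma~\ref{lem:com-break-site}, paying $O(\log^2 n\,\log|B_i|)$ per level and $O(k\log^2 n\,\log m)$ in total per site event. The constant number of new pairs of consecutive breakpoints/endpoints and new breakpoints thereby created are turned into circle, vanishing, or merging events by invoking Lemma~\ref{lem:equidistant}, Lemma~\ref{lem:equidistant-boundary} and Lemma~\ref{lem:circle-key}, each in $O(\log^2 n)$ time. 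For a circle event in $B_i$, I update $B_i$, $B_{i+1}$, and $B_{i+2}$ exactly as illustrated in Figure~\ref{fig:higher-events}; for a vanishing or merging event in $B_i$ I update only $B_i$. In every case the structural update costs $O(\log m)$ and the follow-up computation of new event keys costs $O(\log^2 n)$, so handling one non-site event takes $O(\log m\log^2 n)$ time. Events that become invalid are simply discarded when they reach the head of the queue, which is charged to the event that invalidates them.

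The main obstacle is the bookkeeping in the running-time analysis, which splits into bounding $|B_i|$ and the number of valid events. The length bound $|B_i|=O(im)$ is established by induction on $i$: each site event adds at most two elements, each circle event in $B_{i-1}$ or $B_{i-2}$ adds at most one element, and by induction there are only $O(im)$ such events valid at their keys (each corresponding to a degree-3 vertex of some $j$-$\vd$ with $j\le i$). Every valid circle event corresponds to a degree-3 vertex of some $j\text{-}\vd$ with $j\le k$, and every valid vanishing or merging event corresponds to a degree-1 vertex, so the total number of non-site events ever handled is $\sum_{j=1}^{k} O(j(m-j))=O(k^2m)$.

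Putting the accounting together: the $m$ site events contribute $O(m\cdot k\log m\log^2 n)=O(km\log m\log^2 n)$, and the $O(k^2m)$ other events contribute $O(k^2m\log m\log^2 n)$, which dominates. The topological structure of $\kvd$ is exactly the adjacency information recorded while maintaining $B_k$ together with the degree-$1$ and degree-$3$ vertices produced at vanishing/merging/circle events, so its size is $O(k(m-k))$. The working space is $O(km)$ for the event queue and the $B_i$ sequences currently supporting the sweep, excluding the shortest path data structure and shortest path map which are reused as read-only structures. This yields the claimed $O(k^2m\log m\log^2 n)$ time and $O(km)$ space bounds.
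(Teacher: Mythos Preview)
Your proposal is correct and follows essentially the same route as the paper: you maintain the $k$ level sequences $B_1,\ldots,B_k$, process the four event types exactly as in the paper (site events touching every level, circle events cascading to $B_{i+1}$ and $B_{i+2}$), and combine the length bound $|B_i|=O(im)$ with the count $\sum_{j\le k}O(j(m-j))=O(k^2m)$ of valid non-site events to obtain the stated time and space. The only cosmetic difference is that you phrase the length bound as an induction on $i$, whereas the paper argues it directly by charging increases to site events and to lower-level circle events.
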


\begin{theorem}
  Given a set of $m$ points contained in a simple polygon with $n$
  vertices, we can compute the geodesic order-$k$ Voronoi of the
  points in $O(k^2m\log m\log^2 n +\min\{nk, n(m-k)\})$ time using
  $O(n+km)$ space.
\end{theorem}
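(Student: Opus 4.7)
The plan is to assemble the theorem from two ingredients that are already in hand: the algorithm for the topological structure of $\kvd$ (the preceding lemma in this section) and the reconstruction result of Lemma~\ref{lem:topological}. No new geometric idea is needed; the statement is a composition, so the argument amounts to plugging in the complexity bound from earlier in this section into the reconstruction estimate and checking that the terms collapse to the claimed form.

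First, I would do the linear-time preprocessing on the polygon. As in Section~\ref{sec:NVD}, I extend $P$ to $P'$ by attaching the long thin triangle so that the sweep fills $P$ entirely; then in $O(n)$ time I build both the Guibas--Hershberger shortest path data structure for $P'$ and the shortest path map from the fixed boundary point $o$. These are exactly the two structures required by the previous lemma.

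Second, I would invoke that lemma to compute the adjacency graph of $\kvd$ together with the exact positions of all degree-1 and degree-3 Voronoi vertices, in $O(k^2m\log m\log^2 n)$ time and $O(km)$ working space (on top of the $O(n)$ preprocessing). Third, I would feed this topological description into Lemma~\ref{lem:topological}. By the tight combinatorial bound proved earlier in this section, the complexity of $\kvd$ is $T_1=\Theta(k(m-k)+\min\{nk,n(m-k)\})$, while the adjacency graph, containing only the degree-1 and degree-3 vertices, has complexity $T_2=O(k(m-k))$. Hence Lemma~\ref{lem:topological} reconstructs the full diagram in $O(T_1+T_2\log n)=O(\min\{nk,n(m-k)\}+k(m-k)\log n)$ time.

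Summing the three stages yields a running time of
\[
O(n)+O(k^2m\log m\log^2 n)+O(\min\{nk,n(m-k)\}+k(m-k)\log n).
\]
Since $k(m-k)\log n \le km\log n = O(k^2m\log m\log^2 n)$ and, assuming $k\ge 1$ and $m-k\ge 1$, $n=O(\min\{nk,n(m-k)\})$, the bound collapses to $O(k^2m\log m\log^2 n+\min\{nk,n(m-k)\})$ as claimed. The space is dominated by the preprocessed structures and the topological description, giving $O(n+km)$. Because every subroutine is already established, there is no real obstacle; the only items deserving care are the algebraic subsumption in the time summation and verifying that the attachment of the thin triangle to form $P'$ affects neither the asymptotic complexity bounds nor the structure of $\kvd$ restricted to $P$.
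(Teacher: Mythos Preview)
Your proposal is correct and matches the paper's approach exactly: the paper does not spell out a proof for this theorem, but the implicit argument is precisely the composition you describe—linear-time preprocessing, the preceding lemma for the topological structure, and Lemma~\ref{lem:topological} instantiated with the tight complexity bound $T_1=\Theta(k(m-k)+\min\{nk,n(m-k)\})$ and $T_2=O(k(m-k))$. Your verification that the auxiliary terms $n$ and $k(m-k)\log n$ are absorbed into the stated bound is the only bookkeeping required, and it is done correctly.
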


\section{The Geodesic Farthest-Point Voronoi Diagram}
\label{sec:FVD}
In this section, we present two algorithms for computing the
topological structure of $\fvd$: an $O(n+m\log m+m\log^2n)$-time
algorithm and an $O(m\log m\log^2 n)$-time algorithm. The latter
algorithm assumes that we have the shortest path data structure for $P$.  The
former algorithm is faster than the latter because computing the
shortest path data structure for $P$ takes $O(n)$ time.  However, the latter
algorithm has an advantage if we compute $\fvd$ several times
with different point sets.  For an application, see
Section~\ref{sec:application}.

\paragraph{The algorithm given by Aronov et al.}  Aronov et
al.~\cite{FVD} showed that $\fvd$ forms a tree whose root is
the geodesic center $c$ of the sites.  They first compute $c$ and
the geodesic convex hull of the sites.  Then they compute $\fvd$
restricted to the boundary of the polygon in $O((n+m)\log(n+m))$ time.
That is, they compute all degree-1 vertices of $\fvd$.

Then they compute the edges of $\fvd$ towards the geodesic center by a
\emph{reverse geodesic sweep method}.  They showed that $d(u,c) \leq
d(v,c)$ for any ancestor $u$ of a Voronoi vertex $v$ in $\fvd$ (a
tree).  Thus they compute the Voronoi edges of $\fvd$ from the
boundary of $P$ toward $c$ one by one.  This takes $O((n+m)\log(n+m))$
time.

\paragraph{Computing the geodesic center of the sites and the
  geodesic convex hull of the sites.}  We follow the framework of the
algorithm by Aronov et al.~\cite{FVD}, but we can achieve a faster
algorithm for $m \leq n/\log^2 n$ by applying their approach to compute the
topological structure of $\fvd$.  We first compute the geodesic center
of the sites in $S$. This takes $O(n+m\log m)$
time~\cite{1-center-jnl,shortest-path}, or $O(m\log m\log^2 n)$ time with
the shortest path data structure for $P$ by Theorem~\ref{thm:center-point}.
The geodesic convex hull of the sites can be computed in
$O(n+ m\log m)$ time~\cite{shortest-path}.

\paragraph{Computing the diagram restricted to the boundary of the
  polygon} We can compute $\fvd$ restricted to the boundary of $P$ in
$O(n+m)$ time using the algorithm by Oh et al.~\cite{fvd_boundary}
once we have the sequence of the
sites along the boundary of the geodesic convex hull.  The
algorithm uses the property that the order of the
sites along the boundary of the geodesic convex hull of them is the
same as the order for their Voronoi cells along the boundary of the polygon,
which is shown by Aronov et al~\cite{FVD}.  
They consider the sites on the boundary of the geodesic convex hull
one by one and compute the $\fvd$ restricted to the boundary of $P$
along the boundary.
Combining their approach with Lemma~\ref{lem:equidistant-boundary}, we
can compute $\fvd$ restricted to the boundary of $P$ in $O(m\log^2 n)$
time.

\paragraph{Extending the diagram inside the polygon} We apply the
reverse geodesic sweep from the boundary of $P$ towards the center $c$
of the sites.  In this problem, a \emph{sweep line} is a simple curve
consisting of points equidistant from $c$.  Let $B$ be the (circular)
sequence of the sites that have their Voronoi cells on the sweep line
in clockwise order.  As an exception, in the initial state, we set $B$
to be the sequence of the sites whose Voronoi cells appear on $\bd P$.

No site, vanishing, or merging event occurs during the sweep
because $\fvd$ forms a tree. 
So we handle the circle event
only.  Every triple of consecutive sites in $B$ defines a circle
event.  The key defined by a triple of sites is $d(c,c')$ for
the point $c'$ equidistant from the three sites.  Thus, we can
compute the key of each triple of consecutive sites in $B$ in
$O(\log^2 n)$ time.

The analysis is similar to the one in Section~\ref{sec:NVD}.
There are $O(m)$ events in total, and each event can be handled in $O(\log m+\log^2 n)$ time.
Thus extending the diagram inside the polygon takes $O(m\log m+m\log^2 n)$ time.

\begin{lemma}
  We can compute the topological structure of $\fvd$ in 
  $O(m\log m\log^2 n )$ or $O(n+m\log m+m\log^2 n)$ time once the
  shortest path data structure for $P$ is constructed.
\end{lemma}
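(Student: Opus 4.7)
The plan is to prove the running time by summing the costs of the four stages described immediately before the statement: computing the geodesic center of $S$, computing the geodesic convex hull of $S$, computing $\fvd$ restricted to $\bd P$, and extending the diagram inward via the reverse geodesic sweep. Since each stage has been analyzed individually in the preceding paragraphs, my main job is to verify that the individual bounds indeed compose to the two claimed totals, and in particular that in the reverse sweep stage the number of events and the per-event cost are what I claim.

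First I would record the cost of the preprocessing stages. Computing the geodesic center of $S$ takes $O(n+m\log m)$ time by~\cite{1-center-jnl,shortest-path}, and, if the shortest path data structure is already in hand, $O(m\log m\log^2 n)$ time by Theorem~\ref{thm:center-point}. The geodesic convex hull of $S$ takes $O(n+m\log m)$ time by~\cite{shortest-path}. Using the sequence of sites along the convex hull boundary as input, the restriction of $\fvd$ to $\bd P$ can be computed in $O(n+m)$ time by~\cite{fvd_boundary}, and in $O(m\log^2 n)$ time when the shortest path data structure is available, by combining the traversal of~\cite{fvd_boundary} with Lemma~\ref{lem:equidistant-boundary} to locate each bisector's crossings with $\bd P$.

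Next I would analyze the reverse geodesic sweep. Since $\fvd$ is a tree rooted at $c$ and $d(u,c)\le d(v,c)$ for any ancestor $u$ of a vertex $v$, sweeping a level curve of $d(\cdot,c)$ inward from $\bd P$ to $c$ visits the Voronoi vertices in nondecreasing key order. The active structure is the circular sequence $B$ of sites whose Voronoi cells meet the current sweep curve, initialized from the Voronoi cells on $\bd P$. Because no Voronoi cell is created, destroyed by leaving the sweep curve on $\bd P$, or split inside $P$ as we move inward, only circle events can occur; each degree-3 vertex of $\fvd$ corresponds to exactly one such event on a consecutive triple in $B$. Computing the equidistant point from three sites, and hence the key $d(c,c')$ of a circle event, takes $O(\log^2 n)$ time by Lemma~\ref{lem:equidistant}. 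Each event removes one site from $B$ and introduces $O(1)$ new consecutive triples, so the total number of events ever inserted is $O(m)$, updates to a sorted event queue cost $O(\log m)$ each, and the sweep runs in $O(m\log m + m\log^2 n)$ time.

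Finally I would combine the bounds. Without the shortest path data structure, the total is $O((n+m\log m) + (n+m\log m) + (n+m) + (m\log m + m\log^2 n)) = O(n + m\log m + m\log^2 n)$. With the data structure already constructed, the center and the restriction to $\bd P$ cost $O(m\log m\log^2 n)$ and $O(m\log^2 n)$ respectively, the convex hull stage can be carried out in $O(m\log m\log n)$ time using the structure for shortest-path queries, and the sweep costs $O(m\log m + m\log^2 n)$, giving a total of $O(m\log m\log^2 n)$. The only step that is not a direct citation is verifying that in the reverse sweep every change to $B$ truly corresponds to a circle event and that key computation reduces to finding the point equidistant from three sites; I expect this to be the main obstacle, but it follows from the tree structure of $\fvd$ together with the monotonicity $d(u,c)\le d(v,c)$ established by Aronov et al.~\cite{FVD}.
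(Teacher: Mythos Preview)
Your proposal is correct and follows essentially the same decomposition and analysis as the paper: geodesic center, geodesic convex hull, $\fvd$ restricted to $\bd P$, then the reverse geodesic sweep with only circle events, each costing $O(\log m+\log^2 n)$. One small caution: both totals in the lemma already assume the shortest path data structure is built, so your phrase ``without the shortest path data structure'' is misworded; also, your claim that the convex-hull ordering of the sites can be obtained in $O(m\log m\log n)$ time from that structure is not stated in the paper (which only cites the $O(n+m\log m)$ bound) and would need a short justification if you want the $O(m\log m\log^2 n)$ total to be fully self-contained.
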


\begin{theorem}
  Given a set of $m$ point sites contained in a simple polygon with $n$
  vertices, we can compute the geodesic farthest-point Voronoi diagram
  of the points in $O(n+ m\log m+m\log^2 n)$ time using $O(n+m)$
  space.
\end{theorem}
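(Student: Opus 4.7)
The plan is to combine the lemma immediately preceding the theorem (computing the topological structure of $\fvd$ in $O(n + m\log m + m\log^2 n)$ time) with Lemma~\ref{lem:topological} (recovering the full Voronoi diagram from its topological structure in $O(T_1 + T_2 \log n)$ time), so that the proof amounts to bookkeeping the running times and space.

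First I would spend $O(n)$ time to construct the shortest path data structure for $P$ of Guibas and Hershberger, which uses $O(n)$ space and is the prerequisite for all the subprocedures invoked in the previous sections. Then I would invoke the preceding lemma to compute the adjacency graph of $\fvd$ together with the exact positions of its degree-$1$ and degree-$3$ vertices in $O(n + m\log m + m\log^2 n)$ time and $O(n+m)$ space: this step reuses the center/convex-hull computation ($O(n + m\log m)$), the boundary restriction of $\fvd$ computed via Lemma~\ref{lem:equidistant-boundary} ($O(m\log^2 n)$), and the reverse geodesic sweep handling $O(m)$ circle events in $O(\log m + \log^2 n)$ time each.

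Next I would feed the resulting adjacency graph to Lemma~\ref{lem:topological}. For $\fvd$, the combinatorial complexity is $T_1 = \Theta(n+m)$~\cite{FVD} and the adjacency graph (whose vertices correspond to the $O(m)$ degree-$1$/degree-$3$ Voronoi vertices) has size $T_2 = O(m)$. Thus recovering $\fvd$ from its topological structure takes $O(T_1 + T_2 \log n) = O(n + m + m\log n)$ time, which is absorbed into $O(n + m\log m + m\log^2 n)$.

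Adding the three stages, the total running time is $O(n + m\log m + m\log^2 n)$, and the total space is $O(n+m)$: $O(n)$ for the shortest path data structure, $O(m)$ for the topological structure, and $O(n+m)$ for the output. There is no real obstacle here; all the substantive work has already been done in the preceding lemma and in Lemma~\ref{lem:topological}. The only thing to double-check is that the complexity bound $T_2 = O(m)$ for the adjacency graph of $\fvd$ indeed holds (which follows because $\fvd$ is a tree with $O(m)$ leaves and internal vertices of degree $3$), so that the passage from the topological structure to the full diagram does not exceed the claimed bound.
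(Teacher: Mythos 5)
Your proposal is correct and follows exactly the route the paper intends: build the Guibas--Hershberger structure in $O(n)$ time, invoke the preceding lemma to get the topological structure of $\fvd$ in $O(n+m\log m+m\log^2 n)$ time, and then apply Lemma~\ref{lem:topological} with $T_1=\Theta(n+m)$ and $T_2=O(m)$ (since $\fvd$ is a tree with $O(m)$ degree-$1$ and degree-$3$ vertices) to recover the full diagram in $O(n+m\log n)$ additional time. The bookkeeping of time and space is exactly as the paper does it.
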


\section{Dynamic Data Structures for Nearest or Farthest Point Queries}
We showed that the topological structure of
a Voronoi diagram can be computed without considering the whole
polygon once we have the shortest path data structure for $P$.  
The topological structure can also be used for data structures for answering
nearest or farthest point queries for a dynamic point set.

We apply the framework given by Bentley and Saxe~\cite{decomposable}.
At all times, we maintain $O(\sqrt m)$ disjoint sets each of which
consists of $O(\sqrt m)$ points.  For each set, we compute the topological
structure of the nearest-point (or the farthest-point) Voronoi diagram of the points
in the set. This takes $O(\sqrt{m}\log m\log^2n)$ time.
% Each set has the nearest-point (or
% farthest-point) Voronoi diagram of the points in the set.
For a nearest (or a farthest) point query, we simply find the 
Voronoi cell containing the
query point for each of $O(\sqrt m)$ Voronoi diagrams.
Then we have $O(\sqrt m)$ candidates for the nearest (or the farthest) point
from the query point. We find the nearest (or the farthest) point from
the query point directly among them.

While processing updates, we maintain the invariant that every
set, except only one, contains at most $2\sqrt{m}$ and at least $\sqrt{m}/2$ points at any time,
where $m$ is the number of the points in the moment.
For the insertion of a
point $p$, we choose the set of smallest size.
If the set contains more than $2\sqrt{m}$ points, we split it into two equal sized sets,
add $p$ to one of them, and
reconstruct the Voronoi diagram for each set. Otherwise, we add $p$ to the set of smallest size
and reconstruct the Voronoi diagram for this set. The invariant holds for any case.  
 For the deletion of a
point $p$, we find the set containing $p$ and remove $p$ from the set.
If there are two sets containing less than $\sqrt{m}/2$ points, we merge them into one 
and reconstruct the Voronoi diagram for the set. The invariant still holds.
By the invariant, we have $O(\sqrt{m})$ sets at any time. For each update, we reconstruct
the Voronoi diagrams of at most two sets containing at most $2\sqrt{m}$ points. Thus 
the update time is $O(\sqrt{m}\log m\log^2 m)$.

%\complain{What if the balance between \#. sets and \#. points in a set
%  is broken? By a lazy update? I think we better mention on this to some extent.
%  For instance, ``We merge two sets into one if they have $o(\sqrt{m})$ points,
%  and split a set into two equal-sized sets if it has $\omega(\sqrt{m})$ points, where
%  $m$ is the current number of points in total. Then we reconstruct the Voronoi diagram
%  for each of the merged or splitted sets.''}
%The complexity of
%each Voronoi diagram is $O(n+\sqrt m)$. \complain{Removed: ``complete''.}
%Thus each update takes $O(n+\sqrt m \polylog\{n,m\})$ time if we maintain
%the Voronoi diagrams explicitly.  This is why we
%compute the topological structures of the Voronoi diagrams.}

To answer a query, we find the Voronoi cell containing the query point
using the topological structure of a Voronoi diagram.
To do this, when computing the adjacency graph, we also construct a data  structure
that supports a point location query as follows.

\paragraph{Point location.}
The key idea is to 
approximate the Voronoi diagram
into a polygonal subdivision using its adjacency graph and the exact
positions of degree-1 and degree-3 vertices.
Consider an edge of the adjacency graph corresponding 
to two adjacent Voronoi cells $V_1$ and $V_2$.
% Let $u$ and $v$ be two endpoints of the common boundary of $V_1$ and $V_2$.
We say that a polygonal curve \emph{approximates} the common boundary of 
$V_1$ and $V_2$ if it connects the two endpoints of the common boundary, %$u$ and $v$,
consists of at most three
line segments, and is contained in the closure of $V_1\cup V_2$.

\begin{figure}
  \begin{center}
    \includegraphics[width=0.9\textwidth]{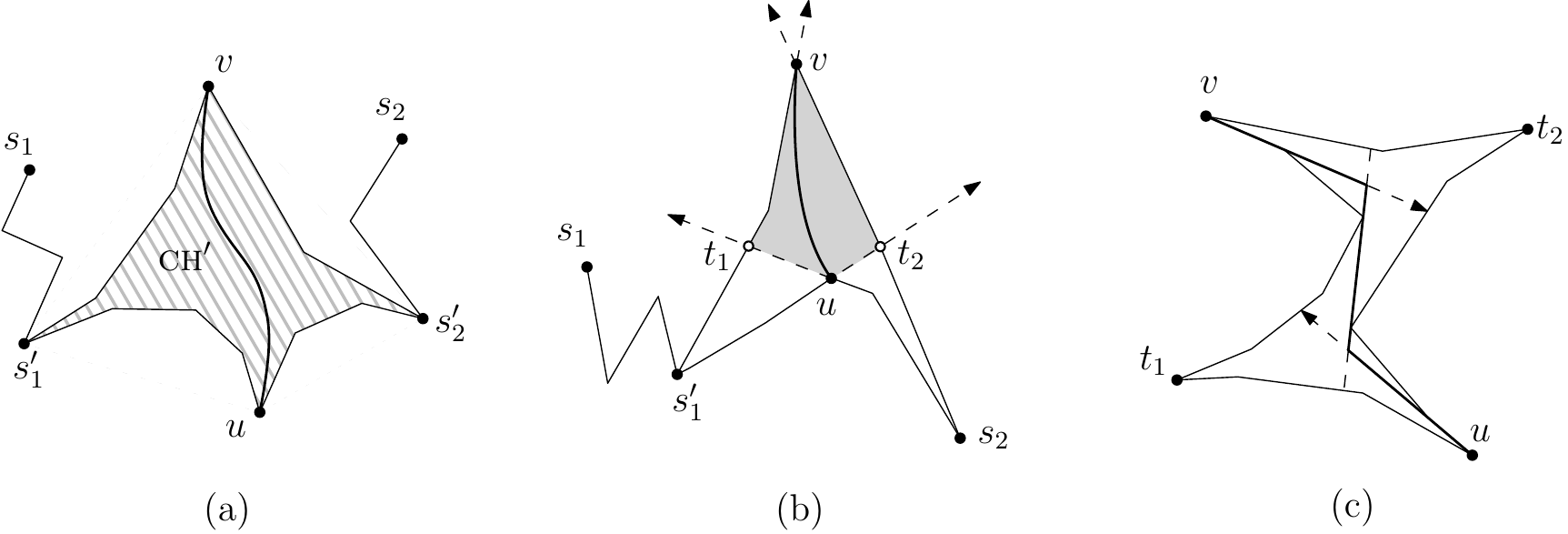}
    \caption{\label{fig:point-location}
      (a) The common boundary of the Voronoi cells with sites $s_1$ and $s_2$
        is contained in the convex hull $\ch'$ (dashed region)
      of $s_1',s_2',u$ and $v$.
      (b) For the farthest-point Voronoi diagram,
      we extend four edges to find a region
      that is contained in the closure of $V_1\cup V_2$ and is bounded by four concave
      curves.
      (c) Given a region bounded by four concave curves,
      we can find a polygonal curve approximating the common boundary.}
  \end{center}
\end{figure}

\begin{lemma}
  \label{lem:approx-edge}
  We can compute a polygonal curve approximating 
  the common boundary of two adjacent Voronoi cells in $O(\log n)$ time.
\end{lemma}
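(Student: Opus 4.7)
The plan is to localize the common boundary of $V_1$ and $V_2$ inside a thin ``quadrilateral'' bounded by four concave chains that lies in the closure of $V_1\cup V_2$, and then trace a polygonal curve of at most three segments through that quadrilateral, all in $O(\log n)$ time using the shortest path data structure for $P$.

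First, let $u$ and $v$ be the two endpoints of the common boundary of $V_1$ and $V_2$, which are supplied by the adjacency graph together with the defining sites $s_1,s_2$ not shared by $V_1$ and $V_2$. Let $s_i'$ denote the junction of $\pi(s_i,u)$ and $\pi(s_i,v)$ for $i=1,2$. Since the shortest path data structure returns each $\pi(s_i,u)$ and $\pi(s_i,v)$ as a balanced binary tree of height $O(\log n)$, we can locate $s_i'$ by binary search in $O(\log n)$ time. For $\vd$ and $\kvd$, the argument used in the proof of Lemma~\ref{lem:topological} shows that the common boundary is contained in the concave quadrilateral $Q$ bounded by the four chains $\pi(s_1',u),\pi(s_1',v),\pi(s_2',u),\pi(s_2',v)$, and that $Q\subseteq \overline{V_1\cup V_2}$. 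For $\fvd$ the bisector emerges on the far side of $uv$, so we instead extend the last edges of $\pi(s_i,u)$ and $\pi(s_i,v)$ beyond $u$ and $v$ (as in Figure~\ref{fig:point-location}(b)) until they meet $\partial P$; Lemma~\ref{lem:extension} lets us find each extension in $O(\log n)$ time, and the resulting region $Q$ is again bounded by four concave chains and sits inside $\overline{V_1\cup V_2}$.

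Second, inside $Q$ we construct a polygonal curve of at most three segments from $u$ to $v$. Since each of the four boundary chains of $Q$ is stored in a balanced binary tree of logarithmic depth, ray shooting along any direction against any chain takes $O(\log n)$ time by Lemma~\ref{lem:extension}. We first shoot a ray from $u$ in the direction of $v$ and let $u^*$ be the first point on the chain facing $u$; symmetrically we obtain $v^*$ from $v$. If the segment $u^*v^*$ lies inside $Q$, the three-segment curve $u\,u^*\,v^*\,v$ already approximates the common boundary. Otherwise the segment $u^*v^*$ is blocked by a single concave chain, and we replace $u^*$ or $v^*$ by the extreme vertex of that chain, which we again locate by one more $O(\log n)$ search on the binary tree. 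The resulting polygonal curve has at most three segments, its endpoints are $u$ and $v$, and its interior stays inside $Q\subseteq \overline{V_1\cup V_2}$, so it approximates the common boundary as required.

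The main obstacle will be proving that three segments always suffice. For $\vd$ and $\kvd$ this should follow from weak monotonicity of $b(s_1,s_2)$ with respect to the direction of $\pi(s_1',s_2')$: the bisector separates $Q$ into two pieces, each adjacent to exactly one pair of the concave chains of $Q$, so any extreme vertex obstructing $u^*v^*$ must lie on one designated side and a single bend resolves it. The farthest-point case is subtler because $Q$ is built from extensions rather than shortest paths and the bisector now winds around $u$ and $v$ from the outside; here one has to argue that the same monotonicity holds after reflecting roles, which is where the case analysis becomes most delicate. The running time is dominated by a constant number of shortest path queries, junction computations, and ray-shooting calls, each costing $O(\log n)$.
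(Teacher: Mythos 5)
Your high-level plan matches the paper: localize the common boundary inside a region bounded by four concave chains and then trace a polygonal curve of at most three segments through it. The setup ($u,v,s_1',s_2'$, and using $\ch'$ for the nearest-point case; extensions of end edges at $u$ and $v$ for the farthest-point case) is essentially what the paper does. But there are two genuine gaps in the details.

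First, your actual construction of the three-segment curve (shoot a ray from $u$ toward $v$ to get $u^*$, symmetrically $v^*$, connect $u\,u^*\,v^*\,v$, and ``replace $u^*$ or $v^*$ by an extreme vertex'' if $u^*v^*$ leaves the region) is different from the paper's and, as you acknowledge, you cannot show that it yields at most three segments. The paper avoids this by approximating $\pi(u,v)$ rather than the bisector directly: since $\pi(u,v)$ is a concave chain contained in $\ch$, one takes the two edges of $\pi(u,v)$ incident to $u$ and $v$, extends each away from its endpoint, and joins the two extensions by a single line segment tangent to the two opposite boundary chains $\pi(t_1,v)$ and $\pi(t_2,u)$. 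This produces exactly three segments inside $\ch$ by construction, with no case analysis about where a ray ``hits.'' Your ray-shooting scheme has no comparable guarantee; a ray from $u$ in direction $\overrightarrow{uv}$ need not hit the boundary chain at a point from which a single bend reaches $v$, and the ``blocked by a single concave chain'' claim is exactly what needs to be proved.

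Second, for $\fvd$ you say the region $Q$ cut off by the four extensions at $u$ and $v$ is ``bounded by four concave chains,'' but it is not: its boundary consists of those four line segments together with pieces of $\partial P$, which can be arbitrary. The paper repairs this by intersecting $Q$ with $\ch'$, obtaining a geodesic convex hull of four points $t_1,t_2,u,v$ whose boundary really does consist of at most four concave chains $\pi(t_i,u),\pi(t_i,v)$. The points $t_i$ are either $s_i'$ (if $s_i'\in Q$) or the crossing of $\pi(s_i',v)$ (say) with an extension, and each is found in $O(\log n)$ time via Lemma~\ref{lem:extension}. Without this intersection step, the subsequent three-segment construction has nothing to hold onto.
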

\begin{proof}
  Let $u$ and $v$ be the endpoints of the common boundary of
  two adjacent Voronoi cells $V_1$ and $V_2$.
  Let $s_i$ be the site associated with $V_i$ and $s_i'$
  be the junction of $\pi(s_i,u)$ and $\pi(s_i,v)$ for $i=1,2$.
  The junction of $\pi(u,s_1)$ and $\pi(u,s_2)$ is $u$ itself, and 
  the junction of $\pi(v,s_1)$ and $\pi(v,s_2)$ is $v$ itself
  under the general position condition. See figure~\ref{fig:point-location}(a) and (b).
  Therefore, the geodesic convex hull, denoted by $\ch'$, of $s_1',s_2',u$ and $v$
  consists of at most four maximal concave curves.
  
  Using this fact, we first find two points $t_1, t_2$ such that
  the geodesic convex hull $\ch$ of  $t_1,t_2, u, v$ is
  contained in the closure of $V_1\cup V_2$ and the boundary of $\ch$
  consists of at most four maximal concave polygonal curves.
  Then we compute a polygonal curve approximating the common boundary
  in $\ch$.

  We first show how to find such points $t_1$ and $t_2$.
  Consider the case that $V_1$ and $V_2$ are two adjacent Voronoi cells
  in the nearest-point Voronoi diagram.
  Observe that $\pi(s_i',u)\cup\pi(s_i',v) \subseteq V_i$ for $i=1,2$.
  Thus, $\ch'$ is contained in the closure of $V_1\cup V_2$.
  Since the boundary of $\ch'$ consists of at most four concave polygonal curves,
  $\pi(s_i',u), \pi(s_i',v)$ for $i=1,2$, the two points 
  $s_1'$ and $s_2'$ satisfy the condition for $t_1$ and $t_2$, respectively.
	
  For the case that $V_1$ and $V_2$ are two adjacent Voronoi cells in the
  farthest-point Voronoi diagram, 
  consider two line segments contained in $P$ with endpoint $u$ such that
  each of them is collinear to the edge of $\pi(s_i,u)$ incident to $u$ for $i=1,2$.
  Similarly, we consider two line segments contained in $P$ with endpoint $v$ such that
  each of them is collinear to the edge of $\pi(s_i,v)$ incident to $v$ for $i=1,2$.
  If $u$ or $v$ is on $\bd P$, the line segments are $u$ or $v$ itself.
%\complain{What if
%    an extension is degenerate, that is, no extension is defined in $P$.}
  It is known that the line segments incident to $u$ are contained in $V_1$ and 
  the other two line segments are contained in $V_2$~\cite{FVD}.
    The four line segments subdivide $P$ into subpolygons.
    Let $Q$ denote the subpolygon
    that contains the common boundary of $V_1$ and $V_2$. If $u$ and $v$ are on $\bd P$,
    the subpolygon $Q$ is the whole polygon $P$. 
    Then $Q$ is contained in the closure of $V_1\cup V_2$.
  % Therefore, \ccheck{the subpolygon, denoted by $Q$
%    } \complain{Such a subpolygon may not be defined!}
See Figure~\ref{fig:point-location}(b).

Consider the intersection $Q\cap \ch'$.
  % the geodesic convex hull of $s_1', s_2', u$ and $v$.
  For $i=1,2$,
  if $s_i'$ is contained in $Q$, both $\pi(s_i',u)$ and $\pi(s_i',v)$
  appear on the boundary of $Q\cap\ch'$, and we set $t_i=s_i'$.
  Otherwise, only one of them, say $\pi(s_i',v)$, crosses one of the line segment 
  appearing on the boundary of $Q$. We set the cross point to $t_i$. See Figure~\ref{fig:point-location}(b).
  Therefore, % (part of) $\pi(s_i',v)$ and (part of) the extension appear on
  % the boundary of the intersection.
%  Since $Q$ is bounded by four extensions (line segments),
  $Q\cap \ch'$ is the geodesic convex hull of $t_1, t_2, u$ and $v$ whose boundary
  consists of at most four maximal concave polygonal curves,
  $\pi(t_i,u)$ and $\pi(t_i,v)$ for $i=1,2$.
  Moreover, we can compute $t_i$ %, at which  $\pi(s_i',u)$ or $\pi(s_i',v)$ crosses  the extensions  for $i=1,2$,
  in $O(\log n)$ time using Lemma~\ref{lem:extension}.
  
  Now, we have two points $t_1, t_2$ such that the 
  geodesic convex hull $\ch$ of $t_1, t_2, u, v$ is contained in
    the closure of $V_1\cup V_2$ and consists of at most four maximal concave polygonal
  curves. Since $\pi(u,v)\subset\ch$, 
  we approximate $\pi(u,v)$ by
  a polygonal curve consisting of at most three segments contained in $\ch$ as follows.
  If $\pi(u,v)$ is a line segment, it is the polygonal curve. Otherwise, 
  we can compute a polygonal curve consisting of at most three line segments
  from the extensions of the two edges of $\pi(u,v)$, one incident to $u$ and
  one incident to $v$, towards
  the other endpoints of $\pi(u,v)$, and one line segment tangent to $\pi(t_1,v)$ and
  $\pi(t_2,u)$.
  % a line segment tangent to the  
  % Since $\pi(u,v)$ is contained in $\ch$, it consists of at most two maximal concave curves.
  % Moreover, $\pi(u,v)$ partitions $\ch$ into two subpolygons each of which
  % is bounded by \ccheck{three maximal concave polygonal curves.}
  % We can choose two line segments from each subpolygon so that
  % the four line segments
  % form a polygonal curve connecting $u$ and $v$ and contained in $\ch$.
  See Figure~\ref{fig:point-location}(c).
  This can be done in
  $O(\log n)$ time using the shortest path data structure for $P$.
  This proves the lemma for both the nearest-point and farthest-point 
  Voronoi diagrams.
\end{proof}

Since each polygonal curve approximating the common boundary of two 
adjacent Voronoi cells described in Lemma~\ref{lem:approx-edge} is contained in
the closure of the union of the Voronoi cells,
the following property is satisfied.
\begin{corollary}
  No two polygonal curves constructed by Lemma~\ref{lem:approx-edge}
  cross each other.
\end{corollary}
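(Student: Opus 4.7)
I would prove the corollary by case analysis based on how many Voronoi cells the two curves' cell pairs share. Let $\gamma$ and $\gamma'$ be two distinct polygonal curves produced by Lemma~\ref{lem:approx-edge}, approximating the common boundaries of pairs $(V_1, V_2)$ and $(V_3, V_4)$ of adjacent Voronoi cells. By the lemma, $\gamma \subseteq \overline{V_1 \cup V_2}$ and $\gamma' \subseteq \overline{V_3 \cup V_4}$. The two pairs cannot coincide, since otherwise $\gamma$ and $\gamma'$ would approximate the same Voronoi edge and hence not be distinct.

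If the two pairs share no cell, then the interiors of $V_1 \cup V_2$ and $V_3 \cup V_4$ are disjoint, because distinct Voronoi cells have pairwise disjoint interiors. Any point of $\gamma \cap \gamma'$ must therefore lie on the common boundary of these two closed regions, and in a neighborhood of such a point each curve is trapped in its own closed region. This prevents a transversal crossing; the curves can only touch at such points.

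The principal case is when the two pairs share exactly one cell, WLOG $V_1 = V_3$. Then $\gamma$ cannot enter $V_4$ and $\gamma'$ cannot enter $V_2$, so any intersection must lie in $\overline{V_1}$. By the construction in Lemma~\ref{lem:approx-edge}, $\gamma \cap V_1$ is contained in $\ch \cap V_1$, where $\ch$ is the geodesic convex hull of $s_1', s_2', u, v$; this intersection is bounded on the $V_1$ side by the shortest-path arcs $\pi(s_1', u)$ and $\pi(s_1', v)$ together with the portion of the $V_1$--$V_2$ common boundary between $u$ and $v$. Analogously, $\gamma' \cap V_1$ sits inside a region bounded by $\pi(\tilde{s_1}', u')$, $\pi(\tilde{s_1}', v')$, and the $V_1$--$V_4$ common boundary. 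Since the common boundary of any two adjacent Voronoi cells is connected (established earlier in the paper), the arcs $uv$ and $u'v'$ lie on disjoint arcs of $\partial V_1$. The shortest paths from the common source $s_1$ to $u, v, u', v'$ do not cross inside the simple polygon, so they partition $V_1$ into disjoint sectors, one whose base contains the arc $uv$ and another whose base contains the arc $u'v'$. Hence the two bounded regions above lie in disjoint sectors of $V_1$, and consequently $\gamma \cap \gamma' \cap \mathrm{int}(V_1) = \emptyset$. At any shared endpoint of $\gamma$ and $\gamma'$ on $\partial V_1$ (a Voronoi vertex where three cells meet), both curves terminate, so they meet only at the vertex without crossing.

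The main obstacle will be making the shared-cell case rigorous, in particular verifying that the two bounded regions inside $V_1$ occupy disjoint sectors as seen from $s_1$. This relies on the non-crossing property of shortest geodesic paths from a common source together with the connectedness of common boundaries of adjacent cells. For the farthest-point Voronoi diagram, the analogous construction of Lemma~\ref{lem:approx-edge} uses line-segment extensions at $u$ and $v$ in place of the junctions, but yields a region bounded by similar geodesic paths inside $V_1$, so the same non-crossing argument carries through with only minor adjustments.
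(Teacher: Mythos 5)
The paper does not give a proof here; it states the corollary after a single sentence saying that each approximating curve lies in the closure of the union of its two Voronoi cells, and treats non-crossing as an immediate consequence. You are right to be suspicious of that: containment alone does not obviously preclude two curves from both wandering into a shared cell $V_1$ and crossing there, so your case analysis genuinely adds content. The additional ingredient you supply is the correct one: each curve is actually confined to the geodesic convex hull $\ch$ built in Lemma~\ref{lem:approx-edge}, and $\ch\cap\overline{V_1}$ is bounded by sub-arcs of the shortest paths from $s_1$ toward $u$ and $v$ together with the arc $uv$ of $\partial V_1$. The ``sector'' step you flag as the main obstacle does in fact go through. For the nearest-point case, every nearest-point Voronoi cell is geodesically star-shaped about its site (for $p\in V_1$ and $q\in\pi(s_1,p)$, $d(q,s_1)=d(p,s_1)-d(p,q)\leq d(p,s_j)-d(p,q)\leq d(q,s_j)$), so $\pi(s_1,u),\pi(s_1,v),\pi(s_1,u'),\pi(s_1,v')$ all lie in $\overline{V_1}$ and form a non-crossing tree rooted at $s_1$; together with the disjointness of the arcs $uv$ and $u'v'$ on $\partial V_1$ (connectedness of common boundaries, proved earlier in the paper), this tree bounds interior-disjoint subregions of $V_1$ containing $\ch\cap V_1$ and $\ch'\cap V_1$ respectively, so the two curve portions cannot cross there. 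For the farthest-point case one cannot invoke star-shapedness directly, but the lemma chooses the apices $t_1,t_2$ on the boundary of the region $Q$ precisely so that $\pi(t_i,u)$ and $\pi(t_i,v)$ lie in $\overline{V_i}$, and these are still sub-arcs of the shortest-path tree from $s_1$, so the same decomposition of $V_1$ applies. In short, your proof is correct modulo the routine details you identified, and it makes explicit the structure the paper's remark leaves unstated; what the paper buys with its terse version is brevity, while your version makes it visible that both the geodesic-hull confinement and boundary-connectedness are being used.
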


The following lemma allows us to construct a data structure on a
adjacency graph that supports a point location query efficiently.
In the lemma, we assume that we have a ray-shooting data structure for $P$.
\begin{lemma}
  \label{lem:point-location}
  We can construct a data structure for the topological structure of 
  the nearest-point (or the farthest-point) 
  Voronoi diagram of $O(\sqrt{m})$ points in $O(\sqrt{m}\log (n+m))$ time
  that supports a point location query for the Voronoi
  diagram in $O(\log (n+m))$ time.
\end{lemma}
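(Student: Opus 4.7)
The plan is to construct an approximate planar subdivision of $P$ from the polygonal curves given by Lemma~\ref{lem:approx-edge} and then apply a standard trapezoidal-map point-location structure to it.

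First I would traverse the $O(\sqrt{m})$ edges of the adjacency graph and invoke Lemma~\ref{lem:approx-edge} on each, obtaining a set $\mathcal{S}$ of $O(\sqrt{m})$ pairwise non-crossing line segments inside $P$ in $O(\sqrt{m}\log n)$ total time. By the corollary following Lemma~\ref{lem:approx-edge}, each segment lies in the closure of the union of the two Voronoi cells whose common boundary it approximates, so $\mathcal{S}$ together with $\bd P$ partitions $P$ into $O(\sqrt{m})$ approximate Voronoi cells, one per true cell of the diagram.

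Next I would build a trapezoidal map of $\mathcal{S}$ using Seidel's randomized incremental algorithm in expected $O(\sqrt{m}\log\sqrt{m})$ time, producing a search structure that locates a trapezoid in expected $O(\log\sqrt{m})$ time. Since no segment of $\mathcal{S}$ passes through the interior of any trapezoid, each trapezoid meets at most one approximate Voronoi cell within $P$; by walking the adjacency graph starting from any seed trapezoid I can label every trapezoid with the corresponding site in amortized $O(1)$ time. To answer a query at $q\in P$ I would locate the trapezoid containing $q$ and return its label, giving query time $O(\log m)=O(\log(n+m))$.

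The main obstacle will be that the interior of a trapezoid $T$ can be crossed by $\bd P$, so that $T\cap P$ is disconnected and different components may lie in different approximate cells. I would handle this by exploiting the ray-shooting data structure for $P$ to compute $T\cap\bd P$ in $O(\log n)$ time per trapezoid and refine $T$ by the relevant portion of $\bd P$; because each trapezoid is anchored at endpoints of $\mathcal{S}$, which are degree-1 or degree-3 Voronoi vertices lying on or close to $\bd P$, only a constant amount of additional combinatorial structure is needed per trapezoid. This adds $O(\sqrt{m}\log n)$ preprocessing, yielding total preprocessing time $O(\sqrt{m}\log(n+m))$ and query time $O(\log(n+m))$, as claimed.
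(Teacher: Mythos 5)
Your construction has two genuine gaps, one of efficiency and one of correctness.

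\textbf{Refining trapezoids by $\bd P$ is not cheap.} You claim that "only a constant amount of additional combinatorial structure is needed per trapezoid" because the endpoints of $\mathcal{S}$ lie on or near $\bd P$. This is not justified and in fact can fail badly: a single trapezoid of the trapezoidal map of $\mathcal{S}$ can be crossed by $\Theta(n)$ edges of $\bd P$, so $T\cap\bd P$ can have linear complexity, and there is no way to compute or store it in $O(\log n)$ time per trapezoid. If you make the subdivision of $P$ by $\mathcal{S}$ explicit (which is what "refine $T$ by the relevant portion of $\bd P$" amounts to), the total size is $\Theta(\sqrt{m}+n)$ and the preprocessing cost is $\Omega(n)$, violating the required $O(\sqrt{m}\log(n+m))$ bound. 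The paper is careful to \emph{never} build $\mathcal{P}_A$ explicitly: it stores only a ray-shooting structure for the $O(\sqrt{m})$ segments in $A$, and at query time shoots a vertical ray, queries the ray-shooting structure for $A$ and (separately, using the already-available structure for $P$) the boundary $\bd P$, and then compares the two hits. This keeps everything $O(\sqrt{m}\log m)$ preprocessing and $O(\log(n+m))$ per query without ever materializing the combined subdivision.

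\textbf{The label of the approximate cell is not always the answer.} The polygonal curve $\tilde\gamma$ approximating a Voronoi edge chain $\gamma$ is only guaranteed to lie in the closure of $V_1\cup V_2$, not to coincide with $\gamma$. So there are "sliver" regions between $\gamma$ and $\tilde\gamma$: a query point $q$ can lie in the approximate cell labeled $s$ while its true nearest (or farthest) site is the neighbor $s'$. Simply returning the label of the trapezoid containing $q$ is therefore wrong. The paper's proof explicitly handles this: after locating $q$ in $\mathcal{P}_A$ and retrieving the candidate $s$, it shoots one more ray from $q$ (in the direction determined by the edge of $\pi(s,q)$ incident to $q$) to identify $\tilde\gamma$ and hence the unique other candidate $s'$, and then returns the closer (resp.\ farther) of $s$ and $s'$ by two geodesic-distance queries. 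Your proposal omits this correction step entirely, so even setting aside the efficiency issue, the query algorithm as written can return an incorrect site.
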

\begin{proof}
  We first approximate every common boundary of two adjacent Voronoi cells
  by a polygonal curve, which takes $O(\sqrt{m}\log n)$ time in total.
  Then we have $O(\sqrt{m})$ polygonal curves each of which
  consists of at most three line segments.
  Let $A$ be the set of all line segments of the curves.
  As a preprocessing, we construct a ray-shooting 
  data structure for $A$ in $O(\sqrt m\log m)$ time
  using $O(\sqrt m)$ space. 
  
  We show how to find the Voronoi cell containing a query point $q$.
  Consider the subdivision $\mathcal{P}_A$ of $P$ with respect to $A$.
  Note that the complexity of $\mathcal{P}_A$ is $O(\sqrt m + n)$.
  We first find the region in $\mathcal{P}_A$ containing $q$ as follows
  without constructing $\mathcal{P}_A$ explicitly.
  Let $r$ be the ray from $q$ going upwards.
  We find the line segment in $A$ that $r$ hits first
  using the ray-shooting data structure constructed on $A$ in $O(\log m)$ time.
  Then we find the edge of $P$ that $r$ hits first in $O(\log n)$ time.
  By comparing the vertical distances from $q$ to the edge
  of $P$ and the line segment in $A$, we can determine the
  edge of $\mathcal{P}_A$ that $r$ hits first.
  Thus we can find the region in $\mathcal{P}_A$ containing $q$ in $O(\log (n+m))$ time. (If $r$ hits an edge of $P$ before hitting any approximate polygonal curve,
  we determine which region in $\mathcal{P}_A$ contains $q$ in $O(\log (n+m))$ time
  by sorting the degree-1 vertices along $\bd P$ in advance.)
  
  Let $s$ be the site associated with the region we just found.
  Since we consider only the subdivision $\mathcal{P}_A$, $s$ might not be
  the nearest-point (or the farthest-point) from $q$.
  In this case, $s$ is contained in a region bounded by a chain $\gamma$
  of Voronoi edges defined by $s$ and $s'$,  
  and a chain $\tilde{\gamma}$ approximating $\gamma$. Note that 
  $s'$ is the nearest-point (or the farthest-point) of $q$.
  This is because a polygonal curve approximating the common boundary of two Voronoi cells
  is contained in the union of them, and no 
  two curves constructed from Lemma~\ref{lem:approx-edge} cross 
  each other.
  
  We show how to find $s'$.
  For a nearest-point query, the ray from $q$ in direction opposite
  to the edge of $\pi(s,q)$ incident to $q$
  does not intersect $\gamma$.
  Since $q$ is contained in the region bounded by $\gamma$ and $\tilde{\gamma}$,
  but the ray does not intersect $\gamma$,
  the ray intersects $\tilde{\gamma}$.
  We compute $\tilde{\gamma}$ by finding the point where 
  the ray hits first in $\mathcal{P}_A$ as we did
  before, and find the site $s'$ defining $\tilde{\gamma}$ other than $s$.
  By comparing $d(s,q)$ and $d(s',q)$, we get the answer in $O(\log (n+m))$ time.
  
  For a farthest-point query, we do in a similar way.
  In this case, the extension of the edge of $\pi(s,q)$ incident to $q$ in direction opposite
  to $q$ does not intersect $\gamma$, and thus it
  intersects $\tilde{\gamma}$.
  We find the edge of $\mathcal{P}_A$ that
  intersects the extension in $O(\log (n+m))$ time,
  and compute $s'$ in a way similar to the one for
  a nearest-point query.
  Then by comparing $d(s,q)$ and $d(s',q)$, we get the answer in $O(\log (n+m))$ time.
\end{proof}

We perform the point location query on each of the $O(\sqrt m)$ Voronoi diagrams
as described in Lemma~\ref{lem:point-location} and find
the nearest (or the farthest) point from the query point among the $O(\sqrt m)$
solutions, which takes $O(\sqrt m\log (n+m))$ time. For the update,
we need to compute  the topological structure of the Voronoi diagram for
the updated set, which takes $O(\sqrt{m}\log m\log^2n)$ time.
Then we reconstruct the data structure for point location on the topological structure.

\begin{theorem}
  We can construct a data structure of size $O(n+m)$ that supports a
  nearest-point (or a farthest-point) query under point insertions and
  deletions.  Each query takes
  $O(\sqrt m\log (n+m))$ time and each update takes $O(\sqrt m\log m\log^2
  n)$ time.
\end{theorem}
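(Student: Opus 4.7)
The plan is to instantiate the Bentley--Saxe decomposition on top of the topological-structure machinery developed above. As preprocessing, we build the shortest-path data structure for $P$ and a ray-shooting data structure for $P$ in $O(n)$ total time and space; these are used by every subsequent subroutine. We then maintain a partition of the current point set $S$ into $\Theta(\sqrt{m})$ disjoint buckets, each of size $\Theta(\sqrt{m})$ (with at most one exceptional bucket of smaller size), together with, for each bucket, the topological structure of its geodesic nearest-point (or farthest-point) Voronoi diagram and the point-location data structure from Lemma~\ref{lem:point-location}. Building one such bucket of size $O(\sqrt{m})$ costs $O(\sqrt{m}\log m\log^2 n)$ time by the algorithms of Sections~\ref{sec:NVD} and~\ref{sec:FVD} together with Lemma~\ref{lem:point-location}, and uses $O(\sqrt{m})$ space on top of the global shortest-path and ray-shooting structures; summed over all buckets the total space is $O(n+m)$.

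For a query, I would run the point-location procedure of Lemma~\ref{lem:point-location} on each of the $\Theta(\sqrt{m})$ buckets in $O(\log(n+m))$ time per bucket, retrieve the $\Theta(\sqrt{m})$ candidate sites (one per bucket), and compare their geodesic distances to the query point in $O(\log n)$ time each using the shortest-path data structure. The total query time is $O(\sqrt{m}\log(n+m))$, as claimed. Correctness is immediate from the decomposition: the true nearest (resp.\ farthest) site of $S$ is the nearest (resp.\ farthest) among the per-bucket winners.

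For updates I would enforce the following invariant: at all times except for one distinguished bucket, every bucket has size between $\sqrt{m}/2$ and $2\sqrt{m}$, where $m$ is the current $|S|$. An insertion places the new point into a bucket of currently smallest size; if this would violate the upper bound, split the bucket into two equal halves before inserting. A deletion removes the point from its bucket; if two buckets are both of size below $\sqrt{m}/2$, merge them. In either case only $O(1)$ buckets of size $O(\sqrt{m})$ need to be rebuilt, which costs $O(\sqrt{m}\log m\log^2 n)$ time by the algorithms referenced above. Because $m$ itself changes by one per update, the threshold $\sqrt{m}$ shifts slowly, so a standard amortization argument shows that occasional global rebuilds to reset the partition add only $O(\sqrt{m}\log m\log^2 n)$ amortized cost per update; one can then deamortize in the usual way if desired.

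The main obstacle, I expect, is the bookkeeping for the invariant and the fact that the threshold $\sqrt{m}$ drifts as insertions and deletions occur: one must argue that a bucket of size $\Theta(\sqrt{m})$ remains within the $[\sqrt{m}/2,2\sqrt{m}]$ band for $\Omega(\sqrt{m})$ updates, so that its $O(\sqrt{m}\log m\log^2 n)$ rebuild cost is charged to sufficiently many updates to give the claimed per-update bound. This is a standard Bentley--Saxe style amortization, and once the update bound is established, the query bound, the space bound, and correctness follow from the paragraphs above.
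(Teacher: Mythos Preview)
Your proposal is correct and follows essentially the same approach as the paper: Bentley--Saxe decomposition into $\Theta(\sqrt{m})$ buckets of size $\Theta(\sqrt{m})$, per-bucket topological Voronoi structures with the point-location data structure of Lemma~\ref{lem:point-location}, the identical $[\sqrt{m}/2,\,2\sqrt{m}]$ size invariant with split/merge on insert/delete, and querying by taking the best among the per-bucket winners. If anything, you are more careful than the paper, which states the update bound without discussing the drift of the threshold $\sqrt{m}$ or the need for amortization/deamortization; your remarks on that point are a welcome addition rather than a deviation.
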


\label{sec:application}
\bibliography{paper}{}
\end{document}